\newtheorem{lemma}{Lemma}
\newtheorem{theorem}{Theorem}
\newtheorem{theorem1}{Theorem}
\newtheorem{proposition}{Proposition}
\newtheorem{definition}{Definition}
\newtheorem{assumption}{Assumption}
\newtheorem{condition}{Condition}
\newtheorem{fact}{Fact}
\DeclareMathOperator{\sgn}{sgn}
\DeclareMathOperator{\diag}{diag}
\def\ul{\underline}
\def\bar{\overline}
\begin{document}
\title{Implicit Incentive Provision with Misspecified Learning}
\author{
Federico Echenique\footnote{Department of Economics, University of California, Berkeley, \href{mailto:fede@econ.berkeley.edu}{fede@econ.berkeley.edu}. }
\and Anqi Li\footnote{Department of Economics,  University of Waterloo, \href{mailto:angellianqi@gmail.com}{angellianqi@gmail.com}. We thank Junnan He for illuminating  discussions. }
}
\date{}

\maketitle
\thispagestyle{empty}
\begin{abstract}
We study misspecified Bayesian learning in principal-agent relationships, where an agent is assessed by an evaluator and rewarded by the market. The agent’s outcome depends on their innate ability, costly effort---whose effectiveness is governed by a productivity parameter---and noise. The market infers the agent's ability from observed outcomes and rewards them accordingly. The evaluator conducts costly assessments to reduce outcome noise, which shape the market's inferences and provide implicit incentives for effort. 

Society---including the evaluator and the market---holds dogmatic, inaccurate beliefs about ability, which distort learning about effort productivity and effort choice. This, in turn, shapes the evaluator's choice of assessment. We describe a feedback loop linking misspecified ability, biased learning about effort, and distorted assessment. We characterize outcomes that arise in stable steady states and analyze their robust comparative statics and learning foundations. Applications to education and labor market reveal how stereotypes can reinforce  across domains---sometimes disguised as narrowing or even reversals of outcome gaps---and how policy interventions targeting assessment can help.

\end{abstract}

\newpage
\setcounter{page}{1}

 \section{Introduction}\label{sec_intro}
We study a principal-agent model with misspecified learning that features an agent and an evaluator as strategic players. An observable outcome depends on the agent's innate ability, effort choice---whose effectiveness is governed by a productivity parameter---and noise. Based on the outcome, the market rewards the agent for their perceived ability and possibly their effort. The evaluator conducts costly assessments that reduce outcome noise, thereby shaping the market’s inferences and the agent's effort incentives. All players are Bayesian, but society---the evaluator and the market---holds dogmatic, possibly inaccurate beliefs about the agent's ability, while learning about effort productivity from observed outcomes. We study how such misspecified learning interacts with strategic choices of assessment and effort, and its implications for equilibrium beliefs and behavior.

We shall first describe two situations that illustrate the ideas behind our model and results. 

Two students, Alice and Bob, are identical except for gender. Their school performance depends on innate ability and study effort, whose effectiveness reflects their aptitude for studying. The school serves as an evaluator, conducting costly assessments that reduce the noise in performance. Ultimately, both students are rewarded by the market based on observed school performances. Now, despite being identical, both the school and the market believe that Alice’s ability is lower than Bob’s. This reflects the \emph{gender–brilliance stereotype}, which posits that girls lack the raw talent emphasized in certain fields such as STEM \citep{leslie2015expectations,bian2017gender}. In reality, Alice and Bob are equally capable, so Alice’s strong performance comes as a surprise. In our framework, society does not update its belief about ability; instead, it misattributes Alice’s success to a greater aptitude for studying. 

In a version of the model where the school values effort \emph{per se} (for instance, higher student performance enhances school reputation and resources), the school scrutinizes Alice more intensively to induce greater effort, which in turn feeds back into her performance and subsequent learning.

Our second situation involves two entry-level workers, Chen and David, who are identical except for race and ethnicity. Job performance reflects innate ability but can be enhanced through costly effort, whose effectiveness captures work aptitude. Performance influences the market’s perception of ability and therefore the worker’s future career prospects. The evaluator is a manager who conducts costly performance appraisals to reduce noise and make performance a more informative signal of ability. Society—including the manager and the market—believes that Chen is less capable than David. Indeed, the \emph{model minority stereotype} portrays certain minorities as lacking creativity, leadership, and social skills \citep{lu2024creativity}.\footnote{Other biases and stereotypes also contribute to the underestimation of minorities’ intrinsic abilities. The \emph{ultimate attribution error} refers to the tendency to attribute out-group successes to external factors such as effort, and their failures to internal traits such as low ability \citep{pettigrew1979ultimate}. The \emph{stereotype incongruity theory} posits that when minority performance contradicts prevailing group stereotypes (e.g., East Asians excelling in leadership or vision), it is attributed to effort rather than innate talent \citep{eagly2002role}.} As with Alice, Chen’s strong performance is misattributed to greater work aptitude. The manager—who values effort because it raises profits—responds by intensifying assessment for Chen, thereby generating divergence in effort choice, performance, and future learning.

Our model captures the situations described above and more. It formalizes the interaction between assessment, effort, and learning. Building on this channel, we analyze equilibrium distortions, their comparative statics and learning foundations. In domains such as education, labor, and politics, our results speak to the origins of stereotypes, their cross-domain spillovers, and their welfare consequences. In addition, the model identifies policy levers that can either mitigate or reinforce these effects.

Our analysis builds on the single-period version of Holmstr{\"o}m’s (1999) career-concerns model, in which an agent’s outcome depends on their random innate ability, effort choice, and noise. In our model, however, noise is reduced through a costly assessment conducted by an evaluator, who values effort as part of a broader objective. Assessment shapes the market’s inferences about ability and provides implicit incentives for effort. Second, effort contributes to the outcome through an productivity parameter, so that the mean outcome depends on mean ability and effort productivity. The agent knows these parameters, but society must learn about them. Learning is Bayesian but starts from a misspecified prior: society holds a dogmatic (degenerate) misbelief about mean ability—intended to capture persistent stereotypes—while the prior over effort productivity includes the truth in its support.

In a \emph{Berk--Nash} equilibrium \citep{esponda2016berk}, all players choose optimally given their beliefs, and beliefs are formed via Bayes' rule from the observed \emph{outcome distribution} induced by equilibrium behavior. Equilibrium beliefs must concentrate on parameters in the prior support that minimize the Kullback–Leibler divergence between perceived and true outcome distributions \citep{berk1966limiting}.  Misbeliefs about ability, therefore, distort beliefs about effort productivity. In our setting with asymmetric information and moral hazard, this further distorts beliefs about effort choices. While the agent chooses effort according to their true productivity, society---knowing only the equilibrium effort strategy---infers effort choice from its own (potentially inaccurate) beliefs. This wedge between perceived and true efforts has novel consequences for equilibrium distortions and their dynamic learning foundations.

We study environments where the agent's effort is a complement to both assessment and productivity; thus, the agent works harder when more productive or when facing stronger incentives. Given the effort strategy, we characterize the marginal effect of assessment on the mean outcome and the evaluator’s optimal assessment, both assumed to increase with society’s beliefs about effort productivity. Theorem \ref{thm_direction} analyzes the direction of equilibrium distortions.

Two scenarios arise. When society underestimates the agent's ability, it reconciles its beliefs with the observable outcome by mistakenly attributing the higher-than-expected outcome to effort. It believes that the agent's effort is more productive than it is in reality. This, in turn, leads to intensified assessment on the part of the evaluator, which attenuates the overestimation of effort productivity. Assessment and belief thus act as substitutes, yielding a unique equilibrium. By contrast, when society overestimates ability, it attributes the lower-than-expected outcome to effort productivity being lower than it actually is. Such beliefs reduce the evaluator's assessment, which triggers even more pessimistic beliefs. When the assessment and belief spaces are not overly constrained, multiple equilibria may arise, some of which exhibit large belief distortions even under small ability misspecifications (as in Panel (a) of Figure \ref{fig:combined}).

We focus on \emph{locally asymptotically stable equilibria in the Lyapunov sense}. We define a map from each initial belief to the KL-minimizing posterior induced by the optimal behavior under the initial belief. Equilibrium beliefs are the fixed points of this map. An equilibrium is stable (resp. unstable) if nearby beliefs are attracted to (resp. repelled from) it under the map.  In the single-dimensional case, equilibrium stability is straightforward to determine and follows an alternating pattern, as illustrated in Panel (a) of Figure \ref{fig:combined}. Appendix \ref{sec_learning} provides learning foundations: in a dynamic model where beliefs and behavior co-evolve over time, society’s posterior beliefs almost surely converge to stable equilibria (each reached with positive probability), but to unstable equilibria with zero probability.

\begin{figure}[h!]
    \centering

    \begin{subfigure}[t]{0.47\textwidth}
        \centering
        \begin{tikzpicture}[scale=1.1]
            
            \draw (1.6,1.6) rectangle (6,6);

            \node[above] at (2.5,3.7) {$\tilde{\psi}(\beta)$};

            \draw[thick, black] plot[domain=1.66:6-0.7, samples=1000] 
                (\x+0.7, {max(1.6, 1.2*sqrt(25 - 0.75*(8*(\x^2+8))/(\x^2)))});    
            \draw[thick, black] plot[domain=1.6:2.37, samples=1000] (\x, 1.6);

            \draw[dashed] (1.6,1.6) -- (6,6);

            \node at (1.9,1.7) {$-$};
            \node at (2.35,2.2) {$-$};
            \node at (2.7,2.85) {$+$};
            \node at (4.4,4.6) {$+$};
            \node at (5.2,5) {$-$};

            \draw[dashed] (2.5,2.5) -- (2.5,1.6);
            \draw[dashed] (4.8,4.8) -- (4.8,1.6);

            \node[below] at (2.5,1.6) {$\hat{\beta}_2$};
            \node[below] at (4.8,1.6) {$\hat{\beta}_1$};
            \node[below] at (1.6,1.6) {$\underline{\beta}$};
            \node[below] at (6,1.6)   {$\overline{\beta}$};
            \node[left]  at (1.6,6)   {$\overline{\beta}$};
        \end{tikzpicture}
        \caption{$\beta$ denotes effort productivity, $\tilde{\psi}$ the belief map, and $\sgn(\tilde{\psi}(\beta)-\beta)$ the direction of adjustment. Eqm beliefs $\hat{\beta}_1$, $\hat{\beta}_2$,  $\underline{\beta}$  follow the pattern of stable $\to$ unstable $\to$ stable.}

    \end{subfigure}
    \hfill
    
    \begin{subfigure}[t]{0.47\textwidth}
        \centering
        \begin{tikzpicture}[scale=1.1]
            
            \draw (1.6,1.6) rectangle (6,6);

            \node[above]      at (3.5,5)    {$\tilde{\psi}(\beta)$};
            \node[above, blue] at (3.8,4.45) {$\tilde{\psi}'(\beta)$};

            \draw[thick, blue] plot[domain=1.66:6-0.9, samples=1000] 
                (\x+0.9, {max(1.6, 1.2*sqrt(25 - 0.75*(8*(\x^2+8))/(\x^2)))});    
            \draw[thick, blue] plot[domain=1.6:2.58, samples=1000] (\x, 1.6);

            \draw[dashed] (1.6,1.6) -- (6,6);

            \draw[thick, black] plot[domain=1.6:6-0.7, samples=1000] 
                (\x+0.7, {max(1.6, 0.6+1.2*sqrt(25 - 0.75*(8*(\x^2+8))/(\x^2)))});

            \node[blue] at (1.9,1.7) {$-$};
            \node[blue] at (2.6,2.5) {$-$};
            \node[blue] at (3.05,3.2) {$+$};
            \node[blue] at (4.4,4.56) {$+$};
            \node[blue] at (5.1,4.95) {$-$};

            \node[below] at (2.78,1.6) {$\hat{\beta}'_2$};
            \node[below] at (4.75,1.6) {$\hat{\beta}'_1$};
            \node[below] at (2.4,1.6)  {$\hat{\beta}_2$};
            \node[below] at (5.5,1.6)  {$\hat{\beta}_1$};

            \draw[dashed] (2.4,1.6)  -- (2.4,2.4);
            \draw[dashed] (4.75,1.6) -- (4.75,4.75);
            \draw[dashed] (2.78,1.6) -- (2.78,2.78);
            \draw[dashed] (5.55,1.6) -- (5.55,5.55);

            \draw[<-, black, thick] plot[domain=4.9:4.95,  samples=100] (\x, 1.6);
            \draw[<-, black, thick] plot[domain=5.1:5.15,  samples=100] (\x, 1.6);
            \draw[<-, black, thick] plot[domain=5.3:5.35,  samples=100] (\x, 1.6);
            \draw[<-, black, thick] plot[domain=1.9:2.0,   samples=100] (\x, 1.6);
            \draw[<-, black, thick] plot[domain=2.05:2.15, samples=100] (\x, 1.6);
            \draw[<-, black, thick] plot[domain=2.2:2.3,   samples=100] (\x, 1.6);

            \node[below] at (1.6,1.6) {$\underline{\beta}$};
            \node[below] at (6,1.6)   {$\overline{\beta}$};
            \node[left]  at (1.6,6)   {$\overline{\beta}$};
        \end{tikzpicture}
        \caption{A shock that lowers assessment intensity shifts the belief map from black to blue. This triggers downward adjustment of beliefs from $\hat{\beta}_1$ to $\hat{\beta}_1'$, and from $\hat{\beta}_2$ to $\ul\beta$. }
    \end{subfigure}

    \caption{The case where society overestimates the agent's ability.}
    \label{fig:combined}
\end{figure}
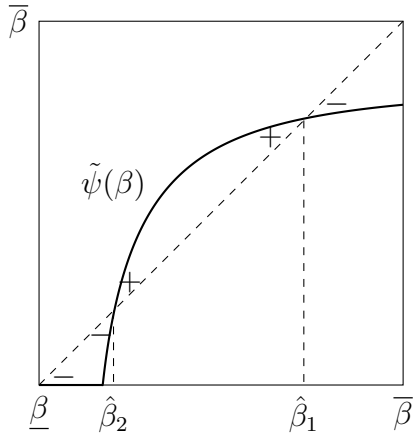
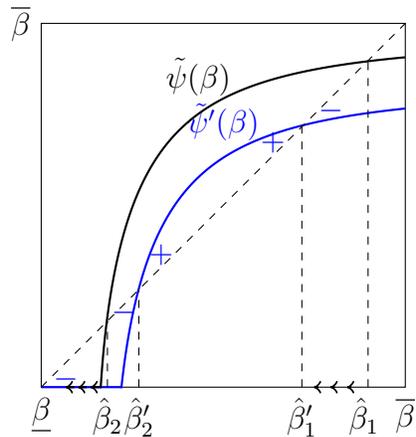

Because multiple stable equilibria may arise, Theorem \ref{thm_magnitude} analyzes their robust comparative statics, showing in particular that intensifying assessment mitigates distortions in the weak set order. Panel (b) of Figure \ref{fig:combined} illustrates an assessment-reducing shock across all productivity values.  When society overestimates ability, the outcome deficit is explained by further lowering beliefs, thereby shifting the belief map downward. Starting from an equilibrium before the shock, the updated belief lies above its original level if the equilibrium is stable (as with $\hat{\beta}_1$) and below if unstable (as with $\hat{\beta}_2$). In both cases, this gap triggers downward adjustment. If the equilibrium is stable, the adjustment settles at a nearby stable point $\hat{\beta}_1'$; if unstable, it continues until the next stable point $\underline{\beta}$ is reached. By contrast, convergence to a nearby unstable point would require an upward adjustment ($\hat{\beta}_2 \uparrow \hat{\beta}_2'$). The implication is clear:  lower assessments amplify distortions at stable equilibria, while comparative statics at unstable equilibria lack a learning foundation.

Section \ref{sec_implication} explores the implications of our results for discrimination. Our first finding concerns the cross-domain spillover of stereotypes. In equilibrium, society overestimates minorities' effort productivity and scrutinizes them more harshly, prompting them to exert greater effort than majorities. Yet the perceived effort gap between the groups is even larger, since inaccurate beliefs about effort productivity are used to infer effort choices. Thus, stereotypes about ability spill over into the domain of effort, reinforcing the \emph{ability–effort tradeoff} in psychology \citep{wang2018systematic}. Section \ref{sec_implication} reviews supporting evidence from education and labor economics, showing that stereotypes about minorities’ ability often induce them to exert exceptional effort in schools, workplaces, and academic or admissions settings \citep{rice2015perfectionism, moscatello2024inventing, dupas2021gender, hengel2022publishing}. Yet this very effort, rather than counteracting bias, may instead entrench it.

The welfare consequences of distortions depend on whether the market rewards effort. If only ability is rewarded—as in traditional career-concerns models—minorities exert greater effort yet earn less than majorities, leading to lower objective welfare. When effort is also rewarded—through curriculum alignments or general human-capital investments that are valued by the broader market  \citep{deming2022four,cullen2013can}—minorities are compensated for their higher perceived effort. This may narrow or even reverse the observed outcome gap, consistent with recent trends in STEM education and labor markets \citep{ceci2014women, bertrand2018coase, bertrand2020gender}, although such convergence may reflect an entrenchment of stereotypes rather than genuine equity, with ambiguous welfare consequences.

Our analysis reveals new policy levers to address discrimination, particularly through intensifying assessment. This can be achieved either by increasing the evaluator’s valuation of effort, or by reducing the cost of effort or assessment. The analysis thus uncovers an unexpected bias-correcting role for policies such as pay-for-performance schemes for teachers and technological innovations that lower assessment costs. It also reveals the bias-amplifying effects of certain educational policies, such as multiple test-retake opportunities or admissions practices that favor affluent students \citep{proulx2024should,chettydiversifying}. These policies implicitly raise the relative effort cost for disadvantaged students and, through the mechanism we propose, ultimately exacerbate distortions between rich and poor.

Appendix \ref{sec_learning} analyzes convergence to stable equilibria in a dynamic learning model. The main challenge arises from strategic uncertainty about the agent’s effort. During the learning process, effort productivity and, hence, effort choice remain uncertain to society, conditional on public histories. Thus even with linear–Gaussian primitives, posteriors may not retain parametric structures. To address this challenge, we approximate posteriors by truncated normal distributions using Laplace’s method, bounding the approximation errors while accounting for the feedback between beliefs and strategic behavior. We then characterize the evolution of posteriors with a stochastic difference equation and establish almost-sure convergence to stable equilibria using stochastic approximation techniques. 

Our model can be extended to analyze the consequences of affirmative action policies. In Section \ref{sec_aa}, we show that enforcing color-sighted but impartial assessment across groups imposes a universal lower bound on equilibrium assessment intensity. This mechansim effectively constrains the assessment space and prevents belief cascades. Meanwhile, allowing for heterogeneous beliefs across groups turns the belief space multidimensional and makes almost-sure convergence to equilibria more challenging to establish. Nevertheless, we demonstrate that when ability misspecifications are small, a unique equilibrium exists, is Lyapunov locally asymptotically stable, and can be reached with positive probability from any full-support prior.

\section{Model}\label{sec_model}
\subsection{Setup}\label{sec_setup}
We study a game between two strategic players: an agent (A) and an evaluator (E).  
A chooses an effort $a \geq 0$ at cost $c(a)$ to influence an outcome $X$. E conducts a costly assessment, choosing an intensity $h \in [0,1]$ at cost $\kappa(h)$. The outcome depends on A's random innate ability $\eta \sim N(\mu,1)$, as well as the effect $r(a,\beta)$ of their effort, parameterized by $\beta \in \mathbb{R}$ referred to as \emph{effort productivity}. Given $(\eta, a, \beta, h)$, 
\[
X \;=\; \underbrace{\eta}_\text{ability} + \underbrace{r(a,\beta)}_\text{effective effort} + \underbrace{\varepsilon}_\text{noise}.
\]
The term \( r(a,\beta) \) captures the contribution of effort  to the mean outcome. 
The noise term \( \varepsilon \sim N(0,\, h^{-1} - 1) \) is independent of ability, and its variance is reduced by E's assessment intensity \( h \): 
zero assessment corresponds to infinite noise, whereas unit assessment yields zero noise. 
Greater assessment intensity increases the precision of the outcome as a signal of ability.

A is rewarded by a third party---the market (M)---who is purely passive. M values A as $\eta + v_{\mathrm{M}}(a,\beta)$, where the function $v_{\mathrm{M}}$ captures M's valuation of effort and productivity.  The special case $v_{\mathrm{M}}= 0$---in which only ability is rewarded but effort is not---has been the focus of the career-concerns literature \citep{holmstrom1999managerial}. Here, we allow $v_{\mathrm{M}}$ to take general forms. 

A's payoff equals the market reward net of effort cost. 

M observes only the outcome but not its individual components (ability, effort, productivity, or noise). Based on the outcome, M forms an expectation of A's value. E's choice of assessment influences M's inferences, thereby providing implicit effort incentives to A (more on this in the next section).

E values effort and productivity at $v_{\mathrm{E}}(a,\beta)$. E may or may not value ability.

We can equivalently express the outcome as  
\begin{align*}
X \;=\; \mu + r(a,\beta) + \frac{\epsilon}{\sqrt{h}},
\quad \text{where } 
\epsilon \coloneqq \sqrt{h}\left(\eta - \mu + \varepsilon\right) \sim N(0,1).
\end{align*}
This makes it clear that $X$ is Gaussian with mean $\mu+r(a,\beta)$ and variance $h^{-1}$.  
We parameterize the mean outcome by $\omega \coloneqq (\mu,\beta)$, referred to as the \emph{fundamental}. The true fundamental is $\omega^*=(\mu^*,\beta^*)$.

Players are Bayesian. The true fundamental $\omega^*$ is known the agent.  Society (S)---comprising E and M---knows this structure but is uncertain about the fundamental value.  Its prior belief is supported on $\Omega=\{\hat{\mu}\}\times[\ul\beta,\bar\beta]$, 
with $\hat{\mu}\neq\mu^*$ and $0<\ul\beta<\beta^*<\bar\beta$. In words, S's prior about A's mean ability is \emph{misspecified}, i.e., it is degenerate and inaccurate. The prior is not misspecified when it comes to effort productivity. The support accommodates a range of productivity, including, in particular, the truth $\beta^*$.

The remainder of the game evolves as follows: 
\begin{enumerate}
    \item E publicly commits to an assessment intensity $h$. 

    \item A privately chooses an effort level $a$.
    \item Ability $\eta \sim N(\mu^*,1)$ and noise $\varepsilon \sim N(0,h^{-1}-1)$ are realized independently and determine the outcome as 
 \[X=\eta+r(a,\beta^*)+\varepsilon.\]
Only the outcome is publicly realized. 
     
    \item M forms an expectation about $\eta+v_{\mathrm{M}}(a,\beta)$ and rewards A.

\end{enumerate}

 We state the (standard) assumptions on model primitives in Appendix~\ref{sec_assumption1}. They are omitted here to streamline the presentation. Section \ref{sec_discussion} discusses the case where E and M may have different priors.

\subsection{Equilibrium} 
We analyze the \emph{Berk--Nash equilibria} of the game \citep{esponda2016berk}. Such an equilibrium captures a steady state in which players optimize given their beliefs about the fundamental and the implied beliefs about each other's behavior.  In turn, beliefs are formed via Bayesian updating from the \emph{outcome distribution} induced by equilibrium behavior. 

In this section, we first elaborate on these equilibrium components and then present the formal definition of equilibrium. In Section \ref{sec_preliminary}, we impose regularity conditions on these components and link them to primitives.  

\vspace{-10pt} \paragraph{Optimal effort.}  
Fix an assessment intensity $h$ and a degenerate belief about the fundamental,  $\delta_{\omega}$ with $\omega \in \Omega$. In equilibrium, these determine M's perception about A's effort, denoted by $\hat{a}$ (calculated below). Then $X$ is believed to be Gaussian with mean $\mu+ r(\hat{a},\beta)$ and variance $h^{-1}$. Conditional on $X$, M's expectation of A's value is
\begin{align*}
\mathbb{E}_{\delta_{(\omega,\hat{a})}}
\!\left[\eta+v_{\mathrm{M}}(a,\beta) \mid X\right]
&=\hat{\mu}+ \frac{\sigma_{\eta}^2}{\sigma_X^2}
   \bigl[X-\hat{\mu}-r(\hat{a},\beta)\bigr]
   +v_{\mathrm{M}}(\hat{a},\beta) \\
&= hX + \text{terms independent of the true effort}.
\end{align*}
Among the terms on the right-hand side, only (the mean of) $hX$ can be influenced  by A's actual choice of  effort. One can interpret $h \in [0,1]$ as the share of the outcome received by A; increasing $h$ strengthens the implicit incentives provided by E's assessment via market reward. 

Knowing the true fundamental $\omega^*$, A solves
\begin{equation}\label{eqn_aproblem}
    \max_{a \geq 0} \; hr(a,\beta^*) - c(a).
\end{equation}
We denote by $a(\cdot,\cdot)$ the optimal effort strategy as a function of assessment $h$ and productivity $\beta$. The true effort is $a(h,\beta^*)$. However, S perceives effort as  $\hat{a}=a(h,\beta)$ when it believes that productivity is $\beta$.

\vspace{-10pt}
\paragraph{Optimal assessment.} E does not observe A's effort, but he knows A's strategy  $a(\cdot,\cdot)$ and can reason by backward induction.  The resulting inference regarding effort may be inaccurate if the belief about productivity is wrong. However, it is rational given what E knows about A and the environment. 

Fix again a degenerate belief about the fundamental, $\delta_{\omega}$. Anticipating that A chooses according to $a(\cdot,\cdot)$, 
E selects an assessment intensity $h \in [0,1]$ to solve 
\begin{equation}\label{eqn_eproblem}
\max_{h \in [0,1]} \; V_{\mathrm{E}}\!\left(h,\beta\right) - \kappa(h), \quad \text{where } V_{\mathrm{E}}(h,\beta)\coloneqq v_{\mathrm{E}}\left(a\left(h,\beta\right),\beta\right). 
\end{equation}
In E's objective function, $\beta$ enters directly through the fundamental channel and indirectly through perceived effort. We denote the solution to E's problem by $h(\beta)$.

\vspace{-10pt}

\paragraph{Beliefs.} The behavior of E and A determines the true outcome distribution, observed by all parties in a steady state. Based on it, S updates beliefs about the fundamental in a Bayesian fashion. The statistical foundation for such beliefs were developed by \cite{berk1966limiting}: posterior beliefs must concentrate on those elements of the prior's support that  minimize the Kullback–Leibler (KL) divergence between the perceived and true outcome distributions. 

In our setting, 
\[R(h,\beta)\coloneqq r\left(a\left(h,\beta\right),\beta\right)\]
is \emph{optimal effective effort} of an agent with productivity $\beta$ who faces  assessment intensity $h$. Given $h$, the true outcome distribution is $N\left(\mu^* + R(h,\beta^*), h^{-1}\right)$, while the perceived outcome distribution is $N\left(\mu + R(h,\beta), h^{-1}\right)$ for each $\omega \in \Omega$.  
The KL minimization problem is
\begin{equation}\label{eqn_dkl}
\min_{\omega \in \Omega} \;
\frac{1}{2h^{-1}}
\left[
   \hat{\mu} + R(h,\beta) 
   - \bigl(\mu^* + R(h,\beta^*)\bigr)
\right]^{2},
\end{equation}
where the objective is the KL divergence. 

Under our assumption that S holds dogmatic beliefs about A's mean ability,  $\hat{\mu}$ is fixed. The problem thus reduces to choosing $\beta \in [\ul\beta,\bar\beta]$ to minimize the objective. 
We denote the solution by $\tilde{\beta}(h)$ to emphasize its dependence on $h$.

\begin{definition}
A \emph{Berk--Nash equilibrium (Berk--NE)} is a triple $\bigl(a(\cdot,\cdot), \hat{h}, \hat{\beta}\bigr)$ 
such that: 
\begin{itemize}
    \item For any $h$ and $\beta$, $a(h,\beta)$ is the effort choice. The true effort is $a(h,\beta^*)$. 
    
    \item $\hat{h} = h(\hat{\beta})$: assessment is optimal given E's belief about the fundamental 
    and the implied belief about A's effort choice.
    
    \item $\hat{\beta} = \tilde{\beta}(\hat{h})$: belief results from Bayesian updating on the outcome 
    distribution induced by $\hat{h}$ and $a(\hat{h},\beta^*)$.\footnote{Throughout, we use $\hat{\cdot}$ to indicate equilibrium objects, and $\tilde{\cdot}$ to denote the projection of an object to $[\ul\beta,\bar\beta]$, i.e.,  $\tilde{x}=x$ if $x \in [\ul \beta,\bar\beta]$, $\bar\beta$ if $x>\bar\beta$, and $\ul\beta$ if $x<\ul\beta$. We refer to $\hat{\beta}$ as an equilibrium belief about productivity, although strictly speaking the belief is the degenerate distribution $\delta_{\hat{\beta}}$. }
\end{itemize}

If KL divergence equals zero, so that $\hat{\beta}$ perfectly fits the equilibrium outcome distribution, then the equilibrium is a \emph{self-confirming equilibrium (SCE)}.\footnote{The concept of self-confirming equilibrium originated in \cite{fudenberg1993self} and \cite{battigalli1}. The emergence of such equilibria as long-run outcomes of misspecified learning with regime shifts has recently been studied by \cite{ba2021robust} and \cite{lanzani2025dynamic}. Similar consistency requirements arise in other solution concepts for games with bounded rationality, such as the analogy-based expectation equilibrium \citep{jehiel2005analogy} and the personal equilibrium of misspecified causal models \citep{spiegler2020behavioral}.}
\end{definition}

\subsection{Examples}\label{sec_example}
We phrase the two examples from the introduction as special cases of our model.
\vspace{-10pt}
\paragraph{Education.} 
The evaluator is a school, and the agent is a student. The student's school performance is $X=\eta+r(a,\beta)+\varepsilon$, where $\eta$ is innate ability,  $r(a,\beta)$ is study output parameterized by study aptitude $\beta$, and $\varepsilon$ is noise whose variance is reduced by the school's assessment.  

The market rewards ability and  a fraction \(\delta \in [0,1]\) of study output,  i.e.
$
v_{\mathrm{M}}(a,\beta) = \delta\, r(a,\beta).
$ 
A higher \(\delta\) may result from reforms that better align school curricula with market needs. The school values student performance (e.g., better performance enhances school reputation and resources) and possibly student well-being, with weights \(\lambda_{\mathrm{E}}>0\) and \(\lambda_{\mathrm{A}}\geq 0\).  Its payoff can be expressed as 
\[
v_{\mathrm{E}}(a,\beta) \;=\; \lambda_1 \, r(a,\beta) - \lambda_2 \, c(a)
\]
\[
\text{where} \quad 
\lambda_1 = \lambda_{\mathrm{E}} + \delta \lambda_{\mathrm{A}}>0, 
\qquad 
\lambda_2 = \lambda_{\mathrm{A}} \geq 0,
\]
up to terms independent of effort or assessment.\footnote{Given $X$, the market rewards the student by 
\[\mathbb{E}_{\delta_{(\omega, \hat{a})}}\left[\eta + \delta r(\hat{a},\beta)\mid X\right]=\hat{\mu}+h \left[X-\hat{\mu}-r(\hat{a},\beta)\right]+\delta r(\hat{a},\beta).\]
From society's perspective, $X$ has expected value $\hat{\mu}+r\left(\hat{a},\beta\right)$; hence the second term on the right-hand side has a zero mean. It follows that society perceives the the expected market reward as $\hat{\mu}+\delta r(\hat{a},\beta)$ and the student's expected payoff as $\hat{\mu}+\delta r(\hat{a},\beta)-c(\hat{a})$. Taking the weighted sum of these considerata yields the above expression.  }  

The assumption that the student knows $\omega^*$ reflects familiarity with their own talent and study approach. The assumption that $\hat{\mu}\neq \mu^*$ reflects the educational stereotypes outlined in the Introduction, which are stubborn and resistant to corrections. They influence how society learns about $\beta$, shaping and being shaped by the strategic choices of assessment and effort. 

\vspace{-10pt}

\paragraph{Labor markets.} The evaluator is the employer, and the agent is a worker. The outcome $X = \eta + r(a,\beta) + \varepsilon$ represents the worker's performance on the job, where $r(a,\beta)$ denotes the contribution of effort, parameterized by work aptitude~$\beta$. 

The market values the worker by $\eta + \delta r(a,\beta)$: $\delta = 0$ corresponds to situations where effort generates only employer-specific human capital; $\delta > 0$ if effort produces general human capital valued by the broader market. The employer values worker performance because it generates revenues, and he may also value worker  well-being; the weights are $\lambda_{\mathrm{E}}>0$ and $\lambda_{\mathrm{A}} \geq 0$. 

Sources of misspecifications of worker ability---such as gender and racial stereotypes, attribution errors, and stereotype incongruity---are extensively documented in the empirical literature and briely mentioned in the Introduction. 
\vspace{-10pt}

\paragraph{Linear-quadratic case. } When $r(a,\beta)=\beta a$, $c(a)=ca^2/2$, and $\kappa(h)=\kappa h^2/2$, solving players' optimal behavior yields
\[
a(h,\beta) = \frac{h\beta}{c}, \quad R(h,\beta) = \frac{h\beta^2}{c}, \quad h(\beta) = \frac{\lambda_1 \beta^2}{\lambda_2 \beta^2 + \kappa c},
\]
where $\lambda_1=\lambda_{\text{E}}+\delta \lambda_{\mathrm{A}}$ and $\lambda_2=\lambda_{\mathrm{A}}$. For comparative statics, let $\{\lambda_{\mathrm{E}},\,\delta,\,-c,\,-\kappa\}$ be the relevant parameter set, with a typical element denoted by $\zeta$.

\section{Results}\label{sec_result}

\subsection{Preliminaries}\label{sec_preliminary}
This section introduces preliminary concepts and assumptions for later analysis.

To begin, recall the  effort, effective effort, and assessment maps: 
\[
\begin{array}{rcl}
a : [0,1]\times [\ul\beta,\bar\beta] &\to& \mathbb{R}_+, \quad (h,\beta) \mapsto a(h,\beta), \\[4pt]
R : [0,1]\times [\ul\beta,\bar\beta] &\to& \mathbb{R}_+, \quad (h,\beta) \mapsto R(h,\beta), \\[4pt]
h : [\ul\beta,\bar\beta] &\to& [0,1], \quad \beta \mapsto h(\beta).
\end{array}
\]
These are assumed to be smooth functions. For analytical convenience, we extend their $\beta$-argument to $\mathbb{R}_+$ and impose the following conditions:\footnote{This extension is only used to establish Theorem \ref{thm_direction} Part (iii) Case B.}

\begin{assumption}\label{assm_regularity1}
\begin{enumerate}[(i)] 
\item $a(\cdot,\cdot)$ satisfies \(a(0,\beta) \equiv a(h,0) \equiv 0\),  is strictly increasing in \( h \) for all \( \beta > 0 \), and is strictly increasing in \( \beta \) for all \( h \in (0,1] \).

\item $R(\cdot,\cdot)$ satisfies the properties assumed of $a(\cdot,\cdot)$ in Part (i), and, additionally, strict increasing differences in $(h,\beta)$. 
\item $h(\cdot)$ satisfies $h(0)=0$ and $h(\beta) \in (0,1)$ for all $\beta>0$, and is strictly increasing in $\beta$. 
\end{enumerate}
\end{assumption}

Part (i) of Assumption \ref{assm_regularity1} states that the optimal effort is zero when the agent is either unproductive or unassessed. Otherwise it strictly increases with productivity and assessment. Part (ii) states that the marginal contribution of assessment to the optimal effective effort increases with the agent's productivity. Part (iii) states that unproductive agents should not be assessed, whereas productive agents should be assessed at a positive, interior level. As productivity rises,  assessment should increase to strengthen incentives. 

 Appendix \ref{sec_assumption1} provides sufficient conditions on primitives that ensure the validity of Assumption \ref{assm_regularity1}. It is clear that the assumption holds in the linear-quadratic example.

\subsection{Equilibria and stability}

We develop a fixed-point representation of Berk--NE, beginning by rewriting the KL minimization problem as
\begin{equation}\label{eqn_minkl}
\min_{\beta\in[\ul\beta,\bar\beta]}
\; \bigl|\Delta_{\mu}+R(h,\beta)-R(h,\beta^*)\bigr|,
\qquad
\text{where }\;\Delta_{\mu}\coloneqq\hat{\mu}-\mu^*,
\end{equation}
In the objective, $\Delta_{\mu}$ is the misspecification of ability, while the gap
$R(h,\beta)-R(h,\beta^*)=r\left(a\left(h,\beta\right),\beta\right)-r\left(a\left(h,\beta^*\right), \beta^*\right)$
captures misbeliefs about effective effort. Misbeliefs about productivity ($\beta\neq\beta^*$) enters this gap directly through the fundamental channel and indirectly through perceived effort.

Since \( R(h, \beta) \) is strictly increasing in $\beta$ for all $h>0$, \eqref{eqn_minkl} admits a unique solution, $\tilde{\beta}(h)$, for any such $h$. In particular, uniqueness holds over the range $h([\ul\beta,\bar\beta])$, as Assumption 1 
(iii)  and $\ul\beta>0$ together imply that $h(\ul\beta)>0$.\footnote{Since $R(0,\beta)\equiv 0$,  \eqref{eqn_minkl} admits the entire interval $[\ul\beta, \bar\beta]$ as its solutions when $h=0$. The assumption $\ul\beta>0$ is made solely to rule out this under-identification scenario.   \label{fn_lowerbound}} 

Based on $h(\cdot)$ and $\tilde{\beta}(\cdot)$, define a function 
\[
\tilde{\psi}: [\ul\beta,\bar\beta] \to [\ul\beta,\bar\beta], 
\quad \beta \mapsto (\tilde{\beta} \circ h)(\beta)=\tilde{\psi}(\beta).
\]
For each $\beta$, $\tilde{\psi}(\beta)$ is the KL-minimizing parameter for the outcome distribution induced by the optimal assessment at $\beta$.  The fixed points of $\tilde{\psi}$---whose existence follows from the Brouwer fixed point theorem---fully characterize Berk--NE beliefs about  productivity. If a fixed point $\hat{\beta}$ lies in $(\ul\beta,\bar\beta)$, it corresponds to an SCE. This is because $\hat{\beta}$ already solves the KL minimization problem without the constraint $\beta\in[\ul\beta,\bar\beta]$, and so it must perfectly fit the outcome distribution induced by $h(\hat{\beta})$.

In the linear-quadratic example, $\tilde{\psi}$ has a closed-form solution: 
\[\tilde{\psi}(\beta)=\sqrt{\max\left\{(\beta^*)^2-\dfrac{\Delta_{\mu} c}{h(\beta)}, (\ul\beta)^2\right\}}.\]
Panel (a) of Figure \ref{fig:combined} plots $\tilde{\psi}$ and its fixed points for a small, positive $\Delta_{\mu}$. Multiple fixed points may arise, corresponding to distinct Berk--NE

We deal with equilibrium multiplicity by using out-of-equilibrium dynamics to select stable equilibria. Analysis of such dynamics is deferred to Appendix \ref{sec_learning}. Here, we introduce the relevant notion of stability and briefly discuss its motivation.

To begin, we make the following assumption about $\tilde{\psi}$:
\begin{assumption}\label{assm_regularity2}
$\tilde{\psi}$ strictly crosses the 45 degree line at its fixed points. 
\end{assumption}
Assumption~\ref{assm_regularity2} essentially says that all roots of the function $\tilde{\psi}(\beta)-\beta$ are regular. It can be ensured to hold generically by an application of Sard's lemma (see Appendix~\ref{sec_assumption2}).

Under Assumption \ref{assm_regularity2}, we can classify Berk--NE into two categories: 

\begin{definition}\label{defn_stability}
    A Berk--NE with belief $\hat{\beta}$ is \emph{locally asymptotically stable in the Lyapunov sense} if $\tilde{\psi}$ crosses the 45-degree line strictly from above at $\hat{\beta}$. It is \emph{unstable} if crossing occurs strictly from below. 
\end{definition}

The concepts of stability and instability take on their natural meaning under a dynamic model of out-of-equilibrium learning under misspecification.  Appendix~\ref{sec_learning} analyzes such a model over an infinite horizon, in which society begins with a misspecified prior and updates beliefs via Bayes' rule based on observed outcomes. Unlike in the steady-state model presented in the main text, there assessment and effort evolve accordingly and, in turn, feed back into belief updating. These dynamics---occurring away from the steady state and featuring strong dependence between beliefs and strategic behavior---render the standard law of large numbers inapplicable.
Nevertheless, we show that society's posterior beliefs almost surely converge to the set of the locally asymptotically stable equilibrium values of $\hat{\beta}$, and to every such $\hat{\beta}$ with positive probability. In contrast, convergence to unstable equilibrium values of $\hat{\beta}$ occurs with zero probability.   We therefore adopt local asymptotic stability as the equilibrium selection criterion.

The core argument presented in Appendix \ref{sec_learning} is that, asymptotically, society's posterior beliefs become almost degenerate, and their evolution can be approximated by a deterministic ordinary differential equation (ODE), whose rate of change has the same sign as that of \(\tilde{\psi}(\beta) - \beta\). When $\tilde{\psi}(\beta)=\beta$, the ODE reaches a steady state, which corresponds to a Berk--NE belief. Starting from there, small perturbations trigger adjustment dynamics that restore beliefs to their original level if the equilibrium is locally asymptotically stable.  If the equilibrium is unstable, then small perturbations push beliefs further away. These are the adjustment dynamics that underpin our notion of stability. See Panel (a) of Figure \ref{fig:combined} for a graphical illustration.

\subsection{Main results}\label{sec_main}
We present two main results. The first result is presented as two separate statements, dealing with the cases $\Delta_{\mu}>0$ and $\Delta_{\mu}<0$. It addresses equilibrium uniqueness, stability, and the direction of distortions. The second result derives the comparative statics of the distortions associated with all locally asymptotically stable equilibria. 

\begin{theorem}[Part A]\label{thm_direction}  
Suppose Assumptions \ref{assm_regularity1} and \ref{assm_regularity2} hold. When society underestimates the agent's ability, i.e., $\Delta_{\mu}<0$: 
    \begin{enumerate}[(i)]
        \item The Berk--NE is unique and locally asymptotically stable.
        \item In this equilibrium, society's belief about effort productivity is above the truth: $\hat{\beta}>\beta^*$.
        \item   For $|\Delta_{\mu}|$ sufficiently small, the equilibrium is self-confirming, with $|\hat{\beta}-\beta^*|= O(\Delta_{\mu})$.\footnote{We use the Landau notation: $O(\cdot)$ means at most on the order of, $\Theta(\cdot)$ means exactly on the order of, and $o(\cdot)$ means slower than the order of. }
    \end{enumerate}  
\end{theorem}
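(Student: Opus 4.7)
The plan is to exploit the monotonicity structure of $R$ when $\Delta_\mu<0$ to show that the belief map $\tilde\psi$ is (weakly) decreasing in $\beta$, from which all three parts follow.

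First I would analyze $\tilde\beta(h)$ for fixed $h>0$, noting that $\ul\beta>0$ together with Assumption \ref{assm_regularity1}(iii) ensures $h(\beta)>0$ throughout $[\ul\beta,\bar\beta]$. The objective of \eqref{eqn_minkl} has a V-shape in $\beta$ because $R(h,\cdot)$ is strictly increasing by Assumption \ref{assm_regularity1}(ii); at $\beta=\beta^*$ it equals $|\Delta_\mu|>0$ and strictly decreases as $\beta$ moves above $\beta^*$ up to the root (if one exists). Hence the unconstrained root $\beta^\circ(h)$ of $R(h,\beta)-R(h,\beta^*)=-\Delta_\mu$ lies strictly above $\beta^*$ whenever it exists, and $\tilde\beta(h)=\min\{\bar\beta,\beta^\circ(h)\}>\beta^*$ in either case. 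Strict increasing differences of $R$ in $(h,\beta)$ imply that $h\mapsto R(h,\beta)-R(h,\beta^*)$ is strictly increasing in $h$ for any $\beta>\beta^*$, so $\beta^\circ(h)$ must strictly decrease in $h$ to keep the defining equation satisfied. Monotone clipping preserves this, and composing with the strictly increasing $h(\cdot)$ yields a weakly decreasing $\tilde\psi$ satisfying $\tilde\psi(\beta)>\beta^*$ throughout $[\ul\beta,\bar\beta]$.

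Part (i) then follows immediately. A continuous weakly decreasing function meeting the strictly increasing identity on a compact interval crosses it at most once; existence is by Brouwer (or by noting $\tilde\psi(\ul\beta)>\ul\beta$ and $\tilde\psi(\bar\beta)\leq\bar\beta$). At the unique fixed point $\hat\beta$, $\tilde\psi$ lies above the identity to the left and below to the right, so the crossing is from above in the sense of Definition \ref{defn_stability}, giving local asymptotic stability. Part (ii) is then immediate: $\hat\beta=\tilde\psi(\hat\beta)>\beta^*$.

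For part (iii), I would apply the implicit function theorem to
\[F(\beta,\Delta_\mu)\coloneqq \Delta_\mu+R(h(\beta),\beta)-R(h(\beta),\beta^*)\]
at $(\beta^*,0)$. Observe that $F(\beta^*,0)=0$ and $\partial F/\partial\beta(\beta^*,0)=R_\beta(h(\beta^*),\beta^*)>0$, since the $h'(\beta^*)$ contributions from the two $R$ terms cancel at $\beta=\beta^*$. The IFT delivers a smooth branch $\hat\beta(\Delta_\mu)$ with $\hat\beta(0)=\beta^*$ and $\hat\beta'(0)=-1/R_\beta(h(\beta^*),\beta^*)$, so $|\hat\beta-\beta^*|=\Theta(|\Delta_\mu|)$. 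For $|\Delta_\mu|$ small this branch stays in the interior $(\ul\beta,\bar\beta)$, where the KL-minimization is unconstrained; strict monotonicity of $R(h,\cdot)$ then forces the argmin to satisfy $F(\hat\beta,\Delta_\mu)=0$ exactly, so KL divergence vanishes and the equilibrium is self-confirming.

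The main obstacle I anticipate is the boundary case in which $\beta^\circ(h)>\bar\beta$ over part of the range, making $\tilde\beta$ locally constant at $\bar\beta$ and producing a non-strict crossing of the identity at $\bar\beta$. Assumption \ref{assm_regularity2} generically excludes this, but the stability argument still goes through, since $\tilde\psi(\beta)>\beta$ on $[\ul\beta,\bar\beta)$ attracts nearby beliefs upward to the boundary. A minor secondary point is that the IFT step requires $R_\beta(h(\beta^*),\beta^*)>0$, slightly strengthening ``strictly increasing'' in Assumption \ref{assm_regularity1}(ii) to a nonvanishing derivative; this is the natural regularity condition and holds in the linear-quadratic benchmark.
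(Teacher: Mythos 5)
Your proposal is correct and follows essentially the same route as the paper: derive monotonicity and the sign of $\tilde\psi-\beta^*$ from strict increasing differences of $R$ and the monotonicity of $h(\cdot)$, then read off uniqueness, stability, and the direction of the distortion from the decreasing belief map crossing the $45^\circ$ line from above. Your explicit implicit-function-theorem treatment of part (iii) — in particular the observation that the $h'(\beta^*)$ terms cancel so $\partial F/\partial\beta(\beta^*,0)=R_\beta(h(\beta^*),\beta^*)>0$ — simply fills in the $O(|\Delta_\mu|)$ bound that the paper states without elaboration, and the nonvanishing-derivative caveat you flag is indeed guaranteed by the paper's sufficient conditions on primitives (Proposition \ref{obs1}).
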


\begin{theorem1}[Part B]
When society overestimates the agent's ability, i.e., $\Delta_{\mu}>0$: 
\begin{enumerate}[(i)]
    \item There is a finite and odd number of Berk--NE. The one with the highest belief of effort productivity is locally asymptotically stable, the second highest belief is unstable, and so on.
    \item In all equilibria, beliefs of effort productivity are below the truth: $\hat{\beta}<\beta^*$. 
    \item All equilibria are self-confirming, except possibly the one with the lowest belief of effort productivity. For any $\Delta_{\mu}>0$, there exists 
    $\underline{\underline{\beta}}>0$ such that whenever $\ul\beta \in (0, \ul{\ul{\beta}})$, the lowest equilibrium belief is $\ul\beta$, and the corresponding equilibrium is \emph{not} self-confirming. 
\end{enumerate}
\end{theorem1}

To state the second result, we parameterize the environment by $\zeta \in \mathbb{R}$ and write the optimal assessment as $h(\beta,\zeta)$. We assume that $h(\beta,\zeta)$ strictly increases with $\zeta$ for all $\beta>0$, so that shocks to the environment---as captured by $\zeta$---strictly increase the optimal assessment for all positive productivity values. In the linear–quadratic example, this holds for any $\zeta \in \{\lambda_{\mathrm{E}},\,\delta,\,-c,\,-\kappa\}$.

\begin{theorem}\label{thm_magnitude}  
Under Assumptions \ref{assm_regularity1} and \ref{assm_regularity2}, the distortions in the belief of effort productivity, \( |\hat{\beta} - \beta^*| \), in locally asymptotically stable equilibria increase in the weak set order under either of the following changes:
\begin{enumerate}[(i)]  
    \item A slight increase in the degree of ability misspecification \( |\Delta_{\mu}| \).  
    \item A slight shock to the environment, decrease in \( \zeta \), which lowers the optimal assessment \( h(\beta; \zeta) \) for all \( \beta>0 \).  
\end{enumerate}  
Furthermore, the above changes result in distinct sets of SCEs. 
\end{theorem}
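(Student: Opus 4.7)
The plan is to treat each locally asymptotically stable equilibrium as an implicit function of the shifted parameter, pinned down by the fixed-point equation $\tilde\psi(\beta)=\beta$, and combine the implicit function theorem with Assumption~\ref{assm_regularity2} to extract the sign of each comparative static. First I would establish that the belief map itself shifts in the correct direction pointwise; then I would propagate that shift to the stable fixed points and argue that the SCE sets are strictly disjoint.

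For the pointwise shift, I would implicitly differentiate the interior KL-optimality condition $R(h,\tilde\beta(h)) - R(h,\beta^*) = -\Delta_\mu$ to obtain
\[
\frac{\partial \tilde\beta}{\partial \Delta_\mu} \;=\; -\frac{1}{R_\beta(h,\tilde\beta(h))} \;<\; 0,
\qquad
\tilde\beta'(h) \;=\; \frac{R_h(h,\beta^*) - R_h(h,\tilde\beta(h))}{R_\beta(h,\tilde\beta(h))}.
\]
Strict increasing differences of $R$ in $(h,\beta)$ (Assumption~\ref{assm_regularity1}(ii)) imply that the numerator has the same sign as $\beta^*-\tilde\beta(h)$. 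Composing with $h(\beta;\zeta)$, invoking $h_\zeta>0$, and recalling from Theorem~\ref{thm_direction} that $\tilde\beta(h)<\beta^*$ when $\Delta_\mu>0$ and $\tilde\beta(h)>\beta^*$ when $\Delta_\mu<0$, I conclude that in either case a slight increase in $|\Delta_\mu|$ or a slight decrease in $\zeta$ shifts $\tilde\psi(\beta)$ at every interior $\beta$ in the direction that strictly enlarges $|\tilde\psi(\beta)-\beta^*|$.

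At each locally asymptotically stable equilibrium $\hat\beta$ lying in the interior of $[\ul\beta,\bar\beta]$, Definition~\ref{defn_stability} combined with Assumption~\ref{assm_regularity2} gives $\tilde\psi'(\hat\beta)<1$. The implicit function theorem applied to $G(\beta;\theta):=\tilde\psi(\beta;\theta)-\beta$ at $(\hat\beta,\theta_0)$ then delivers a smooth local branch $\hat\beta(\theta)$ with
\[
\frac{d\hat\beta}{d\theta} \;=\; \frac{\partial\tilde\psi/\partial\theta}{1-\tilde\psi'(\hat\beta)}.
\]
The denominator is strictly positive, so $d\hat\beta/d\theta$ inherits the sign of the pointwise shift derived above; hence each interior stable $\hat\beta$ strictly moves away from $\beta^*$, strictly increasing $|\hat\beta-\beta^*|$. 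A boundary stable equilibrium at $\ul\beta$, which can arise only for $\Delta_\mu>0$ (Theorem~\ref{thm_direction} Part B(iii)), persists under small shifts with unchanged distortion. Assembling these branch-wise movements yields the claimed weak set order comparison for $\{|\hat\beta-\beta^*|\}$; in fact, the comparison is in the strong set order.

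For the distinct-SCE claim, SCEs coincide with interior fixed points of $\tilde\psi$; since $d\hat\beta/d\theta\neq 0$ at every interior stable equilibrium, each SCE strictly shifts, so the sets of SCEs before and after the parameter change are disjoint. The main obstacle is verifying that the local implicit-function analysis truly exhausts the stable equilibria under a sufficiently small shift---that is, that no new stable equilibrium is created and none disappears. This requires combining continuity of $\tilde\psi$ in the parameters, the strict regularity of crossings (Assumption~\ref{assm_regularity2}), and the alternating stable/unstable pattern from Theorem~\ref{thm_direction} Part B(i), which jointly ensure that small translations of the belief map preserve the topology of its crossings with the 45-degree line.
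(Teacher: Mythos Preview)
Your approach is essentially the paper's: both first establish that the belief map $\tilde\psi$ shifts pointwise in the direction that enlarges $|\tilde\psi(\beta)-\beta^*|$, and then propagate that shift to the stable fixed points. The paper does the pointwise step via the direct monotonicity argument on equation~\eqref{eqn_perfectperception} (an increase in $\Delta_\mu$ or a decrease in $h(\beta;\zeta)$ forces the root $x$ lower when $\Delta_\mu>0$, by strict increasing differences of $R$), and then simply invokes the correspondence principle \citep{echen02} for the effect on stable fixed points. Your implicit-differentiation formulas and the explicit computation $d\hat\beta/d\theta=(\partial\tilde\psi/\partial\theta)/(1-\tilde\psi'(\hat\beta))$ are exactly what that principle encapsulates, so the two routes coincide.

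One small slip: your distinct-SCE argument appeals to $d\hat\beta/d\theta\neq 0$ only at \emph{stable} interior equilibria, but SCEs include the unstable interior fixed points as well. The same implicit-function computation applies there with $1-\tilde\psi'(\hat\beta)<0$, so every interior fixed point strictly moves and the conclusion goes through; you should say so explicitly.
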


\subsection{Proofs}\label{sec_analysis}
\paragraph{Proof of Theorem \ref{thm_direction}.} We begin by unpacking the definition of $\tilde{\psi}$. For each $\beta \in [\ul\beta,\bar\beta]$, the equation in $x$
\begin{equation}\label{eqn_perfectperception}
R(h(\beta), x)-R(h(\beta),\beta^*)=-\Delta_{\mu}
\end{equation}
admits at most one root in $\mathbb{R}_+$, because $R(h,\beta)$ is increasing in $\beta$ for all $h>0$. 
In this section,  we assume that a root exists and denote it by $\psi(\beta)$. In Appendix \ref{sec_op}, we complete the proof for the case in which a root may not exist. 

In words, $\psi(\beta)$ is the unconstrained best fit of the outcome distribution induced by the optimal assessment for $\beta$. Projecting  $\psi(\beta)$ onto $[\ul\beta,\bar\beta]$ yields the constrained best fit,  $\tilde{\psi}(\beta)$. A Berk--NE is self-confirming if the equilibrium belief $\hat{\beta}$ satisfies $\psi(\hat{\beta})=\hat{\beta}$.

The remainder of the proof proceeds by distinguishing two cases: $\Delta_{\mu}<0$ and $\Delta_{\mu}>0$. 

 \vspace{-10pt}

\subparagraph{Case A: $\Delta_{\mu}<0$.} In this case, the right-hand side of \eqref{eqn_perfectperception} is positive. Therefore, $\psi(\beta)>\beta^*$ always holds. As we raise $\beta$, $h(\beta)$ increases. Since $R(\cdot,\cdot)$ has strictly increasing differences, the left-hand side of \eqref{eqn_perfectperception} increases for all fixed $x>\beta^*$. To solve \eqref{eqn_perfectperception}, $\psi(\beta)$ must decrease with $\beta$. After truncation, $\tilde{\psi}(\beta)$ remains decreasing, and the decrease is strict whenever $\tilde{\psi}(\beta)<\bar\beta$.

As depicted in Panel (a) of Figure 
\ref{fig:twopanels}, the equilibrium is unique and locally asymptotically stable. For  $|\Delta_{\mu}|\approx 0$,  we have $|\hat{\beta}-\beta^*| = O(|\Delta_{\mu}|)$, so the equilibrium is self-confirming. 

\begin{figure}[htbp]
\centering

\begin{minipage}[t]{0.48\textwidth}
\centering
\begin{tikzpicture}[scale=1.1]
    
    \draw (1.6,1.6) rectangle (6,6);

    \node[above] at (4,5) {$\tilde{\psi}(\beta)$};

    \draw[thick, black] plot[domain=1.6:6, samples=1000] 
        (\x, {min(6, sqrt(25 + 0.6* (8*(\x^2 + 8))/(\x^2)))});

    \draw[dashed] (1.6,1.6) -- (6,6);

    \draw[dashed] (5,1.6) -- (5,6);
    \draw[dashed] (1.6,5) -- (6,5);

    \node[circle, fill=black, inner sep=1.5pt] at (5.572,5.572) {};

    \node[below] at (1.6,1.6) {$\underline{\beta}$};
    \node[below] at (5,1.6) {$\beta^*$};
    \node[below] at (6,1.6) {$\overline{\beta}$};

    \node[left] at (1.6,5) {$\beta^*$};
    \node[left] at (1.6,6) {$\overline{\beta}$};
\end{tikzpicture}
\caption*{Case A: $\Delta_{\mu}<0$}
\end{minipage}
\hfill
\begin{minipage}[t]{0.48\textwidth}
\centering
\begin{tikzpicture}[scale=1.1]
    
    \draw (1.6,1.6) rectangle (6,6);

    \node[above] at (3,3.7) {$\tilde{\psi}(\beta)$};

    \draw[thick, black] plot[domain=1.6:5.8, samples=1000] 
        (\x+0.2, {max(1.6, sqrt(25 - 0.75* (8*(\x^2 + 8))/(\x^2)))});    
    \draw[thick, black] plot[domain=1.6:1.907, samples=1000] 
        (\x, 1.6);

    \draw[dashed] (1.6,1.6) -- (6,6);

    \draw[dashed] (5,1.6) -- (5,6);
    \draw[dashed] (1.6,5) -- (6,5);

    \node[diamond, fill=black, inner sep=1.5pt] at (1.6,1.6) {};
    \node[circle, fill=black, inner sep=1.5pt] at (2,2) {};
    \node[circle, fill=black, inner sep=1.5pt] at (3.98,3.98) {};

    \node[below] at (1.6,1.6) {$\underline{\beta}$};
    \node[below] at (5,1.6) {$\beta^*$};
    \node[below] at (6,1.6) {$\overline{\beta}$};

    \node[left] at (1.6,5) {$\beta^*$};
    \node[left] at (1.6,6) {$\overline{\beta}$};

    \begin{scope}[shift={(6.5,4.5)}]
        \node[circle, fill=black, inner sep=1.5pt] at (0,1) {};
        \node[right] at (0.2,1) {SCE};
        \node[diamond, fill=black, inner sep=1.5pt] at (0,0.5) {};
        \node[right] at (0.2,0.5) {Berk--NE};
    \end{scope}
\end{tikzpicture}
\caption*{Case B: $\Delta_{\mu}>0$}
\end{minipage}
\caption{Equilibrium uniqueness, stability, and distortion.}
\label{fig:twopanels}
\end{figure}

\vspace{-10pt}
\subparagraph{Case B: $\Delta_{\mu}>0$. } In this case, the right-hand side of \eqref{eqn_perfectperception} is negative, so $\psi(\beta)<\beta^*$ always holds. Increasing $\beta$ raises $h(\beta)$, which in turn makes the left-hand side of \eqref{eqn_perfectperception} more negative for all fixed $x<\beta^*$. Solving  \eqref{eqn_perfectperception} thus requires that we increase $\psi(\beta)$, which, after truncation, remains increasing in $\beta$, with the decreasing being strict whenever $\tilde{\psi}(\beta) >\ul\beta$.

As depicted in Panel (b) of Figure \ref{fig:twopanels}, multiple fixed points may arise, among which the largest is locally asymptotically stable, the second largest is unstable, and so on. All fixed points---except possibly the smallest---lie in the interior of $[\underline{\beta}, \overline{\beta}]$, thereby corresponding to SCE beliefs.

When the smallest fixed point equals  $\ul\beta$, the corresponding equilibrium is not self-confirming. By Theorem \ref{thm_direction}, this occurs for any $\Delta_{\mu}>0$, provided that $\ul\beta$ is sufficiently close to zero. Recall the assumption that an unproductive agent shouldn't be assessed, i.e., $h(0)=0$, and that an unassessed agent exerts no (effective) effort, i.e., $R(0,\beta)=0$. Thus for any $\Delta_{\mu}>0$, there exists $\ul{\ul\beta} \in (0,\beta^*)$ such that 
\[R(h(\beta),\beta)-R(h(\beta),\beta^*)>-\Delta_{\mu} \quad  \forall \beta \in [0, \underline{\underline{\beta}}).\]
Now, pick any $\ul{\beta} \in (0,\ul{\ul{\beta}})$ and suppose that $\psi(\ul\beta)$ exists (again, we dispense with this assumption in Appendix \ref{sec_op}). Since $\psi(\ul\beta)$ solves \eqref{eqn_perfectperception}, it must satisfy $\psi(\ul\beta)<\ul\beta$, so $\tilde{\psi
}(\ul\beta)=\ul\beta$. This establishes that $\ul\beta$ is a Berk--NE belief but not an SCE belief.

In the argument above, a key assumption is that the assessment space is effectively unbounded below: recall that $h([\underline{\beta},\bar{\beta}]) \subset (0,1)$. As beliefs decline, assessment decreases accordingly. When the belief space is not overly constrained below, i.e., when $\underline{\beta}$ is chosen as above, this feedback loop can trigger a cascade that drives beliefs to the lower bound $\ul\beta$.

\vspace{-10pt}

\paragraph{Proof of Theorem \ref{thm_magnitude}.} We focus on the case when $\Delta_{\mu}>0$. The proof for the case when $\Delta_{\mu}<0$ is analogous. 

An increase in $\Delta_{\mu}$ makes the right-hand side of \eqref{eqn_perfectperception} more negative. The $x$ that solves the equation must then be lower, which means that the function $\psi(\cdot)$ is pointwise smaller. 

The same effect obtains when $\zeta$ decreases, which reduces $h(\beta;\zeta)$ for all $\beta>0$, Since $R(\cdot,\cdot)$ has strictly increasing differences, the $x$ that solves equation \eqref{eqn_perfectperception} is lower. Hence  $\psi(\cdot)$ is pointwise smaller. 

The resulting downward shift in $\psi$, and hence in $\tilde{\psi}$, is illustrated in Panel (b) of Figure~\ref{fig:combined}.

The effect of the pointwise change in $\tilde{\psi}$ on its stable equilibrium follows from the correspondence principle (see, for example, \citealt{echen02}). What is new here are our notion of stability and learning dynamics: we show that the correspondence principle can be applied to study the comparative statics of the Lyapunov locally asymptotically stable equilibria that arise as the long-run outcomes of misspecified Bayesian learning.

\section{Implications}\label{sec_implication}
This section examines the implications of our results for discrimination.

Consider two agents: \( m \) and \( w \), who are, in reality, identical and as described in Section \ref{sec_model}. However, society overestimates $m$'s innate ability and underestimates $w$'s. That is, $\Delta_{m}>0>\Delta_{w}$, where $\Delta_{i}$ denotes the misspecification of agent $i$'s ability.

We examine the equilibrium disparities in the \emph{objective} outcomes between groups. These outcomes include the perceived effort productivity, assessment, perceived and true efforts, expected market reward, and objective welfare. The last two objects are evaluated under the true equilibrium efforts and  outcome distributions.

The next two propositions follow immediately from Theorems \ref{thm_direction} and \ref{thm_magnitude}:

\begin{proposition}\label{prop_disparity}
The following statements are true in any SCE.
\begin{enumerate}[(i)]
\item Society underestimates $m$’s effort productivity and overestimates $w$’s:  
$\hat{\beta}_w > \beta^* > \hat{\beta}_m.$

\item Agent $w$ faces more intensive assessment than $m$: $h(\hat{\beta}_w) > h(\hat{\beta}_m).$

\item While $w$'s actual effort is greater than $m$'s, their perceived effort gap is even larger:  
\[
a(h(\hat{\beta}_w), \hat{\beta}_w)  
> a(h(\hat{\beta}_w), \beta^*)  
> a(h(\hat{\beta}_m), \beta^*)  
> a(h(\hat{\beta}_m), \hat{\beta}_m).
\]
\item The expected market rewards to group $i$ equals $\hat{\mu}_i+v_{\mathrm{M}}(a(h(\hat{\beta}_i), \hat{\beta}_i),\hat{\beta}_i).$ The difference between $m$ and $w$ is 
\[
\underbrace{\hat{\mu}_m - \hat{\mu}_w}_{>0}  
+ \underbrace{v_{\mathrm{M}}(a(h(\hat{\beta}_m), \hat{\beta}_m), \hat{\beta}_m)  
- v_{\mathrm{M}}(a(h(\hat{\beta}_w), \hat{\beta}_w), \hat{\beta}_w)}_{\leq 0}.
\]  
When $v_{\mathrm{M}} = 0$, $m$ out-earns $w$. If $v_{\mathrm{M}}$ is increasing in $a$ and $\beta$, then it is possible for $w$ to earn more than $m$, but only due to higher perceived productivity, more intensive assessment, and overestimated effort. 

\item The objective welfare of agent $i$ is $\hat{\mu}_i+v_{\mathrm{M}}(a(h(\hat{\beta}_i), \hat{\beta}_i),\hat{\beta}_i)-c(a(h(\hat{\beta}_i), \hat{\beta}_i)).$ The difference between $m$ and $w$ can be either positive or negative, for the same reason as in Part (iv). When $v_{\mathrm{M}}= 0$, the difference is positive and equals  
\[
\underbrace{\hat{\mu}_m-\hat{\mu}_w}_{>0}  
- \underbrace{\left[c(a(h(\hat{\beta}_m), \beta^*)) - c(a(h(\hat{\beta}_w), \beta^*))\right]}_{<0}.
\] 
\end{enumerate}
\end{proposition}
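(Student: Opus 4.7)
The plan is to treat Proposition \ref{prop_disparity} as a direct corollary of Theorem \ref{thm_direction} combined with the monotonicity properties asserted in Assumption \ref{assm_regularity1}. For Part (i), I will apply Theorem \ref{thm_direction} Part A to agent $w$ (using $\Delta_w<0$) to obtain $\hat{\beta}_w>\beta^*$, and Part B to agent $m$ (using $\Delta_m>0$) to obtain $\hat{\beta}_m<\beta^*$ in every Berk--NE and hence in every SCE. Part (ii) then follows immediately from Assumption \ref{assm_regularity1}(iii), which makes $h(\cdot)$ strictly increasing in $\beta$.

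For Part (iii), I will insert the intermediate quantities $a(h(\hat{\beta}_w),\beta^*)$ and $a(h(\hat{\beta}_m),\beta^*)$ between the extreme terms and invoke strict monotonicity of $a(\cdot,\cdot)$ in each argument (Assumption \ref{assm_regularity1}(i)) three times: the leftmost inequality uses $\hat{\beta}_w>\beta^*$, the middle uses Part (ii), and the rightmost uses $\beta^*>\hat{\beta}_m$. Each comparison is strict because the argument left untouched is strictly positive at an SCE, which is where Assumption \ref{assm_regularity1}(i) yields strict, rather than weak, monotonicity.

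For Parts (iv) and (v), the first step is to derive the stated expressions for expected market reward and objective welfare. Starting from the market's posterior mean $\hat{\mu}+h\bigl[X-\hat{\mu}-r(\hat{a},\hat{\beta})\bigr]+v_{\mathrm{M}}(\hat{a},\hat{\beta})$, I will take expectation under the true outcome distribution; the SCE condition forces perceived and true mean outcomes to coincide, so the bracketed correction has mean zero and the expected reward collapses to $\hat{\mu}_i+v_{\mathrm{M}}(a(h(\hat{\beta}_i),\hat{\beta}_i),\hat{\beta}_i)$. Subtracting the true effort cost $c(a(h(\hat{\beta}_i),\beta^*))$ gives welfare. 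Decomposing the $m$--$w$ difference, the signs then follow at once: $\hat{\mu}_m>\hat{\mu}_w$ by the configuration of $\Delta_i$; the $v_{\mathrm{M}}$ bracket is nonpositive whenever $v_{\mathrm{M}}$ is increasing in both arguments, because both are smaller for $m$ by Parts (i) and (iii); and the cost bracket is negative since true effort is larger for $w$. No step presents a substantive obstacle; the only micro-subtlety worth flagging is the SCE simplification of expected reward, which is pure bookkeeping once the definition of SCE (zero KL divergence, hence matched mean outcomes) is invoked.
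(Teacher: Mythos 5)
Your proposal is correct and tracks the paper's own argument: the paper likewise treats Parts (i)–(iii) and the sign decompositions as immediate consequences of Theorem~\ref{thm_direction} together with the monotonicity properties in Assumption~\ref{assm_regularity1}, and its proof explicitly derives only the expected-market-reward formula, using the SCE condition (perceived mean outcome equals true mean outcome) to zero out the correction term exactly as you do. One small point in your favor: in Part (v) you correctly subtract the \emph{true} cost $c(a(h(\hat{\beta}_i),\beta^*))$ when computing objective welfare, which is what the paper's final difference formula uses, even though the displayed welfare expression in the proposition statement writes the cost with $\hat{\beta}_i$ rather than $\beta^*$.
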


\begin{proposition}\label{prop_disparity2}
Disparities identified in Parts (i)–(iii) of Proposition \ref{prop_disparity}, and in Part (v) when $v_{\mathrm{M}}=0$, decrease in the weak set order among locally asymptotically stable SCEs under either of the following:
\begin{enumerate}[(i)]
    \item ability misspecifications $|\Delta_i|$, $i \in \{m,w\}$, decrease;
    \item $\zeta$ increases, so that assessment $h(\beta,\zeta)$ increases for all $\beta>0$. 
\end{enumerate}

\end{proposition}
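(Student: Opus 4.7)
The plan is to derive Proposition \ref{prop_disparity2} from Theorem \ref{thm_magnitude} applied separately to each group, together with straightforward monotonicity arguments for each disparity expression. Since groups $m$ and $w$ face independent equilibrium problems differing only in their ability misspecifications $\Delta_m > 0 > \Delta_w$, Theorem \ref{thm_magnitude} tells us that in each of cases (i) and (ii), the set of locally asymptotically stable belief distortions $|\hat{\beta}_i - \beta^*|$ shrinks in the weak set order for each $i \in \{m,w\}$. Combining with Theorem \ref{thm_direction}, which fixes the sign so that $\hat{\beta}_w > \beta^* > \hat{\beta}_m$, this yields: the set of equilibrium beliefs $\hat{\beta}_w$ moves \emph{downward} toward $\beta^*$ in the weak set order, while the set of $\hat{\beta}_m$ moves \emph{upward} toward $\beta^*$.

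With this, I would work through each disparity of Proposition \ref{prop_disparity} in turn. The belief gap $\hat{\beta}_w - \hat{\beta}_m$ in Part (i) decreases immediately by a standard property of the weak set order applied to differences of sets. For Parts (ii) and (iii), I would invoke Assumption \ref{assm_regularity1}: the maps $h(\cdot,\zeta)$, $a(\cdot,\cdot)$, and $R(\cdot,\cdot)$ are monotone in their arguments, so each disparity expression, viewed as a function of $(\hat{\beta}_w, \hat{\beta}_m)$, is strictly increasing in $\hat{\beta}_w$ and strictly decreasing in $\hat{\beta}_m$; weak set order is preserved under such compositions. Part (v) with $v_{\mathrm{M}}=0$ is decomposed into $(\hat{\mu}_m - \hat{\mu}_w)$, which depends only on $|\Delta_i|$ and decreases directly in case (i) (and is invariant in case (ii)), and the cost-gap term $c(a(h(\hat{\beta}_m),\beta^*)) - c(a(h(\hat{\beta}_w),\beta^*))$, which is negative and whose magnitude shrinks by the same monotonicity argument.

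The main obstacle arises in case (ii) for the disparities in Parts (ii), (iii), and the cost component of (v), because the functions $h(\cdot,\zeta)$ and hence $a(h(\cdot,\zeta),\cdot)$ shift with $\zeta$ in addition to the beliefs moving. Two channels then compete: a direct effect of $\zeta$ that can in principle widen the gap at fixed beliefs (as one can check in the linear-quadratic example, where $h(\beta,\lambda_{\mathrm{E}})$ scales linearly in $\lambda_{\mathrm{E}}$), and an indirect belief-convergence effect that narrows it. To handle this, I would lean on the ``slight shock'' premise of Theorem \ref{thm_magnitude}: for an infinitesimal $d\zeta>0$, a first-order expansion of $d(\hat{h}_w - \hat{h}_m)$ combines $\partial_\beta h \cdot d\hat{\beta}_i$ and $\partial_\zeta h \cdot d\zeta$ terms; since $d\hat{\beta}_w<0$ and $d\hat{\beta}_m>0$ by Theorem \ref{thm_magnitude}, the belief channel has a definite sign and dominates for small shocks. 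An alternative, perhaps cleaner, route is to recast the equilibrium fixed point in the assessment variable $\hat{h}_i$, prove an analog of Theorem \ref{thm_magnitude} directly for $\hat{h}_i$, and then apply the same composition argument in $(\hat{h}_w, \hat{h}_m)$ rather than $(\hat{\beta}_w, \hat{\beta}_m)$, bypassing the direct-effect issue entirely.
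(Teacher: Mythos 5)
Your worry about the direct $\zeta$-effect in case (ii) is real, but the patch you propose does not close it. You assert that the belief-convergence channel ``dominates for small shocks,'' yet both the direct term $\bigl[\partial_\zeta h(\hat\beta_w)-\partial_\zeta h(\hat\beta_m)\bigr]d\zeta$ and the indirect term $\partial_\beta h(\hat\beta_w)\,d\hat\beta_w-\partial_\beta h(\hat\beta_m)\,d\hat\beta_m$ are first order in $d\zeta$ --- the belief responses $d\hat\beta_i/d\zeta$ are finite by the implicit function theorem applied to the SCE condition --- so nothing in Assumptions~\ref{assm_regularity1}--\ref{assm_regularity2} or Theorems~\ref{thm_direction}--\ref{thm_magnitude} gives the indirect term an advantage. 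In the linear--quadratic case with $\lambda_{\mathrm A}=0$ and $\zeta=\lambda_{\mathrm E}$, write $y_i=\hat\beta_i^2$, $A=(\beta^*)^4$, $B_i=4|\Delta_i|\kappa c^2$; the stable SCE beliefs are $y_i=\tfrac12\bigl[(\beta^*)^2+\sqrt{A\mp B_i/\lambda_{\mathrm E}}\bigr]$ (minus sign for $i=m$), the assessment disparity is $h(\hat\beta_w)-h(\hat\beta_m)=\lambda_{\mathrm E}(y_w-y_m)/(\kappa c)$, and its $\lambda_{\mathrm E}$-derivative is proportional to $g\bigl(A+B_w/\lambda_{\mathrm E}\bigr)-g\bigl(A-B_m/\lambda_{\mathrm E}\bigr)$, where $g(x)=(x+A)/(2\sqrt{x})$ attains its minimum at $x=A$. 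To leading order this has the sign of $|\Delta_w|^2-|\Delta_m|^2$, so the assessment disparity \emph{increases} with $\lambda_{\mathrm E}$ whenever $|\Delta_w|>|\Delta_m|$ (e.g., $\beta^*=1$, $\kappa c^2=1$, $|\Delta_w|=0.1$, $|\Delta_m|=0.01$, $\lambda_{\mathrm E}=1$ gives $g(1.4)-g(0.96)\approx 0.014>0$). Your fall-back route --- recasting the fixed point in $\hat h_i$ --- is stated but not carried out, and it does not remove the direct channel, since $a$, $R$, and $c$ still depend on $\zeta$ at fixed $\hat h$.

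For context, the paper supplies no proof of Proposition~\ref{prop_disparity2}; it asserts that it ``follows immediately'' from Theorems~\ref{thm_direction} and \ref{thm_magnitude}. That assertion is genuinely immediate for the Part~(i) disparity in both cases and for Parts~(ii)--(iii), (v) in case~(i), where the disparity depends on $\zeta$ only through $(\hat\beta_w,\hat\beta_m)$. You correctly surfaced the missing step for case~(ii); the gap in your write-up is that you wave it away rather than resolve it, and the computation above suggests it cannot be resolved without an additional restriction (e.g., on the relative magnitudes of $|\Delta_m|$ and $|\Delta_w|$).
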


Parts (i)-(iii) of Proposition \ref{prop_disparity} characterize the cross-domain reinforcement of stereotypes, showing how stereotypes of ability can bleed into the domain of effort, both in productivity and in choice. The misperception of effort choice is most novel: it arises from society's use of erroneous beliefs about effort productivity to infer effort, with both theoretical and practical implications.\footnote{Certainly, misbeliefs about effort choice feed back into the learning about effort productivity, influencing both the equilibria and how they are reached.  Section \ref{sec_discussion} compares equilibrium outcomes with and without this channel, and Appendix \ref{sec_learning} examines its implications for learning dynamics.}

Part (iv) of Proposition \ref{prop_disparity} highlights the welfare implications of distortions, emphasizing the role of the market's reward to effort, elaborated as follows: 
\[v_{\mathrm{M}}(a(\underbrace{h(\hat{\beta}_m}_{(2)}), \underbrace{\hat{\beta}_m}_{(3)}), \underbrace{\hat{\beta}_m}_{(1)})  
- v_{\mathrm{M}}(a(\underbrace{h(\hat{\beta}_w)}_{(3)}, \underbrace{\hat{\beta}_w}_{(3)}), \underbrace{\hat{\beta}_w}_{(1)}). 
\] 
When $v_{\mathrm{M}}=0$, this difference vanishes. When $v_{\mathrm{M}}(a,\beta)$ is increasing in $a$ and $\beta$, the difference becomes positive, and the change can be decomposed as follows:  
(1) holding effort fixed, effort is perceived as more productive for $w$ than for $m$, and this is valued by the market;
(2) due to inflated perceptions of effort productivity, $w$ is subject to more intensive assessment in reality; and  
(3) holding assessment fixed, $w$ is perceived to exert greater effort than $m$. 
Each of these distortions contributes to a more negative bracketed term, which when added to  $\Delta_{m}-\Delta_w$, narrows the overall reward gap and may even reverse it.  However, this reflects an entrenchment of ability-based stereotypes that spill over into the effort domain, rather than genuine equity. 

The objective welfare gap between agents is ambiguous for the same reason. When the market rewards only ability, $w$ earns less but works harder than $m$, achieving a lower welfare.

Proposition \ref{prop_disparity2} identifies policy levers that can either mitigate or exacerbate distortions and disparities, focusing on interventions that operate through assessment. In the linear–quadratic example, increasing any parameter in $\zeta \in \{\lambda_{\mathrm{E}}, \delta, -c, -\kappa\}$ raises $h(\cdot)$ pointwise, thereby mitigating the distortions in perceived effort productivity, assessment, and effort choice (both real and perceived). Intuitively, increasing $\lambda_{\mathrm{E}}$ makes the evaluator value effort more. Raising $\delta$ increases the market's reward to effort; in the case where the evaluator cares about the agent's payoff, this indirectly raises the evaluator's valuation of effort. Finally, lowering $c$ or $\kappa$ operates through the cost channels. All of these interventions serve to intensify assessment.

We now examine the implications of Propositions \ref{prop_disparity} and \ref{prop_disparity2} for education and labor market discrimination.

\vspace{-10pt}

\paragraph{Education.} Proposition~\ref{prop_disparity} informs gender- and race-stereotypes in education. 

In the context of gender, girls are seen as less brilliant than boys in STEM fields, where raw talent is emphasized for successes \citep{leslie2015expectations,bian2017gender}. Society---including parents and teachers---assumes that girls must compensate through diligence, attributing their academic successes to greater study aptitude \citep{wang2018systematic}. This misconception, known in educational psychology as the \emph{talent-effort trade-off}, has several consequences. Notably, it subjects girls to more intensive assessment: \cite{tiedemann2000gender} and \cite{copur2020teachers} find that teachers provide more detailed, critical feedback to mathematics work done by girls than by boys, flagging small errors that are often overlooked in boys' work. The resulting pressure contributes to socially prescribed perfectionism and burnout among girls \citep{rice2015perfectionism}, suggesting that girls work harder than boys in reality. 

A similar pattern appears in the context of race, where Asian students are viewed as hard-working but uncreative \citep{lu2024creativity}. Recent evidence by \cite{arcidiacono2022asian} reveals that Asian applicants receive lower personal ratings in college admissions that measure leadership potential and social skills. In response, Asian parents invest heavily in ``Ivy coaching,'' often spending considerable amounts on private counselors to polish their children's applications \citep{moscatello2024inventing}. Asian students, in turn, are pressured to ``out-white'' their white peers through extra hard work \citep{hsin2014explaining}.

The above findings are consistent with the predictions in Proposition \ref{prop_disparity} concerning perceived effort productivity, assessment, and real effort. Part (iii) of the proposition suggests an even more gloomy prospect: ultimately, society perceives girls and Asian students as working harder than they actually do. This reinforcement of the talent–effort trade-off does not arise because girls and Asian students face stronger scrutiny and work harder (which they do in reality). Rather, it stems from inflated beliefs about their effort productivity, which lead to incorrect inferences about their effort choices.

When effort is valued by employers or universities (e.g., through curriculum alignment), the above mechanisms are shown to narrow or even reverse the reward gap between minorities and majorities---consistent with the closing of gender gaps in STEM and the strong admissions outcomes of Asian students \citep{ceci2014women,lau2024aa}. Part (iv) of Proposition \ref{prop_disparity} warns that these trends should not be interpreted as definitive evidence that discrimination has disappeared, as they may reflect an entrenchment of ability-based stereotypes that spill over into the effort domain.

Proposition \ref{prop_disparity2} identifies distortion-mitigating policies that operate through intensifying assessment. In education contexts, increasing $\lambda_{\mathrm{E}}$ can be achieved through pay-for-performance schemes for teachers \citep{deming2022four}; increasing $\delta$ through curriculum reforms and alignment \citep{cullen2013can}; and reducing $\kappa$ through data-driven assessment innovations \citep{mandinach2021misconceptions}. By contrast, policies that raise the cost $c$ of studying or test preparation are bias-amplifying. These include China’s ban on after-school tutoring \citep{hall2024tutoring}, allowing multiple standardized-test retakes \citep{proulx2024should}, and admissions practices that favor affluent applicants \citep{chettydiversifying}. By disproportionately burdening less-affluent students, such policies widen the disparity between rich and poor.

\vspace{-10pt}
\paragraph{Labor markets.} Due to space constraints, we focus on gender discrimination in labor markets,\footnote{See, however, \cite{hyun2005bamboo} and \cite{lu2024creativity} for how the model minority stereotype hinders East Asians’ advancement to upper management positions, a phenomenon known as the \emph{bamboo ceiling}.}  and highlight insights that are not immediate from the education application.

Our prediction that women face heightened scrutiny is supported by empirical evidence regarding the academic labor market. In psychology, \cite{correll2017sws} find that professors evaluating fictitious candidates write four times more doubt-raising comments for women than men. For example: ``I would need to see evidence that she had gotten these grants and publications on her own'' and ``It is impossible to make such a judgment without teaching evaluations.'' In economics, \cite{hengel2022publishing} finds that female-authored papers take longer to pass peer review than comparable male-authored papers. \cite{eberhardt2023gender} find that recommendation letters for entry-level faculty positions are more likely to describe women using ``grindstone'' terms (e.g., hardworking, diligent) and make more frequent comparisons to their peers. \cite{dupas2021gender} document that female presenters receive more probing seminar questions. 

Regarding effort, our prediction that women exert more effort than men in reality is supported by Hengel's (2022) finding that female authors invest more in writing quality.  When effort produces general human capital valued by the broader market, this distortion may paradoxically improve women's compensation, causing a narrowing or even reversal of the gender pay gap. We thus caution that empirical signs of cracks in the glass ceiling---as documented in \cite{gayle2012gender} and \cite{bertrand2018coase, bertrand2020gender} among others---may reflect the entrenchment rather than a decline of stereotypes, with ambiguous welfare consequences.

Our comparative statics reveal additional policy levers to address labor market discrimination. Traditional approaches to combating stereotypes are studied largely in isolation by psychologists, and common solutions involve bias-correcting programs such as longitudinal training and reinforcement \citep{devine2012long}. These programs are widely implemented across sectors, yet recent meta-analyses suggest that they are costly to scale and underscore the need for more cost-effective alternatives \citep{greenwald2020implicit, bohnet2016works}. Our result on the cross-domain interaction of stereotypes reveals a cost-saving opportunity. As shown in Proposition \ref{prop_disparity2}, correcting misspecified beliefs about ability reduces effort distortions through the mechanism we propose, thereby ``killing two birds with one stone.''

 In labor contexts, intensifying assessment can be achieved through general human capital subsidies, which raise $\delta$; rater training, which lowers assessment costs $\kappa$ \citep{bohnet2016works}; and leveraging AI–human complementarities \citep{brynjolfsson2025generative}, which reduces effort costs $c$.

\section{Affirmative action}\label{sec_aa}
In this section, we extend the baseline framework and use our results to analyze the consequences of several affirmative action policies that are used in practice to address discrimination. 

To this end, we consider a population of agents belonging to different groups or collectives. Suppose there are \( J \geq 2 \) groups of agents, each with a population of \( \alpha_j > 0 \) such that \( \sum_{j=1}^J \alpha_j = 1 \). The true mean innate ability and effort productivity of group $j$ are $\mu_j^* \in \mathbb{R}$ and $\beta_j^* \in (\ul\beta,\bar\beta)$, respectively. These jointly determine the group's outcome $X_j=\mu^*_j+r(a_j,\beta^*_j)+\epsilon_j$, which is independently distributed across groups. 

Society misspecifies the mean ability of each group $j$ by \(\Delta_j \neq 0, \) and learns about their effort productivity based on observed outcomes, as in the baseline  model. For analytical interest, suppose that  both sets $\{j: \Delta_j<0\}$ and $\{j:\Delta_j>0\}$ are nonempty. The term $\bar\Delta \coloneqq \sum_{j=1}^J \alpha_j\Delta_j$ represents the average ability misspecification across groups.

We examine two types of affirmative action policies: \emph{color-blind assessment} and \emph{color-sighted, impartial assessment}.

\vspace{-10pt}

\paragraph{Color-blind assessment.}\label{sec_aa_colorblind}
In this case, group identities are unobservable to the evaluator. Consequently, all agents are treated equally as average members of the whole population.

\begin{proposition}\label{prop_colorblind}
The set of Berk--NE under color-blind assessment coincides with that in Section \ref{sec_result} when the agent's mean ability is misspecified by \( \bar{\Delta} \).  
\end{proposition}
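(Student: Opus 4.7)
The plan is to show that under color-blind assessment, both the evaluator's optimal assessment map $h(\beta)$ and society's KL-best-fit map $\tilde{\beta}(h)$ reduce to the baseline objects of Section \ref{sec_main} when the ability misspecification is set to $\bar{\Delta}$. Their composition $\tilde{\psi}$ then coincides with the baseline's, and the fixed-point sets---i.e., the Berk--NE---agree.

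First I would verify the evaluator-side reduction. Because group labels are unobservable, E commits to a single $h$ based on a single (representative) belief $\delta_\beta$. His objective $v_{\mathrm{E}}(a(h,\beta),\beta)-\kappa(h)$ is identical to \eqref{eqn_eproblem}, so the optimal assessment map $h(\cdot)$ is unchanged from the baseline.

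Second I would turn to the belief side. Each agent in group $j$ privately knows $(\mu_j^*,\beta_j^*)$ and chooses the true optimal effort $a(h,\beta_j^*)$, producing $X_j\sim N(\mu_j^*+R(h,\beta_j^*),h^{-1})$. Since society does not observe group labels, it sees the pooled mixture, whose mean is $\bar{\mu}^{*}+\sum_j\alpha_j R(h,\beta_j^*)$. Society's perceived distribution is Gaussian, $N(\bar{\hat{\mu}}+R(h,\beta),h^{-1})$, with $\bar{\hat{\mu}}-\bar{\mu}^{*}=\bar{\Delta}$. Because the perceived variance is $\beta$-independent, the $\beta$-dependent part of the KL divergence between the true mixture and this Gaussian collapses to the squared mean difference. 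The first-order condition becomes $R(h,\beta)-R(h,\beta^{\ast\ast})=-\bar{\Delta}$, where $R(h,\beta^{\ast\ast})\coloneqq\sum_j\alpha_j R(h,\beta_j^*)$ defines an effective ``representative'' true productivity (uniquely identified for $h>0$ by strict monotonicity of $R(h,\cdot)$ in Assumption \ref{assm_regularity1}(ii)). This is precisely the baseline equation \eqref{eqn_perfectperception} with misspecification $\bar{\Delta}$.

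Combining the two reductions, $\tilde{\psi}$ under color-blind assessment coincides with the baseline $\tilde{\psi}$ at misspecification $\bar{\Delta}$, so the Brouwer fixed-point sets---and hence the Berk--NE---match. The main delicacy is the mixture-versus-Gaussian KL step: one must confirm that cross-group heterogeneity in $\mu_j^{*}$ and $\beta_j^{*}$ enters the KL only through $\beta$-free variance terms, which is what reduces the KL minimization to mean matching. This holds because the perceived variance $h^{-1}$ is prespecified and common across groups, so the between-group dispersion of means is absorbed into an additive constant in the KL objective. A secondary point worth checking is that, in equilibrium, only the equilibrium value of $h$ is probed, so any $h$-dependence of $\beta^{\ast\ast}$ does not disrupt the one-to-one correspondence between fixed points of the two $\tilde{\psi}$ maps.
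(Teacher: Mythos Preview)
Your approach is essentially what the paper has in mind: reduce both the evaluator's assessment map and society's KL best-fit map to their baseline forms, then identify fixed points. The paper itself offers no proof, treating the claim as immediate, and your mean-matching argument for the KL step is correct---since the perceived variance is fixed at $h^{-1}$, minimizing $D_{KL}(P\|Q_\beta)$ over $\beta$ amounts to matching $R(h,\beta)$ to $E_P[X]-\bar{\hat\mu}$.

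The one genuine gap is your handling of heterogeneous $\beta_j^*$. Your representative productivity $\beta^{**}$ defined by $R(h,\beta^{**})=\sum_j\alpha_j R(h,\beta_j^*)$ generally depends on $h$, so the color-blind $\tilde\psi$ is not the baseline map for any \emph{fixed} $\beta^*$; the fixed-point sets need not then coincide with those of any single baseline model, and your remark that ``only the equilibrium value of $h$ is probed'' does not rescue this---different fixed points probe different values of $h$, hence different $\beta^{**}$. The proposition is most naturally read in the setting of Section~\ref{sec_implication} and the illustration following Theorem~\ref{thm_colorsighted}, where groups are identical in truth ($\mu_j^*\equiv\mu^*$, $\beta_j^*\equiv\beta^*$) and only society's group-specific misbeliefs differ; then $\beta^{**}=\beta^*$ identically and your argument is complete without caveat. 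If you want the heterogeneous case, note that under Assumption~\ref{assm_linearr}'s separability $R(h,\beta)=g_1(\beta)g_2(h)+g_3(h)$ one gets $\beta^{**}=g_1^{-1}\bigl(\sum_j\alpha_j g_1(\beta_j^*)\bigr)$, which is $h$-free and restores the literal coincidence; absent such structure, the ``coincides'' claim should be read as a structural equivalence rather than a set equality for a specific baseline $\beta^*$.
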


\vspace{-15pt}
\paragraph{Color-sighted impartial assessment.} Under the color-sighted policy, group identities are observable, but all groups must be assessed at the same intensity. Society forms distinct beliefs about each group's productivity, shaping and being shaped by the strategic choices of assessment and efforts.   As we shall see, analyzing this multi-dimensional problem requires new methodological ideas. It is not a straightforward application of the analysis thus far developed for a single-dimensional belief space.

\vspace{-10pt}

\subparagraph{Preliminaries.} We first update the equations defining a Berk--NE. To begin, observe that the problem facing any individual agent remains the same as in the baseline model. The problem of the evaluator is different. Given any vector \(\bm{\beta} \coloneqq (\beta_j)_{j=1}^J\) of effort productivities, the optimal assessment solves
\begin{equation}\label{eqn_eproblem_aa}
    \max_{h \in [0,1]} \sum_{j=1}^J \alpha_i v_{\mathrm{E}}\left(a\left(h,\beta_j\right), \beta_j\right) - \kappa(h).
\end{equation}  
Under the same conditions underpinning Assumption \ref{assm_regularity1}, this problem admits a unique solution, denoted by \(h(\bm{\beta})\).

Given assessment intensity $h$, the perceived outcome distribution is Gaussian, with a mean vector \((\hat{\mu}_j + R(h, \beta_j))_{j=1}^J\) and a variance-covariance matrix \(h^{-1} Id_{J}\). The true outcome distribution has the same variance-covariance matrix but a different mean vector: \((\mu_j^* + R(h, \beta_j^*))_{j=1}^J\). The KL minimization problem is 
\begin{equation}\label{eqn_dkl_aa}
   \min_{\bm{\beta} \in [\ul\beta,\bar\beta]^{J}} \frac{1}{2h^{-1}} \sum_{j=1}^J \left(\Delta_j + R(h, \beta_j) - R(h, \beta_j^*)\right)^2, 
\end{equation} 
where the objective is the KL divergence. Since the objective is separable across $j=1,\cdots, J$, the KL minimizer is given by $\tilde{\bm{\beta}}(h) \coloneqq (\tilde{\beta}_j(h))_{j=1}^J$. 

Berk–Nash equilibrium beliefs regarding effort productivity correspond to the fixed points of 
\[\tilde{\bm{\psi}}: [\ul\beta,\bar\beta]^J \rightarrow [\ul\beta,\bar\beta]^J, \quad \bm{\beta} \mapsto (\tilde{\bm{\beta}} \circ h)(\bm\beta)\coloneqq \tilde{\bm{\psi}}(\bm{\beta}).\] 
Fixed points in the interior of $[\ul\beta,\bar\beta]^J$ correspond to SCE beliefs. 
A Berk–Nash equilibrium with belief vector $\hat{\bm{\beta}}$ is locally asymptotically stable in the Lyapunov sense if all eigenvalues of the Jacobian matrix of $\tilde{\bm{\psi}}(\bm{\beta})-\bm{\beta}$, evaluated at $\hat{\bm{\beta}}$, have negative real parts. 

\vspace{-10pt}

\subparagraph{Results.}
We focus on the case where each group's ability is slightly misspecified to draw the sharpest distinction between the color-sighted and color-blind policies. 

\begin{theorem}\label{thm_colorsighted}
Suppose that Assumptions \ref{assm_primitives} and \ref{assm_primitives2} hold. Under color-sighted, impartial assessment, the following are true when $|\Delta_j|$, $j=1,\cdots, J$ are sufficiently small.
    \begin{enumerate}[(i)]
        \item There is a unique Berk--NE, which is self-confirming and locally asymptotically stable. 
        \item  In this equilibrium, $|\hat{\beta}_j-\beta_j^*| = O(|\Delta_j|)$, with $\hat{\beta}_j<\beta_j^*$ if $\Delta_j>0$ and $\hat{\beta}_j>\beta_j^*$ if $\Delta_j<0$.
        \item A small increase in $|\Delta_j|$ increases $
        |\hat{\beta}_j-\beta_j^*|$ but does not affect $|\hat{\beta}_k-\beta_k^*|$, $k \neq j$. 
        \item A small increase in $\zeta$ increases all $|\hat{\beta}_j-\beta_j^*|$, $j=1,\cdots, J$. 
    \end{enumerate}
\end{theorem}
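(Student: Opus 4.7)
The plan is to view the fixed-point equation $\tilde{\bm\psi}(\bm\beta)=\bm\beta$ as a smooth system in both $\bm\beta$ and the vector of misspecifications $\bm\Delta=(\Delta_j)_{j=1}^J$, and apply the implicit function theorem at the reference point $(\bm\beta^*,\bm 0)$, where $\bm\beta^*=(\beta_j^*)_{j=1}^J$. The key computation is the Jacobian of $\tilde{\bm\psi}$ at this point. When $\bm\Delta=\bm 0$, the $j$-th KL subproblem reduces to $R(h,x)-R(h,\beta_j^*)=0$, whose unique root is $x=\beta_j^*$ for every $h>0$ by strict monotonicity of $R(h,\cdot)$ under Assumption \ref{assm_regularity1}(ii). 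Hence $\tilde\beta_j(\cdot)\equiv\beta_j^*$, $\tilde{\bm\psi}$ collapses to the constant map $\bm\beta\mapsto\bm\beta^*$, and its Jacobian at $(\bm\beta^*,\bm 0)$ vanishes. The Jacobian of the residual $\tilde{\bm\psi}(\bm\beta)-\bm\beta$ therefore equals $-I_J$, which is invertible. The IFT delivers a smooth local branch $\hat{\bm\beta}(\bm\Delta)$ of fixed points for all sufficiently small $|\bm\Delta|$; since $\beta_j^*\in(\ul\beta,\bar\beta)$, this branch lies in the interior of $[\ul\beta,\bar\beta]^J$ and thus corresponds to an SCE. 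Local asymptotic stability is immediate because the Jacobian of the residual at $\hat{\bm\beta}(\bm\Delta)$ is an $O(|\bm\Delta|)$ perturbation of $-I_J$, whose eigenvalues all have strictly negative real parts. For global uniqueness on $[\ul\beta,\bar\beta]^J$, I would argue that, by continuous dependence of the KL root on $\Delta_j$ uniformly in $h\in h([\ul\beta,\bar\beta]^J)$, the image of $\tilde{\bm\psi}$ is contained in an arbitrarily small neighborhood of $\bm\beta^*$ for small $|\bm\Delta|$; hence every fixed point must coincide with the IFT branch.

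Parts (ii)--(iv) then follow from implicit differentiation of the system
\begin{equation*}
R\bigl(h(\hat{\bm\beta}),\hat\beta_j\bigr) - R\bigl(h(\hat{\bm\beta}),\beta_j^*\bigr) \;=\; -\Delta_j, \qquad j=1,\ldots,J.
\end{equation*}
Because the Jacobian of $\tilde{\bm\psi}$ at the reference point is the zero matrix, the cross-partials $\partial\tilde\psi_j/\partial\beta_k$ are $O(|\bm\Delta|)$, and the leading-order response decouples across groups:
\begin{equation*}
\hat\beta_j - \beta_j^* \;=\; -\frac{\Delta_j}{R_\beta\bigl(h(\bm\beta^*),\beta_j^*\bigr)} \;+\; O(|\bm\Delta|^2).
\end{equation*}
This yields Part (ii) directly: the magnitude $O(|\Delta_j|)$ and the sign opposite to $\Delta_j$ both come from $R_\beta>0$. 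Part (iii) is immediate from this decoupled leading-order formula. For Part (iv), shifting $\zeta$ moves $h(\bm\beta^*)$; under strict increasing differences of $R$, the denominator $R_\beta(h(\bm\beta^*),\beta_j^*)$ is monotone in $h$ and hence in $\zeta$, which translates into the predicted monotonicity of $|\hat\beta_j-\beta_j^*|$ simultaneously across all groups.

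The hardest step, I anticipate, will be securing uniqueness globally on $[\ul\beta,\bar\beta]^J$ rather than only on a neighborhood of $\bm\beta^*$. The uniform-continuity argument above requires care because $\tilde{\bm\psi}$ composes two nontrivial operations: the separable KL minimization, whose root depends smoothly on $h$, and the assessment $h(\bm\beta)$, which couples all components of $\bm\beta$ through a single real-valued maximization. A secondary technical point is that existence of the root of $R(h,x)=R(h,\beta_j^*)-\Delta_j$ inside $[\ul\beta,\bar\beta]$ must be verified for all $h$ in the image of $h(\cdot)$; for small $|\Delta_j|$ this follows from continuity, mirroring the treatment of \eqref{eqn_perfectperception} in the proof of Theorem \ref{thm_direction}.
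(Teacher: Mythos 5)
Your proof is correct and takes a genuinely different route from the paper's. The paper restricts attention to the product domain $D=\prod_{j\le K}[\beta_j^*,\bar\beta]\times\prod_{j>K}[\ul\beta,\beta_j^*]$, bounds the optimal assessment $h(\bm\beta)$ below by a $\bm\Delta$-free constant $\ul{\ul h}$, computes the rank-one Jacobian $\nabla_{\bm\beta}\bm\psi=g(\bm\beta;\bm\Delta)(\nabla_{\bm\beta}h)^\top$ and shows $\|\nabla_{\bm\beta}\bm\psi\|_2=O(\|\bm\Delta\|_2)$, then invokes the contraction mapping theorem for existence, uniqueness, and asymptotic stability on $D$, and finally inverts $-Id_J+\nabla_{\bm\beta}\bm\psi$ via Sherman--Morrison for the comparative statics. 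You instead observe that at $\bm\Delta=\bm 0$ the map $\tilde{\bm\psi}$ collapses to the constant $\bm\beta^*$, so the residual's Jacobian is $-Id_J$; the IFT at $(\bm\beta^*,\bm 0)$ then gives a smooth interior branch, stability follows from the Jacobian being an $O(|\bm\Delta|)$ perturbation of $-Id_J$, and global uniqueness from image confinement (uniform continuity of the constrained KL root over the compact assessment range pushes the image of $\tilde{\bm\psi}$ into the IFT's uniqueness neighborhood). Your route is more elementary and gives the decoupled leading-order formula $\hat\beta_j-\beta_j^*=-\Delta_j/R_\beta(h(\bm\beta^*),\beta_j^*)+O(|\bm\Delta|^2)$, from which Parts (ii)--(iv) read off directly; the paper's machinery buys a global contraction (hence global, not merely local, asymptotic stability relative to $D$) and exact sign information on the fully coupled Jacobian without appealing to the limit $\bm\Delta\to\bm 0$.

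One flag: your leading-order formula, with $R_\beta$ increasing in $h$ by strict increasing differences, implies that increasing $\zeta$ \emph{shrinks} $|\hat\beta_j-\beta_j^*|$. This agrees with the paper's own computation (which finds $\sgn(\partial\hat\beta_j/\partial\zeta)=\sgn(\Delta_j)$, pushing $\hat\beta_j$ toward $\beta_j^*$) and with Theorem \ref{thm_magnitude}, but runs opposite to the literal wording of Part (iv). You should state the direction explicitly rather than deferring to "the predicted monotonicity," since the statement's word ``increases'' appears to be a typo.
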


Comparing Proposition \ref{prop_colorblind} and Theorem \ref{thm_colorsighted}, we can see that when the evaluator overestimates agents' abilities on average, i.e., $\bar\Delta>0$, color-blind assessment may generate multiple equilibria, some of which exhibit large distortions (as in Part (iii) of Theorem \ref{thm_direction}). In contrast, under color-sighted policies, the equilibrium must be unique and exhibit distortions on the order of $O(|\Delta_j|)$ for each group $j$. 

To illustrate this difference, consider the linear-quadratic example in Section \ref{sec_example}. Suppose two groups of agents, $m$ and $w$, with $\Delta_m>0>\Delta_w$. Under a color-blind assessment, the aggregate outcome exhibits a deficit that cannot be attributed to specific groups. Therefore, the evaluator lowers his belief about effort productivity for the average agent. This, in turn, lowers the assessment given by 
\[h(\beta)=\frac{\lambda_1 \beta^2}{\lambda_2 \beta^2+\kappa c}. \]
Since the right-hand side decreases without bound on $[0,1]$ as $\beta$ declines, a belief cascade arises when the belief space is not overly constrained (that is, when $\ul{\beta}$ is sufficiently close to zero). The resulting equilibria may exhibit large distortions in $\beta$ even when the misspecification of ability is arbitrarily small.

Under color-sighted, impartial assessment, the evaluator forms distinct beliefs for different groups, overestimating $w$'s effort productivity and underestimating $m$'s. The relevant domain for equilibrium beliefs is thus $[\ul\beta,\beta_m^*]\times [\beta_w^*, \overline{\beta}]$. Within this domain, solving the optimal assessment yields  
\[
h(\bm{\beta}) = \frac{\lambda_1 \sum_{j \in \{m,w\}} \alpha_j \beta_j^2}{\lambda_2 \sum_{j \in \{m,w\}} \alpha_i \beta_i^2 + \kappa c} 
> \ul{\ul h}\coloneqq \frac{\lambda_1  \alpha_w (\beta_w^*)^2}{\lambda_2  \alpha_w (\beta_w^*)^2 + \kappa c}.
\]  
Crucially, the lower bound $
\ul{\ul h}$ on the right-hand side depends only on $w$'s characteristics: their true productivity and population.   It arises from the presence of $w$ --- whose productivity must be inflated to explain the outcome data --- and the requirement that all groups be assessed at equal intensity.  Its universality prevents a belief cascade. 

Parts (iii) and (iv) analyze comparative statics of equilibrium distortions under color-sighted, impartial assessment. Results are qualitatively similar to those in the baseline model, although the proofs are more challenging; see Appendix~\ref{sec_op}.

In Appendix \ref{sec_learning_aa},  we examine the misspecified Bayesian learning foundation for the equilibrium under color-sighted, impartial assessment. We show that, near the equilibrium, the dynamic system governing the evolution of asymptotic posterior beliefs has eigenvalues that are all negative, hence the equilibrium is a sink. By Bena{\"{\i}}m's (1999) ``attractor theorem,'' posterior beliefs converge to the equilibrium with positive probability, and with arbitrarily high probability given a sufficiently concentrated prior.

\section{Discussion}\label{sec_discussion}
\paragraph{Strategic uncertainty.} In our model, society knows the agent's effort strategy $a(\cdot,\cdot)$  and perceives their effort choice as $a(h,\beta)$, where $\beta$ is (random) perceived productivity updated from realized outcomes and concentrates on $\hat{\beta}\neq \beta^*$ in equilibrium. To better understand the role of such strategic uncertainty,  we compare our framework with the \emph{first-order misspecification} and \emph{double-misspecification} scenarios studied in \citet{murooka2021misspecified, murooka2023higher}.
 
\vspace{5pt}
\emph{First-order misspecification.} In this case, players' first-order beliefs about the fundamental are common knowledge, although they may differ across players. Society accurately perceives the agent's effort choice as $a(h,\beta^*)$, while holding a belief $\beta$ about their  productivity.  The effective effort perceived by society is thus 
\[
   R^{\mathrm{F}}(h,\beta) \coloneqq r\left(a(h,\beta^*),\beta\right),
\]
and the KL  minimization problem becomes 
\[
\min_{\beta \in [\ul\beta,\bar\beta]} |\Delta_{\mu} + R^{\mathrm{F}}(h,\beta)-R^{\mathrm{F}}(h,\beta^*)|. 
\]

One can verify that all results from Section \ref{sec_result} remain qualitatively valid under first-order misspecification (details are available upon request). Given this, we compare the magnitudes of belief distortions, focusing on small misspecifications of ability. 

The next proposition shows that, perhaps surprisingly,  relative to our model, first-order misspecification amplifies the belief distortion in the least distorted SCE, one with the lowest  $\hat{\beta}$ when $\Delta_{\mu}<0$ and the highest $\hat{\beta}$ when $\Delta_{\mu}>0$:

\begin{proposition}\label{prop_decomposition}  
When \( |\Delta_{\mu}| \) is sufficiently small, holding all else constant, the gap \( |\hat{\beta} - \beta^*| \) in the least distorted SCE is smaller in our model than under first-order misspecification. 
\end{proposition}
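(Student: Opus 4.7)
\textbf{Proof proposal for Proposition \ref{prop_decomposition}.}

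The plan is to characterize the least distorted SCE in each model as the smooth branch through $(\beta,\Delta_\mu)=(\beta^*,0)$ of an implicit equation, obtain the leading-order sensitivity of $\hat{\beta}$ to $\Delta_\mu$ via the implicit function theorem, and then compare the two sensitivities. An SCE in our model solves
\[
F(\beta,\Delta_\mu)\coloneqq R\!\left(h(\beta),\beta\right)-R\!\left(h(\beta),\beta^*\right)+\Delta_\mu=0,
\]
while an SCE under first-order misspecification solves
\[
G(\beta,\Delta_\mu)\coloneqq r\!\left(a(h^{\mathrm{F}}(\beta),\beta^*),\beta\right)-r\!\left(a(h^{\mathrm{F}}(\beta),\beta^*),\beta^*\right)+\Delta_\mu=0,
\]
where $h^{\mathrm{F}}(\beta)$ is the evaluator's optimal assessment when he perceives effort at $a(h,\beta^*)$. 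Both equations are satisfied at $(\beta^*,0)$, and since at $\beta=\beta^*$ the two evaluator programs coincide, $h(\beta^*)=h^{\mathrm{F}}(\beta^*)$.

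Next, I would differentiate $F$ and $G$ in $\beta$ at $(\beta^*,0)$. A crucial simplification is that the $h'(\beta^*)$ and $h^{\mathrm{F}\prime}(\beta^*)$ terms multiply a difference $R_1(h,\beta)-R_1(h,\beta^*)$ (respectively, the analogous $r_1$ difference) that vanishes at $\beta=\beta^*$, so both assessment-channel derivatives drop out. Using the chain-rule decomposition $R_2(h,\beta)=r_1(a(h,\beta),\beta)\,a_2(h,\beta)+r_2(a(h,\beta),\beta)$, this yields
\[
\partial_\beta F\big|_{(\beta^*,0)}=r_1\,a_2+r_2,\qquad \partial_\beta G\big|_{(\beta^*,0)}=r_2,
\]
where the partial derivatives of $r$ and $a$ are evaluated at $(a(h(\beta^*),\beta^*),\beta^*)$ and $(h(\beta^*),\beta^*)$ respectively. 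By Assumption~\ref{assm_regularity1}, $a_2>0$ and $r_1>0$ (the latter because $R$ is strictly increasing in $h$ and the assessment channel acts through $a_1>0$), so $\partial_\beta F>\partial_\beta G>0$. The implicit function theorem then yields smooth branches $\hat{\beta}(\Delta_\mu)$ and $\hat{\beta}^{\mathrm{F}}(\Delta_\mu)$ with
\[
\left.\frac{d\hat{\beta}}{d\Delta_\mu}\right|_0=-\frac{1}{r_1 a_2+r_2},\qquad \left.\frac{d\hat{\beta}^{\mathrm{F}}}{d\Delta_\mu}\right|_0=-\frac{1}{r_2},
\]
so that for $|\Delta_\mu|$ sufficiently small, $|\hat{\beta}-\beta^*|<|\hat{\beta}^{\mathrm{F}}-\beta^*|$.

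It remains to verify that these local branches are indeed the least distorted SCEs in their respective models. When $\Delta_\mu<0$, Theorem~\ref{thm_direction} Part~A gives a unique Berk--NE that is self-confirming for small $|\Delta_\mu|$, and the implicit branch provides the unique nearby SCE in each model by the same arguments (the first-order model is structurally analogous, as noted). When $\Delta_\mu>0$, Theorem~\ref{thm_direction} Part~B Part~(i) singles out the SCE with the largest $\hat{\beta}$ (closest to $\beta^*$) as the locally stable one, and since $\partial_\beta F,\partial_\beta G>0$ the implicit branch lies strictly below $\beta^*$ and crosses the $45^{\circ}$ line from above, identifying it with exactly this least-distorted SCE.

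The main obstacle is the verification step just described, namely that the locally smooth branch produced by the implicit function theorem coincides with the least-distorted SCE of the first-order model, for which the extension of Theorems~\ref{thm_direction}--\ref{thm_magnitude} is asserted but not proved in the excerpt. Once this correspondence is in hand, the sign comparison $r_1 a_2>0$ delivers the proposition and the intuition: in our model, society's perceived productivity feeds back into perceived effort through $a_2>0$, adding the channel $r_1 a_2$ to the responsiveness of mean effective effort, so a smaller departure from $\beta^*$ suffices to absorb the outcome-mean gap induced by $\Delta_\mu$.
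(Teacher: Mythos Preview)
Your proof is correct and follows essentially the same approach as the paper: both arguments localize at $(\beta^*,0)$, observe that the assessment-derivative channel vanishes there (the paper phrases this as the first line of $\partial f/\partial y$ becoming negligible at $(\beta^*,\beta^*)$), and reduce the comparison to the extra term $r_a\cdot a_\beta>0$ that steepens the effective-effort response in our model. The only packaging difference is that the paper unifies both models in a single function $f(x,y)$ and differentiates in $y$, whereas you write the two SCE equations separately and compare their $\beta$-derivatives; the underlying computation is identical.
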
  

To develop intuitions, we decompose \( R(h,\beta) - R(h,\beta^*) \) as
\[
R(h,\beta) - R(h,\beta^*) = \underbrace{r\left(a(h,\beta),\beta\right) - r\left(a(h,\beta^*),\beta\right)}_{\text{(I)}} + \underbrace{r\left(a(h,\beta^*),\beta\right) - r\left(a(h,\beta^*),\beta^*\right)}_{\text{(II)} = R^{\mathrm{F}}(h,\beta)-R^{\mathrm{F}}(h,\beta^*)}. 
\]
On the right-hand side, term (I) captures the effect of misperceiving effort choice alone, while term (II) captures the effect of misperceiving effort productivity alone. The total effect is on the left-hand side.

Consider the case \(\Delta_{\mu} > 0\), in which equilibrium belief of effort productivity must belong to $[\ul\beta, \beta^*)\). Since (I) is negative over this domain, adding it to (II) lowers the left-hand side of the SCE-defining equation: $R^{\mathrm{F}}(h(\beta), x)-R^{\mathrm{F}}(h(\beta), \beta^*)=-\Delta_{\mu}$. To restore balance, society must increase $x$, thereby mitigating the distortion in perceived effort productivity. The argument for \(\Delta_{\mu} < 0\) is analogous.

The above reasoning fixes assessment across scenarios. In Appendix \ref{sec_proof}, we show that allowing assessment to vary has a second-order  effect on the least distorted SCE when ability misspecification is small, for reasons akin to the envelope theorem.

The implications of Proposition \ref{prop_decomposition} for other equilibrium outcomes are ambiguous. Consider, for instance, assessment intensity. Let $\delta_x$ denote society's first-order belief about effort productivity, and $\delta_{\delta_y}$ denote its second-order belief about the agent's first-order belief. Given these beliefs, optimal assessment $h(x,y)$ solves
\[\max_{h \in [0,1]} v_{\mathrm{E}}(a(h,y),x)-\kappa(h). \]
In our model, \( y = x \), hence the optimal assessment is $h(x,x)$. Under first-order misspecification, \( y = \beta^* \), hence the optimal assessment is $h(x,\beta^*)$. While \( x \) has a greater influence on \( h \) in our case, it is less also distorted by Proposition \ref{prop_decomposition}. The overall effect is  ambiguous. The same ambiguity applies to perceived and actual efforts, and to the agent’s market reward and objective welfare.

\vspace{5pt}

\emph{Double misspecification.} In this case, society mistakenly believes that the agent shares its first-order belief about the fundamental. While this change leaves the choices of effort and assessment unaffected relative to our model, it alters the KL minimization problem: given an initial belief \(\hat{\beta}\) about effort productivity, society continues to believe that the effort choice is \(a(h(\hat{\beta}), \hat{\beta})\), when updating its belief about the fundamental based on the observed outcomes. The KL minimization problem becomes  
\[
\min_{x \in [\ul\beta, \bar\beta]} \left| \Delta_{\mu} + \underbrace{r(a(h(\hat{\beta}), \hat{\beta}), x)}_{\text{equals } r(a(h(\hat{\beta}), x), x) \text{ in our model}} - r(a(h(\hat{\beta}), \beta^*), \beta^*) \right|.
\]

Despite this difference, double misspecification generates the same set of Berk--NE as our model, since \(x = \hat{\beta}\) in equilibrium. However, the learning foundations of these equilibria differ --- a distinction we explore in Appendix \ref{sec_learning}.

 \vspace{-10pt}
\paragraph{Heterogeneous priors.}  So far, we have assumed that E and M share the same prior, in order to capture misbeliefs broadly held at the societal level. Relaxing this assumption doesn't fundamentally change the analysis, provided that the problem of incentive provision is separated from the rest. 

 As already shown,  A's incentives depend solely on the assessment intensity they face; recall \eqref{eqn_aproblem}. Given $h$, effort strategy is $a(h,\cdot)$, and this is common knowledge. In the case where society members hold different beliefs about $\beta$, their inferences about A's effort choice differ. Now, suppose that E doesn't care about M's belief (in the linear-quadratic example when $\lambda_{\mathrm{A}}=0$). Then E's  problem remains as in \eqref{eqn_eproblem}. Let $\hat{\beta}_{\mathrm{E}}$ denote an equilibrium belief held by E. Equilibrium assessment (publicly observed) is then $h(\hat{\beta}_{\mathrm{E}})$, and equilibrium effort strategy $a(h(\hat{\beta}_{\mathrm{E}}),\cdot)$. For each party $i \in \{\mathrm{E}, \mathrm{M}\}$, their equilibrium belief $\hat{\beta}_i$ solves the  KL minimization problem: 
 \[
 \min_{\beta\in[\ul\beta,\bar\beta]}\; \bigl|\Delta_{i}+R(h(\hat{\beta}_{\mathrm{E}}),\beta)-R(h(\hat{\beta}_{\mathrm{E}}),\beta^*)\bigr|, \quad\text{where }\Delta_{i}\coloneqq\hat{\mu}_i-\mu^*.\]
 M's belief affects A's reward but not their incentives. To evaluate its impact on A, substitute $\hat{\beta}_{\mathrm{M}}$ for the M's belief about $\beta$ in Proposition \ref{prop_disparity}. 

 Turning to the learning dynamics, E's beliefs follow the exact same trajectories as before, whereas M's beliefs are passively updated from its prior based on the public history (of assessments and outcomes). The stability of an equilibrium is determined solely by that of E's  belief.

\vspace{-10pt} 

\paragraph{Explicit incentives.} 
Our focus on implicit incentive provision is motivated by applications. That said, the analysis extends naturally to settings in which the evaluator directly offers linear incentive contracts of the form $hX+w$, where $h$ represents the strength of explicit incentives and $w$ the base wage. In such environments, $\mu$ represents factors such as luck, teammate influence, spillover across tasks, etc., that cannot be directly controlled by the agent. Misspecified  learning is shaped by central considerations in contract theory such as the agent's risk attitude. The equilibrium consequences in low- and high-dimensional environments (with multiple contractible signals, agents, or tasks) and their learning foundations are subjects of future research. 

\vspace{-10pt}
\paragraph{Further application.} Our model has applications beyond discrimination. In Appendix \ref{sec_oversight}, we study a setting in which the evaluator is an oversight authority, the agent is a politician, and the market consists of voters. Society’s beliefs about the politician’s ability are distorted by partisanship, and learning about effort productivity interacts with voters’ forward-looking electoral behavior. The framework captures a range of phenomena in selective oversight and identifies new levers for promoting equality and efficiency.

\section{Related literature}\label{sec_literature}

\paragraph{Misspecified Bayesian learning.} 
Misspecified Bayesian learning has long been studied in statistical decision theory \citep{berk1966limiting}, with recent economic research emphasizing its interaction with strategic behavior; the seminal paper is  \cite{esponda2016berk}. See \cite{bohren2024misspecified} for a recent survey. We make several contributions to this literature.

First, we are the first to study misspecified learning in principal–agent relationships. The literature on behavioral incentive contracting, surveyed by \cite{kHoszegi2014behavioral}, typically examines how profit-maximizing principals exploit agents’ exogenously given biases, with no learning on either side. Misspecified learning has only recently been considered by \cite{niu2023motivated}, where a principal determines an agent’s overconfidence level once and for all, after which the agent acts alone as in \cite{heidhues2018unrealistic}. In our framework, it is the principal (the evaluator and the market), not the agent, who holds misspecified beliefs about the agent's innate ability. Misspecified learning about effort productivity then interacts with strategic incentive provision over time.

Second, we contribute to the comparative statics of Berk–Nash equilibria. We focus on Lyapunov locally asymptotically stable equilibria and use the implied out-of-equilibrium dynamics to obtain robust comparative statics for all stable equilibria. Conceptually, this is an application of the correspondence principle \citep{echen02,echenique2006correspondence}, although our stability notion and learning dynamics are new. Recently, \citet{ghosh2024robust} derive robust comparative statics for Berk–Nash equilibria in general Markov decision problems, but the mechanisms are unrelated to ours.  \cite{heidhues2018unrealistic} study how over- and under-confidence affect single-agent decision making with misspecified Bayesian learning. We draw a similar distinction, but unlike \cite{heidhues2018unrealistic}, we do not assume a unique equilibrium and instead examine the robust comparative statics of stable equilibria. 

Third, we analyze a setting with strategic uncertainty about the agent’s effort choice. Strategic uncertainty is, by definition, absent from misspecified-learning models with a single decision maker \citep{heidhues2018unrealistic, heidhues2021convergence, esponda2021asymptotic, fudenberg2021limit}; from multi-player models with publicly observable behavior \citep{ba2023multi}, and from \citet{murooka2021misspecified, murooka2023higher}, where players—whether correctly or incorrectly—believe that they know their opponents’ posterior beliefs about the fundamental and can therefore fully infer their behavior from commonly known  equilibrium strategies. By contrast, social learning models such as \citet{frick2020misinterpreting} and \citet{bohren2021learning} stress misbeliefs about opponents’ strategies as the main obstacle to information aggregation, while assuming correct beliefs about the fundamental. In our case, the agent’s equilibrium effort strategy is common knowledge, yet misbeliefs of the fundamental still lead to inaccurate inferences about the agent's behavior.

Fourth, we provide learning foundations for Lyapunov locally asymptotically stable equilibria, emphasizing the challenges posed by strategic uncertainty. Our analysis combines Bayesian tools—such as Laplace’s approximation—with stochastic approximation methods. The key mathematics references are \citet{benaim1999dynamics} and \citet{pemantle1990nonconvergence}. In economics, stochastic approximation has been used to study misspecified Bayesian learning in single-agent decision problems \citep{esponda2021asymptotic, heidhues2021convergence}, and in multi-player games without strategic uncertainty \citep{murooka2021misspecified, murooka2023higher}.\footnote{Beyond misspecified Bayesian learning, stochastic approximation has also been used to analyze long-run dynamics in stochastic fictitious play, evolutionary games, and social learning; see, e.g., \citet{benaim2006stochastic}, \cite{sandholm2011stochastic}, \cite{dasaratha2022learning}, and \cite{arieli2025hazards}.} The last three papers assume Gaussian outcomes that, without strategic uncertainty, imply (truncated) normal posteriors. We also use this parametric structure, but must additionally confront strategic uncertainty. Alternative approaches exist. \citet{fudenberg2021limit} establish convergence to uniformly strict Berk–Nash equilibria, a refinement that relies on finite actions and outcomes and does not narrow the equilibrium set in our setting with a continuum of actions, outcomes, and states. \citet{frick2023belief} also adopt Lyapunov asymptotic stability but assume finite states (or closely related structures), and their analyses are essentially conducted in single-agent environments.

\vspace{-10pt}
\paragraph{Implicit incentives.} The literature on career concerns, or implicit incentive provision through market beliefs is too vast to describe here. Our equilibrium analysis adopts the classical additive–normal structure pioneered by \cite{holmstrom1999managerial}, but collapses the environment to a single period following \citet{dewatripont1999economics,dewatripont1999economics2}. As already evident in our analysis, additivity allows us to capture the ability–effort trade-off central to the discrimination literature, while the Gaussian structure implies that the strength of implicit incentives is governed by the precision of the outcome as a signal of ability. Precision is chosen endogenously by the evaluator, yielding a simple parameterization of the incentive provision problem.

In the dynamic learning model presented in Appendix \ref{sec_learning}, the differences from existing career-concerns models are more pronounced. Traditional models assume that the agent’s ability is fixed over time and that the agent is forward-looking. Because of moral hazard, one-shot deviations from equilibrium effort may distort the market’s beliefs about ability, which in turn fuels further deviations by the agent. In the context of career concerns, such dynamic incentives have been studied by \cite{prat2014dynamic}, \cite{bonatti2017career}, and \cite{cisternas2018two}, among others. Such dynamic incentives are absent from our model, where the agent is myopic, and their innate ability is drawn independently each period. What is persistent is effort productivity, and learning interacts with the provision of implicit incentives by an evaluator who may be forward-looking or myopic. Misspecifications of ability yield persistent misbeliefs about effort productivity even in the long run.

\vspace{-10pt}
\paragraph{Discrimination.}  
The theoretical literature on discrimination has largely focused on statistical discrimination, exemplified by the Phelpsian \citep{aigner1977statistical} and Arrovian \citep{coateloury1993} traditions; see \cite{FANG2011133} for a survey. While recent work by \cite{echenique2025rationally} seeks to bridge these  via rationally inattentive assessment, both frameworks require agents to hold correct beliefs. Our analysis concerns a distinct source of discrimination: misspecifications hard-wired into agents' prior beliefs. This is  conceptually closer to what \cite{becker2010economics} calls taste-based discrimination, although it is belief-based and interacts with learning and incentive provision. The comparative statics, welfare predictions, and policy levers we derive cannot be captured by taste-based mechanisms alone.

Several recent papers examine discrimination that stem from inaccurate beliefs. \cite{bohrensystemic2025} study how discrimination in one domain spills into other domains via mechanisms that are distinct from ours. However, the kind of ``systemic'' discrimination documented in \cite{bohrensystemic2025} is conceptually related to the ability-effort trade-off studied in our model. Several recent studies reviewed below analyze discrimination models featuring passive misspecified learning. Unlike our active learning model, these decision makers do not adjust their behavior as they update their beliefs about the fundamental.

\cite{bohren2019dynamics} propose a method to disentangle discrimination based on erroneous beliefs from those of alternative forms. In \cite{bohren2019dynamics}, there is an evaluator of agents' performance, but no agency relationship or learning about agents' effort productivity. The interaction between an evaluator and an agent undertaking a costly effort, which is the core of our model, is absent. The authors establish, theoretically and experimentally, a reversal of discriminatory beliefs over time that is novel relative to traditional taste-based or statistical discrimination.

\cite{heidhues2025overconfidence} examine discriminatory outcomes that arise from in-group overconfidence about their own abilities. Their main findings concern how such overconfidence shapes beliefs about other traits of both in-groups and out-groups, emphasizing the interaction between learning and group identity.

\cite{he2023learning} study how misspecified learning about the data-generating process distorts predictions of future outcomes, with implications for discrimination, polarization, etc. Some of the methods used in Section \ref{sec_aa} of this paper parallel those in that work.

\appendix
\section{Proofs}\label{sec_proof}

\subsection{Sufficient conditions for Assumption \ref{assm_regularity1}}\label{sec_assumption1}
In the main text, we imposed assumptions on players' optimal behavior instead of on the primitive elements of the model. This was done for expediency. In this appendix, we spell out the assumptions on primitives that ensure Assumption \ref{assm_regularity1}. 

These primitives include the effort cost function, the 
assessment cost function, the effective effort function, and the evaluator’s valuation of effort:
\[
\begin{array}{rcl}
c : \mathbb{R}_+ &\to& \mathbb{R}, \quad a \mapsto c(a), \\[6pt]
\kappa : [0,1] &\to& \mathbb{R}, \quad h \mapsto \kappa(h), \\[6pt]
r : \mathbb{R}_+^2 &\to& \mathbb{R}, \quad (a,\beta) \mapsto r(a,\beta), \\[6pt]
v_{\mathrm{E}} : \mathbb{R}_+^2 &\to& \mathbb{R}, \quad (a,\beta) \mapsto v_{\mathrm{E}}(a,\beta).
\end{array}
\] 
In the last two cases, we extend the $\beta$-argument of the function to $\mathbb{R}_+$. 

\begin{assumption}\label{assm_primitives}
The above functions are all smooth:
\begin{enumerate}[(i)]
 \item $c(\cdot)$ is strictly increasing and strictly convex, with $c'(0)=0$ and $\lim_{a \to +\infty} c'(a) = +\infty$.

 \item $\kappa(\cdot)$ is strictly increasing and strictly convex, with $\kappa'(0)=0$ and $\lim_{h \to 1} \kappa'(h) = +\infty$.

 \item $r(\cdot,\cdot)$ is weakly concave in $a$ and has strictly increasing differences in $(a,\beta)$. Moreover, $r(0,\beta) \equiv r(a,0) \equiv 0$, $r_a(0,\beta) > 0$, and $\lim_{a \to +\infty} r_a(a,\beta)/c'(a) = 0$ for all $\beta>0$.

 \item $v_{\mathrm{E}}(\cdot,\cdot)$ is strictly increasing in both arguments, with $v_{\mathrm{E},a}(0,\beta) > 0$ for all $\beta>0$. It is also weakly concave in $a$ and has strictly increasing differences in $(a,\beta)$.
\end{enumerate}
\end{assumption}

\begin{proposition}\label{obs1}
Under Assumption \ref{assm_primitives}, the effort, effective effort, and assessment maps (with their $\beta$-argument extended to $\mathbb{R}_+$) are smooth functions: 
\begin{enumerate}[(i)]
 \item  $a(h,0)\equiv a(0,\beta)\equiv 0$, $a(\cdot,\cdot)$ is strictly increasing in $h$ for all $\beta>0$ and strictly increasing in $\beta$ for all $h\in(0,1]$.
 \item $R(\cdot,\cdot)$ satisfies the properties in Part (i).
 \item $h(0)=0$ and $h(\beta)\in(0,1)$ for all $\beta>0$. 
\end{enumerate}
\end{proposition}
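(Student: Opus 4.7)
The plan is to derive each claim from Assumption \ref{assm_primitives} by monotone comparative statics applied to the agent's and the evaluator's FOCs, with smoothness of both optimal strategies following from the implicit function theorem: the agent's FOC $hr_a(a,\beta)=c'(a)$ is uniquely and smoothly invertible because $hr_{aa}-c''<0$ by weak concavity of $r$ and strict convexity of $c$, and the evaluator's interior optimum is characterized by the FOC $v_{\mathrm{E},a}(a(h,\beta),\beta)\,a_h(h,\beta)=\kappa'(h)$, whose local smoothness is delivered by strict convexity of $\kappa$.

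For Part (i), I would first record that $r_a(a,\beta)>0$ whenever $\beta>0$: differentiating $r(a,0)\equiv 0$ in $a$ gives $r_a(a,0)\equiv 0$, and the strict increasing differences $r_{a\beta}>0$ then lift this to strict positivity for $\beta>0$. The boundary identities $a(h,0)=a(0,\beta)=0$ follow because in each case the objective collapses to $-c(a)$, uniquely maximized at $0$ via $c'(0)=0$ and strict convexity. For $h,\beta>0$, the objective $hr(a,\beta)-c(a)$ is strictly concave in $a$ and has strict increasing differences in $(a,h)$ (from $r_a>0$) and in $(a,\beta)$ (from $r_{a\beta}>0$), so Topkis delivers strict monotonicity of the unique maximizer in each argument; positivity of the optimum on this region follows from the Inada-type conditions $r_a(0,\beta)>0$ and $\lim_{a\to\infty} r_a(a,\beta)/c'(a)=0$.

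For Part (ii), the identities $R(0,\beta)=R(h,0)=0$ follow by composing $a(0,\beta)=a(h,0)=0$ with $r(0,\cdot)=r(\cdot,0)=0$. By the chain rule, $\partial R/\partial h=r_a\,a_h$ and $\partial R/\partial \beta=r_a\,a_\beta+r_\beta$; each summand is non-negative, and strict positivity on the indicated regions follows from strict positivity of $r_a$ and $r_\beta$ at points with $a,\beta>0$ (the latter obtained by the same lifting trick applied to $r(0,\beta)\equiv 0$) combined with the strict monotonicity of $a$ from Part (i).

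For Part (iii), when $\beta=0$ the agent exerts zero effort, so the evaluator's objective reduces to $v_{\mathrm{E}}(0,0)-\kappa(h)$, strictly decreasing in $h$, giving $h(0)=0$. When $\beta>0$, the upper bound $h(\beta)<1$ is immediate from $\kappa'(h)\to+\infty$ as $h\to 1$ dominating the bounded marginal benefit. For the lower bound $h(\beta)>0$, the Inada condition $\kappa'(0)=0$ together with $v_{\mathrm{E},a}(0,\beta)>0$ and $a(h,\beta)>0$ for any $h>0$ ensures the incremental benefit of moving $h$ slightly above zero strictly outweighs the cost. The main obstacle is making this last step rigorous regardless of whether $c''(0)$ vanishes, since the implicit derivative $a_h(0,\beta)=r_a(0,\beta)/c''(0)$ is infinite in the degenerate case; the cleanest remedy is an explicit order-of-growth comparison of benefit and cost as $h\downarrow 0$, exploiting that effort then responds super-linearly to $h$ while cost grows sub-quadratically, rather than a Taylor expansion at $h=0$.
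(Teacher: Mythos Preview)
Your approach matches the paper's: FOC analysis plus the implicit function theorem for Parts~(i)--(ii), and boundary evaluation of the evaluator's marginal benefit against $\kappa'$ at $h=0$ and $h\uparrow 1$ for Part~(iii). Your caveat about the degenerate case $c''(0)=0$ is well taken---the paper simply writes the marginal benefit at $h=0$ as $v_{\mathrm{E},a}(0,\beta)\,r_a(0,\beta)/c''(0)>0$ without comment, so your order-of-growth comparison is actually more careful on this point than the paper's own argument.
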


\begin{proof}
Part (i): For given $(h,\beta)$, $a(h,\beta)$ solves 
\begin{equation}\label{eqn_aproblemprime}
\tag{1'}\max_{a \geq 0} \  hr(a,\beta)-c(a).
\end{equation}
Since $hr(a,\beta)=0$ when $h=0$ or $\beta=0$, and $c(a)$ is strictly increasing in $a$,  it follows that $a(0,\beta)= a(h,0)\equiv 0$. 

The objective is strictly concave in $a$, with derivative $h r_a(a,\beta)-c'(a)$ with respect to (w.r.t.) $a$. So the solution is unique. 
When $h, \beta>0$, $h r_a(0,\beta)-c'(0)>0$ and $\lim_{a \rightarrow +\infty} h r_a(a,\beta)-c'(a)<0$ hold by Assumption \ref{assm_primitives}. It follows that $a(h,\beta)>0$ and is determined by the first-order condition
\begin{equation}\label{eqn_afoc}
hr_a(a,\beta)=c'(a).
\end{equation}
Since $c''-h r_{aa}>0$, we can determine $a(\cdot,\cdot)$ from the implicit function theorem, which also implies that $a(\cdot,\cdot)$ is smooth with partial derivatives: 
\begin{equation}\label{eqn_aprime}
    \frac{\partial a}{\partial h}=\left.\frac{r_a}{c''-h r_{aa}}\right\vert_{a=a(h,\beta)} \quad \text{ and } \quad  \frac{\partial a}{\partial \beta}=\left.\frac{hr_{a\beta}}{c''-h r_{aa}}\right\vert_{a=a(h,\beta)}.
    \end{equation}
These are strictly positive by Assumption \ref{assm_primitives} and the fact that $a(h,\beta)>0$.

\vspace{10pt}
\noindent Part (ii): Recall that $R(h,\beta)=r\left(a\left(h,\beta\right),\beta\right)$, where  $r(0,\beta)\equiv 0$. Combining this with $a(0,\beta), a(h,0)\equiv 0$ from Part (i) gives $R(0,\beta), R(h,0)\equiv 0$. 

When $h,\beta>0$,  
\begin{equation}\label{eqn_rprime}
\frac{\partial R}{\partial h}=\left.\frac{\partial r}{\partial a}\right\vert_{a=a(h,\beta)}\frac{\partial a}{\partial h}>0,  \quad  \frac{\partial R}{\partial \beta}=\left.\frac{\partial r}{\partial a}\right\vert_{a=a(h,\beta)}\frac{\partial a}{\partial \beta}+\left.\frac{\partial^2 r}{\partial a\partial \beta}\right\vert_{a=a(h,\beta)}>0,
\end{equation}
where the inequalities follow again from Assumption \ref{assm_primitives} and $a(h,\beta)>0$. 

\vspace{10pt}

\noindent Part (iii): For given $\beta$, $h(\beta)$ solves
\[
\max_{h \in [0,1]} \; V_{\mathrm{E}}(h,\beta)-\kappa(h),
\quad 
\text{where } V_{\mathrm{E}}(h,\beta)=v_{\mathrm{E}}(a(h,\beta),\beta).
\]
Since $a(h,0)\equiv 0$ by Part~(i), we have $V_{\mathrm{E}}(h,0)=v_{\mathrm{E}}(0,0)$, which is independent of $h$, while $\kappa(\cdot)$ is strictly increasing in $h$. Consequently, $h(0)=0$.

The derivative of the objective w.r.t. $h$ is 
\[v_{\mathrm{E},a}\left(a\left(h,\beta\right),\beta\right)\frac{\partial a}{\partial h}-\kappa'(h).\]
Evaluating this expression at $h=0$ and as $h \uparrow 1$, holding any $\beta>0$ fixed, yields
\[v_{\mathrm{E},a}(0,\beta)\frac{r_a(0,\beta)}{c''(0)}-0>0 \ \text{ and } \  v_{\mathrm{E},a}(a,\beta)\left.\frac{r_a(a,\beta)}{c''(a)}\right\vert_{a=a(1,\beta)}-\infty=-\infty.\]
Consequently, $h(\beta) \in (0,1)$ for all $\beta>0$. 
\end{proof}

Next, we impose more nuanced assumptions to ensure that $R(\cdot,\cdot)$ has strict increasing differences and that $h(\cdot)$ is strictly increasing. All hold in the setting of Section \ref{sec_example}. 

\begin{assumption}\label{assm_primitives2}
\begin{enumerate}[(i)]
\item $r(a,\beta)=\beta a$ and $c''' \in \bigl[0, \tfrac{(c'')^2}{c'}\bigr]$.\footnote{The condition on $c$ is satisfied by all power functions with exponent at least two, as well as the exponential function.}
\item  $v_{\mathrm{E}}(a,\beta)=\lambda_1 r(a,\beta)-\lambda_2 c(a)$ with $\lambda_1>0$ and $\lambda_2 \geq 0$.
\end{enumerate}
\end{assumption}

\begin{proposition}\label{prop_a2}
Suppose that, in addition to Assumption \ref{assm_primitives}, Part (i) of Assumption \ref{assm_primitives2} holds. Then $a(h,\beta)$ has weak increasing differences in $(h,\beta)$ and is weakly concave in $h$. Consequently, 
\begin{enumerate}[(i)]
    \item $R(h,\beta)$ has strict increasing differences in $(h,\beta)$.
    \item $V_{\mathrm{E}}(h,\beta)$ is weakly concave in $h$, hence $h(\beta)$ is unique for all $\beta \geq 0$. 
\end{enumerate}
If Part (ii) of Assumption \ref{assm_primitives2} also holds, then $V_{\mathrm{E}}(h,\beta)$ has strict increasing differences in $(h,\beta)$, so $h(\beta)$ is strictly increasing in $\beta$.\footnote{As will become clear, the last assumption aligns the objectives between A and E, so the increasing differences of $a(h,\beta)$ in $(h,\beta)$ extend to $V_{\mathrm{E}}(h,\beta)$. Without this assumption, we would in general obtain
\begin{align*}
\frac{\partial^2 V_{\mathrm{E}}}{\partial h \partial \beta}& =\left.\left[\left(\frac{\partial^2 v_{\mathrm{E}}}{\partial a^2}\frac{\partial a}{\partial \beta}+\frac{\partial^2 v_{\mathrm{E}}}{\partial a \partial \beta}\right)\frac{\partial a}{\partial h}+\frac{\partial v_{\mathrm{E}}}{\partial a} \frac{\partial^2 a}{\partial h\partial \beta}\right]\right\vert_{a=a(h,\beta)}\\
&=\left.\left[\frac{\partial^2 v_{\mathrm{E}}}{\partial a^2}\frac{c'}{(c'')^2}+\frac{\partial^2 v_{\mathrm{E}}}{\partial a \partial \beta}\frac{\beta}{c''}+\frac{\partial v_{\mathrm{E}}}{\partial a} \left(\frac{1-\frac{c'''c'}{(c'')^2}}{c''}\right)\right] \right\vert_{a=a(h,\beta)},
\end{align*}
which is strictly positive if and only if 
\[\left.\dfrac{\partial^2 v_{\mathrm{E}}}{\partial a^2}\right\vert_{a=a(h,\beta)}>-\left[\dfrac{c''}{c'}\left(\left.\beta\dfrac{\partial^2 v_{\mathrm{E}}}{\partial a \partial \beta}+\dfrac{\partial v_{\mathrm{E}}}{\partial a} \left(1-\dfrac{c'''c'}{(c'')^2}\right)\right)\right] \right\vert_{a=a(h,\beta)}.\]
This holds so long as $v_{\mathrm{E}}$ is not overly concave in $a$.} 
\end{proposition}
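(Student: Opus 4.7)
The linear specification $r(a,\beta)=\beta a$ reduces the agent's first-order condition $hr_a(a,\beta)=c'(a)$ to $c'(a)=h\beta$, so the effort map takes the closed form
\[
a(h,\beta)=g(h\beta), \quad \text{where } g\coloneqq (c')^{-1}.
\]
Differentiating the identity $c'(g(y))=y$ yields $g'(y)=1/c''(g(y))>0$ and $g''(y)=-c'''(g(y))/[c''(g(y))]^3$, so the lower bound $c'''\geq 0$ gives $g''\leq 0$. Substituting $y=c'(g(y))$ shows that the upper bound $c'''\leq (c'')^2/c'$ is algebraically equivalent to the auxiliary inequality
\[
g'(y)+y\,g''(y)\geq 0, \quad y\geq 0.
\]
The chain rule then delivers both claimed properties of $a(h,\beta)$: weak concavity in $h$ follows from $\partial^2 a/\partial h^2=\beta^2 g''(h\beta)\leq 0$, while weak increasing differences follow from $\partial^2 a/\partial h\,\partial\beta=g'(h\beta)+h\beta\,g''(h\beta)\geq 0$.

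For Part~(i), I would differentiate $R(h,\beta)=\beta g(h\beta)$ to obtain
\[
\frac{\partial^2 R}{\partial h\,\partial\beta}=\beta\bigl[2g'(h\beta)+h\beta\,g''(h\beta)\bigr].
\]
Writing the bracket as $g'+\bigl(g'+h\beta\,g''\bigr)$, both summands are nonnegative and the first is strictly positive, so the cross-partial is strictly positive for $\beta>0$; the boundary case $\beta=0$ in the definition of strict increasing differences reduces to the strict monotonicity of $g$, which is immediate. For Part~(ii), the chain-rule identity
\[
\frac{\partial^2 V_{\mathrm{E}}}{\partial h^2}=v_{\mathrm{E},aa}\bigl(a(h,\beta),\beta\bigr)\bigl(\partial a/\partial h\bigr)^2+v_{\mathrm{E},a}\bigl(a(h,\beta),\beta\bigr)\,\partial^2 a/\partial h^2
\]
is nonpositive by weak concavity of $v_{\mathrm{E}}$ in $a$, positivity of $v_{\mathrm{E},a}$, and the weak concavity of $a$ in $h$ just established. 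Combined with strict convexity of $\kappa$, the evaluator's objective is strictly concave in $h$, so $h(\beta)$ is unique for every $\beta\geq 0$.

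For the final claim under Assumption~\ref{assm_primitives2}(ii), I would exploit the alignment between the agent's and the evaluator's marginal valuations. Substituting $v_{\mathrm{E}}(a,\beta)=\lambda_1\beta a-\lambda_2 c(a)$ and eliminating $c'(a)$ via the agent's FOC yields the clean expressions
\[
\frac{\partial V_{\mathrm{E}}}{\partial h}=\beta^2 g'(h\beta)(\lambda_1-\lambda_2 h), \quad \frac{\partial^2 V_{\mathrm{E}}}{\partial h\,\partial\beta}=\beta\bigl[2g'(h\beta)+h\beta\,g''(h\beta)\bigr](\lambda_1-\lambda_2 h).
\]
Since $\kappa'(h)>0$ on $(0,1)$, any interior maximizer must make $\partial V_{\mathrm{E}}/\partial h$ strictly positive, which forces $\lambda_1-\lambda_2 h(\beta)>0$. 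All three factors in the cross-partial are therefore strictly positive at $h=h(\beta)$, so the implicit function theorem applied to the FOC $\partial(V_{\mathrm{E}}-\kappa)/\partial h=0$ gives $h'(\beta)>0$. The main technical subtlety is verifying the auxiliary inequality $g'+yg''\geq 0$ from the primitive upper bound on $c'''$; once that algebraic equivalence is in hand, the remainder is chain-rule bookkeeping together with a sign check at the optimum.
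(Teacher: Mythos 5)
Your proof is correct and follows essentially the same route as the paper: you derive the closed form $a(h,\beta)=(c')^{-1}(h\beta)$, compute the relevant partials in closed form, and check signs. The only cosmetic differences are (a) you express everything through $g=(c')^{-1}$ and recast the primitive bound $c'''\leq (c'')^2/c'$ as the auxiliary inequality $g'+yg''\geq 0$, whereas the paper manipulates $c''$ and $c'''$ directly via the agent's first-order condition; and (b) for the final claim, you verify $\lambda_1-\lambda_2 h(\beta)>0$ at the optimum and invoke the implicit function theorem, whereas the paper first restricts the domain (without loss) to $h<\lambda_1/\lambda_2$ and then establishes strict increasing differences of $V_{\mathrm{E}}$ there, appealing to monotone comparative statics. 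These are the same observation dressed differently; one small caveat is that your IFT argument delivers $h'(\beta)>0$ directly but does not literally establish the intermediate assertion of the proposition (strict increasing differences of $V_{\mathrm{E}}$ over the relevant domain), although that follows from exactly the same sign computation.
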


\begin{proof}
When $r(a,\beta)=\beta a$, we have $a(h,\beta)=(c')^{-1}(h\beta)$, 
with partial derivatives
\[\frac{\partial a}{\partial h}=\frac{\beta}{c''}, \ \frac{\partial a}{\partial \beta}=\frac{h}{c''}, \ \frac{\partial^2 a}{\partial h^2}=-\frac{c'''}{c''}\left(\frac{\partial a}{\partial h}\right)^2, \ \frac{\partial^2 a}{\partial h \partial \beta}=\frac{1-\frac{c'''c'}{(c'')^2}}{c''}.\]
Among these, the second derivative is weakly negative if $c''' \geq 0$, whereas the cross-partial is weakly positive if $c''' \leq (c'')^2/c'$. When both conditions are met, $\partial^2 a/\partial h^2 \leq 0$ and $\partial a^2/\partial h\partial \beta \geq 0$ as desired.

\vspace{5pt}
\noindent Part (i): The expression for $\partial R/\partial h$ is given in \eqref{eqn_rprime}. Simplifying it  using A's first-order condition \eqref{eqn_afoc}, gives \[\frac{\partial R}{\partial h}=r_a(a(h,\beta),\beta)\frac{\partial a}{\partial h} = \frac{c'(a(h,\beta))}{h}\frac{\partial a}{\partial h}.\] Differentiating w.r.t. $\beta$ yields 
\[
\frac{\partial^2 R}{\partial h\partial \beta }=
\frac{1}{h}\left[c''(a(h,\beta))\frac{\partial a}{\partial \beta}\frac{\partial a}{\partial h}+c'(a(h,\beta))\frac{\partial^2 a}{\partial h\partial \beta} \right].\]
Since $\partial a/\partial h, \partial a/\partial \beta>0$ and  $\partial^2 a/\partial h\partial \beta \geq 0$, the right-hand side is strictly positive whenever $h, \beta>0$, so that $a(h,\beta)>0$ and hence $c''(a(h,\beta))>0$. 

\vspace{5pt}

\noindent Part (ii): Straightforward algebra shows that 
\[\frac{\partial^2 V_{\mathrm{E}}}{\partial h^2}=\left.\frac{\partial^2 v_{\mathrm{E}}}{\partial a^2}\right\vert_{a=a(h,\beta)}\left(\frac{\partial a}{\partial h}\right)^2+\left.\frac{\partial v_{\mathrm{E}}}{\partial a}\right\vert_{a=a(h,\beta)}\frac{\partial^2 a}{\partial h^2}.\]
Under the assumptions $\partial v_{\mathrm{E}}/\partial a>0$ and $\partial^2 v_{\mathrm{E}}/\partial a^2 \leq 0$,  the right-hand side is weakly negative if $\partial^2 a/\partial h^2 \leq 0$.

\vspace{5pt}
Finally, suppose that $v_{\mathrm{E}}(a,\beta)=\lambda_1 r(a,\beta)-\lambda_2 c(a)$, with $\lambda_1>0$ and $\lambda_2 \geq 0$. We verify that
$V_{\mathrm{E}}(h,\beta)$ has strict increasing differences in $(h,\beta)$, provided that  $a(h,\beta)$ has weak increasing differences in $(h,\beta)$ (as is already assumed). Our proof exploits the alignment between A and E's objectives. Specifically,  
\begin{align*}
\frac{\partial V_{\mathrm{E}}}{\partial h} =\left[\lambda_1 r_a\left(a\left(h,\beta\right),\beta\right)-\lambda_2 c'(a(h,\beta)\right]\frac{\partial a}{\partial h}=
\left(\frac{\lambda_1}{h}-\lambda_2 \right)c'(a(h,\beta))\frac{\partial a}{\partial h}, 
\end{align*}
where the second equality uses A's first-order condition  \eqref{eqn_afoc}. When $h \geq \lambda_1/\lambda_2$, we have $\lambda_1/h-\lambda_2 \leq 0$, so $\frac{\partial V_{\mathrm{E}}}{\partial h}(h,\beta)-\kappa'(h)<0$. Therefore, it is without loss of generality (w.l.o.g.) to restrict the domain of $h$ to $(0, \min\{1, \lambda_1/\lambda_2\})$ in E's problem \eqref{eqn_eproblem}. Over this constrained domain,   
\[\frac{\partial^2 V_{\mathrm{E}}}{\partial h \partial \beta}=\left(\frac{\lambda_1}{h}-\lambda_2 \right)\left[c''(a(h,\beta))\frac{\partial a}{\partial \beta}\frac{\partial a}{\partial h}+c'(a(h,\beta))\frac{\partial^2 a}{\partial h\partial \beta} \right]>0,\]
so $h(\beta)$ is strictly increasing in $\beta$. 
\end{proof}

\subsection{Genericity of Assumption \ref{assm_regularity2}}\label{sec_assumption2}
\begin{proposition}\label{prop_genericity}
Assumption \ref{assm_regularity2} holds generically. 
\end{proposition}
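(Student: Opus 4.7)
The plan is to embed $\tilde\psi$ in a one-parameter family, use the preimage theorem to show that zero is a regular value of $\tilde\psi(\beta)-\beta$ for almost every parameter value, and conclude via Sard's lemma. The natural perturbation parameter is the ability misspecification $\Delta_\mu$, since it enters the defining equation for $\psi$ additively; any other parameter that varies $\psi$ smoothly and nontrivially would serve.

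First, I would work on the open set $U\subset (\ul\beta,\bar\beta)\times\mathbb{R}$ where the untruncated $\psi(\beta;\Delta_\mu)$ lies in the interior $(\ul\beta,\bar\beta)$. On $U$, the defining identity $R(h(\beta),\psi(\beta;\Delta_\mu))-R(h(\beta),\beta^*)=-\Delta_\mu$ together with Assumption \ref{assm_regularity1}(ii) and the implicit function theorem gives $\psi$ as a smooth function of $(\beta,\Delta_\mu)$, with
\[
\frac{\partial \psi}{\partial \Delta_\mu}=-\frac{1}{R_x(h(\beta),\psi(\beta;\Delta_\mu))}\neq 0,
\]
where $R_x$ denotes the partial of $R$ in its second argument.

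Second, I would define $F:U\to\mathbb{R}$ by $F(\beta,\Delta_\mu)=\psi(\beta;\Delta_\mu)-\beta$. Because $\partial F/\partial\Delta_\mu\neq 0$ on $U$, the differential $DF$ is everywhere surjective, so $0$ is a regular value of $F$. By the preimage theorem, $F^{-1}(0)$ is a smooth one-dimensional submanifold of $U$. Applying Sard's lemma to the smooth projection $\pi:F^{-1}(0)\to\mathbb{R}$, $(\beta,\Delta_\mu)\mapsto\Delta_\mu$, the set of critical values of $\pi$ has Lebesgue measure zero. At any regular value $\Delta_\mu$ of $\pi$, every $(\hat\beta,\Delta_\mu)\in F^{-1}(0)$ satisfies $\partial F/\partial\beta=\psi'(\hat\beta;\Delta_\mu)-1\neq 0$. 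Since $\tilde\psi\equiv\psi$ in a neighborhood of any interior fixed point, $\tilde\psi$ strictly crosses the 45 degree line at every such $\hat\beta$.

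The main obstacle, and what I expect to require the most care, is the boundary behavior, because the truncation $\tilde\psi=\max\{\ul\beta,\min\{\bar\beta,\psi\}\}$ makes $\tilde\psi$ non-smooth where $\psi$ meets $\ul\beta$ or $\bar\beta$. I would handle this by a separate direct argument: if $\hat\beta=\ul\beta$ is a fixed point with $\psi(\ul\beta;\Delta_\mu)<\ul\beta$ strictly, then $\tilde\psi(\beta)=\ul\beta$ on a right-neighborhood of $\ul\beta$, so $\tilde\psi(\beta)-\beta=\ul\beta-\beta<0$ for $\beta$ slightly above $\ul\beta$, which already counts as a strict crossing (from above). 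The degenerate case $\psi(\ul\beta;\Delta_\mu)=\ul\beta$ exactly, i.e., a tangential meeting of $\psi$ with the lower boundary, is excluded for all $\Delta_\mu$ outside an additional measure-zero set by applying Sard's lemma once more to the smooth map $\Delta_\mu\mapsto\psi(\ul\beta;\Delta_\mu)-\ul\beta$; the argument for $\bar\beta$ is symmetric. Removing the (at most countable) union of these measure-zero exceptional sets yields a full-measure set of parameter values on which every fixed point of $\tilde\psi$ is a strict crossing, establishing Assumption \ref{assm_regularity2} generically.
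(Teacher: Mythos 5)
Your argument is correct and essentially coincides with the paper's: both apply Sard's lemma to conclude that, for almost every $\Delta_\mu$, all interior fixed points of $\tilde\psi$ are regular crossings, and both dispose of boundary fixed points by a separate, direct sign argument. Your parametric-transversality formalism (preimage theorem plus Sard applied to the projection $\pi\vert_{F^{-1}(0)}$) reduces exactly to the paper's direct one-variable application of Sard to $f(\beta)=R(h(\beta),\beta)-R(h(\beta),\beta^*)$, since $F^{-1}(0)$ is precisely the graph of $\Delta_\mu=-f(\beta)$ and $\pi$ restricted to that graph is $-f$.
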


\begin{proof}

Define
\begin{equation}\label{eqn_f}
    f(\beta) \coloneqq R(h(\beta),\beta)-R(h(\beta),\beta^*)\ \text{ for } \beta \geq 0, 
\end{equation}
with derivative w.r.t. $\beta$ 
\[f'(\beta)=\left[R_h(h(\beta),\beta)-R_h(h(\beta),\beta^*)\right]h'(\beta)+R_{\beta}(h(\beta),\beta).\]
By Sard's lemma, there exists a set $C$ of measure zero such that for all $\Delta_{\mu} \notin C$,  $f'(\beta)\neq 0$ for all $\beta \in f^{-1}(-\Delta_{\mu})$.

Fix any $\Delta_{\mu} \notin C$, and let $\hat{\beta}$ be a Berk--NE belief. 
If $\hat{\beta} \in (\underline{\beta},\bar{\beta})$, then $\hat{\beta}$ is an SCE belief. 
It follows that on a small neighborhood of $\hat{\beta}$, Equation \eqref{eqn_perfectperception}: 
\[
R\bigl(h(\beta),x\bigr)-R\bigl(h(\beta), \beta^*\bigr)=-\Delta_{\mu},
\]
has a root $\psi(\beta)$ satisfying $\psi(\beta)=\tilde{\psi}(\beta)$ and $\psi(\hat{\beta})=\hat{\beta}$. 
By the implicit function theorem,
\[
\psi'(\beta)
= 
\frac{\bigl[R_h\bigl(h(\beta),\psi(\beta)\bigr)-R_h\bigl(h(\beta),\beta^*\bigr)\bigr]\,h'(\beta)}
{R_{\beta}\bigl(h(\beta),\psi(\beta)\bigr)}
\]
on the neighborhood. Evaluating at $\beta=\hat{\beta}$ using $\psi(\hat{\beta})=\hat{\beta}$ yields 
\[\psi'(\hat{\beta})=\frac{\bigl[R_h\bigl(h(\hat{\beta}),\hat{\beta}\bigr)-R_h\bigl(h(\hat{\beta}),\beta^*\bigr)\bigr]\,h'(\hat{\beta})}
{R_{\beta}\bigl(h(\hat{\beta}),\hat{\beta}\bigr)}\neq 1,\] where the inequality follows from $f'(\hat{\beta})\neq 0$.

Now suppose $\hat{\beta}=\bar{\beta}$. Since the set 
\[
\{\beta\in[\underline{\beta},\bar{\beta}]: \tilde{\psi}(\beta)=\beta\}
\]
is compact and therefore has a finite open cover, either  $\tilde{\psi}(\beta)\neq \beta$ on a small neighborhood $(\bar\beta-\epsilon,\bar\beta)$, 
or $\tilde{\psi}(\beta)= \beta$ on a small neighborhood $(\bar\beta-\epsilon,\bar\beta)$. 

In the first case, $\tilde{\psi}(\beta)=\bar{\beta}$ over $(\bar\beta-\epsilon,\bar\beta]$, 
so $\tilde{\psi}$ crosses the $45^\circ$ line strictly from above at $\bar{\beta}$. 

In the second case, $\tilde{\psi}(\beta)= \psi(\beta)=\beta$ and so $\psi'(\beta)=1$ over $(\bar\beta-\epsilon,\bar\beta)$. However, this is impossible given what has been show above.

Finally, the case $\hat{\beta}=\underline{\beta}$ is analogous.
\end{proof}

\subsection{Omitted proofs}\label{sec_op}
\paragraph{Proof of Theorem \ref{thm_direction}.} In the main text, we assumed that the function $\psi:[\underline{\beta},\bar{\beta}]\rightarrow \mathbb{R}_+$ is well defined, where each $\psi(\beta)$ solves equation \eqref{eqn_perfectperception} in $x$:
\[
R\bigl(h(\beta),x\bigr)-R\bigl(h(\beta),\beta^*\bigr)=-\Delta_{\mu}
\]
for $\beta \in [\underline{\beta},\bar{\beta}]$. For a fixed interval $[\underline{\beta},\bar{\beta}]\subset \mathbb{R}_{++}$, this assumption holds in either of the following situations:
\begin{itemize}
    \item $-\left(\lim_{x \rightarrow +\infty} R(h(\underline{\beta}),x)-R(h(\underline{\beta}), \beta^*)\right)<\Delta_{\mu}<0$.
    \item $0<\Delta_{\mu}<R(h(\underline{\beta}), \beta^*)-R(h(\underline{\beta}),\underline{\beta})$.
\end{itemize}

We now complete the proof for cases where the conditions above may fail.

Consider Case A: $\Delta_{\mu}<0$, holding $[\underline{\beta},\bar{\beta}]$ fixed. First, note that $\tilde{\psi}(\beta)$---defined as the choice of $x\in[\underline{\beta},\bar{\beta}]$ that minimizes the gap between the two sides of \eqref{eqn_perfectperception}---must satisfy $\tilde{\psi}(\beta)>\beta^*$. Second, consider an increase of $\beta$ from $\beta'$ to $\beta''$, both in $[\underline{\beta},\bar{\beta}]$. Since $h(\cdot)$ is strictly increasing and $R(\cdot,\cdot)$ has strictly increasing differences, the left-hand side of \eqref{eqn_perfectperception} strictly increases in $\beta$ for any fixed $x>\beta^*$. To minimize the gap between the two sides of \eqref{eqn_perfectperception}, $\tilde{\psi}(\beta)$ must decrease, with the decrease being strict whenever $\tilde{\psi}(\beta'')<\bar{\beta}$. The remainder of the proof proceeds exactly as in the main text.

The proof for Case B: $\Delta_{\mu}>0$, Parts (i) and (ii), follows analogously.

Finally, consider Case B Part (iii). We wish to demonstrate that for fixed $\bar{\beta}>\beta^*$ and all $\Delta_{\mu}>0$, there exists $\underline{\underline{\beta}} \in (0,\beta^*)$ such that for all $\underline{\beta}\in(0,\underline{\underline{\beta}})$, $\underline{\beta}$ is a Berk--NE belief when the belief space is $[\underline{\beta},\bar{\beta}]$.

Recall the function defined in \eqref{eqn_f}:
\[
f: \mathbb{R}_+ \rightarrow \mathbb{R},\quad  f(\beta)= R\bigl(h(\beta),\beta\bigr)-R\bigl(h(\beta),\beta^*\bigr).
\]
Note that $f(0)=f(\beta^*)=0$, and that $f<0$ and is nonmonotone on $(0,\beta^*)$. Therefore, for any given $\Delta_{\mu}>0$, there exists $\underline{\underline{\beta}} \in (0,\beta^*)$ such that $f(\beta)>-\Delta_{\mu}$ for all $\beta\in[0,\underline{\underline{\beta}})$. Pick any $\underline{\beta}\in(0,\underline{\underline{\beta}})$. If $\psi(\underline{\beta})$ is well defined, the proof proceeds exactly as in the main text. If $\psi(\underline{\beta})$ is not defined, then
\[
R\bigl(h(\underline{\beta}),0\bigr) - R\bigl(h(\underline{\beta}),\beta^*\bigr) = 0 - R\bigl(h(\underline{\beta}),\beta^*\bigr) > -\Delta_{\mu},
\]
implying that $\tilde{\psi}(\underline{\beta})=\underline{\beta}$ minimizes the gap between the two sides of \eqref{eqn_perfectperception}, as desired. \qed

\vspace{-10pt}

\paragraph{Proof of Proposition \ref{prop_disparity}.} We derive only the expression for the expected market reward.

Fix an SCE, and let $\hat{\beta}$ denote society's belief about  
the productivity of agent $i\in\{m,w\}$.
The assessment applied to the agent is $h(\hat{\beta})$, and their effort is perceived as 
$
\hat{a}=a\bigl(h(\hat{\beta}),\hat{\beta}\bigr).
$ 
Conditional on observing $X$, the market rewards the agent by
\[
\mathbb{E}_{\delta_{(\hat{\mu},\hat{\beta},\hat{a})}}
\bigl[\eta + v_{\mathrm{M}}(a,\beta)\mid X\bigr]
=
\hat{\mu}
+ h(\hat{\beta})
\left[X - \hat{\mu} - R\bigl(h(\hat{\beta}),\hat{\beta}\bigr)\right]
+ v_{\mathrm{M}}(\hat{a},\hat{\beta}).
\]
By the definition of SCE, the true mean of $X$,
$
\mu^* + R\bigl(h(\hat{\beta}),\beta^*\bigr),
$
equals the perceived mean,
$
\hat{\mu} + R\bigl(h(\hat{\beta}),\hat{\beta}\bigr).
$
Hence the second term on the right-hand side has a true mean of zero, and the entire expression has a true mean of $\hat{\mu}+v_{\mathrm{M}}(\hat{a},\hat{\beta})$. \qed 

\vspace{-10pt}

\paragraph{Proof of Proposition \ref{prop_decomposition}.} In Section \ref{sec_discussion}, we used  $\delta_x$ to denote E's first-order belief about A's productivity, and $\delta_{\delta_y}$ to denote E's second-order belief about A's first-order belief. Letting $y=x$ and $y=\beta^*$ gives rise to our model and the first-order misspecification scenario in \cite{murooka2021misspecified}, respectively. 

The optimal assessment for given $(x,y) \in \mathbb{R}_{+}^2$ solves  
\[\max_{h\in [0,1]}\  v_{\mathrm{E}}(a(h,y),x)-\kappa(h).\]
Under Assumption \ref{assm_primitives}, the solution $h(x,y)$ is unique for all $(x,y)\in\mathbb{R}_+^2$, belongs to $(0,1)$ for all $(x,y)\in \mathbb{R}_{++}^2$, and is strictly increasing in $x$ for all $y>0$  (details are available upon request). 

Define  
\[
f(x,y) 
:= r\bigl(a(h(x,y),y),x\bigr) - r\bigl(a(h(x,y),\beta^*),\beta^*\bigr) \quad \text{for} \quad  (x,y)\in\mathbb{R}_+^2,
\]
noting that (i) 
$f(\beta^*,\beta^*)=0$; (ii) $f(x,\beta^*), f(x,x)<0$ on $(0,\beta^*)$; (iii) $f(x,\beta^*), f(x,x)>0$ on $(\beta^*,+\infty)$. Therefore both functions $x \mapsto f(x,\beta^*)$ and $x \mapsto f(x,x)$ are increasing in $x$ near $x=\beta^*$. For small $|\Delta_{\mu}|$, the roots of $f(x,x)=-\Delta_{\mu}$ and $f(x,\beta^*)=-\Delta_{\mu}$ that are closest to 
$\beta^*$ correspond to the least distorted SCE beliefs in our model and that in MY21, respectively. 

To compare $f(x,x)$ and $f(x,\beta^*)$, differentiate $f(x,y)$ with respect to $y$:
\begin{align*}
\frac{\partial f}{\partial y} = & \left[r_a\left(a(h(x,y),y), x\right) \frac{\partial a}{\partial h} (h(x, y), y) -  r_a\left(a(h(x,y),\beta^*), \beta^*\right)\frac{\partial a}{\partial h} (h(x, y), \beta^*)\right] \frac{\partial h}{\partial y}  \\
&+ r_a\left(a(h(x,y),y), x\right)\frac{\partial a}{\partial \beta}\left(h(x,y),y\right). 
\end{align*}
When $(x,y) \approx (\beta^*,\beta^*)$, the first line on the right-hand side becomes 
negligible, whereas the second line is strictly positive. 
Therefore, $f(x,x)$ increases faster than $f(x,\beta^*)$ in a small neighborhood 
of $x=\beta^*$, hence the desired results. \qed 

\vspace{-10pt}

\paragraph{Proof of Theorem \ref{thm_colorsighted}.} W.l.o.g. write 
\[
\{j:\Delta_j>0\}=\{1,\dots,K\} 
\quad \text{ and } \quad 
\{j:\Delta_j<0\}=\{K+1,\dots,J\}.
\]
Define
\[
D \coloneqq \prod_{j=1}^K [\beta_j^*,\bar\beta] 
        \times \prod_{j=K+1}^J [\ul\beta,\beta_j^*].
\]
An inspection of the KL minimization problem \eqref{eqn_dkl_aa} reveals that any solution must lie in $D$. Hence, w.l.o.g., we restrict the domain of $\bm{\beta}$ to $D$.
 
\vspace{8pt}

\noindent Parts (i) and (ii): Fix any $\bm{\beta} \in D$. The same argument used to establish Lemma \ref{obs1} implies that in the current setting, the optimal assessment for $\bm{\beta}$ --- which solves \eqref{eqn_eproblem_aa} --- is uniquely determined by the first-order condition 
\[\sum_{j=1}^J \alpha_j \frac{\partial V_{\mathrm{E}}}{\partial h}(h, \beta_j)=\kappa'(h).\]
Since $V_{\mathrm{E}}(h,\beta)$ has strict increasing differences in $(h,\beta)$, the left-hand side of the first-order condition is bounded below by 
\[\sum_{j=1}^J \alpha_j \frac{\partial V_\mathrm{E}}{\partial h}(h,\beta_j)>\sum_{j=K+1}^J \alpha_j   \frac{\partial V_\mathrm{E}}{\partial h}(h, \beta_j^*).\]
Therefore, $h(\bm{\beta})$ is greater than the unique root of 
\[\sum_{j=K+1}^J \alpha_j  \frac{\partial V_{\mathrm{E}}}{\partial h}(h,\beta_j^*)=\kappa'(h),\]
which we denote by $\ul{\ul h}$ to highlight its independence from the vector of ability misspecifications   $\bm{\Delta}= (\Delta_j)_{j=1}^J$. 

Let $|\Delta_j|$ be small for $j=1,\cdots J$. For each $j$, define $\psi_j (\bm{\beta})$ as the root of 
\[R\left(h(\bm{\beta}),x\right)-R\left(h(\bm{\beta}),\beta_j^*\right) = -\Delta_j,\] 
When $\Delta_j<0$, 
\[-\Delta_j=R\left(h(\bm{\beta}),x\right)-R\left(h(\bm{\beta}),\beta_j^*\right)>R(\ul{\ul h}, \psi_j(\bm{\beta}))-R(\ul{\ul h}, \beta_j^*)>0,\]
where the first inequality uses the assumption that $R(h,\beta)$ has strict increasing differences, and the second inequality the fact that $\psi_j(\bm{\beta})>\beta_j^*$. 
It follows that $\psi_j(\bm{\beta})-\beta_j^* = O(|\Delta_j|)$. The argument for $\Delta_j>0$ is analogous. 

By the implicit function theorem, 
\[\frac{\partial \psi_j}{\partial \beta_k}=g_j(\bm{\beta}; \Delta_j)\frac{\partial h}{\partial \beta_k}, \text{ where }\]
\[g_j(\bm{\beta};\Delta_j)= \frac{\frac{\partial R}{\partial h}(h(\bm{\beta}), \beta_j^*)-\frac{\partial R}{\partial h}(h(\bm{\beta}), \psi_j(\bm{\beta}))}{\frac{\partial R}{\partial \beta}(h(\bm{\beta}), \psi_j(\bm{\beta}))}.\]
It follows that $g_j(\bm{\beta};\Delta_j) = O(|\Delta_j|)$, with $\sgn(g_j(\bm{\beta};\Delta_j))=\sgn(\Delta_j)$.

Define $g(\bm{\beta};\bm\Delta)= (g_j(\bm{\beta}; \Delta_j))_{j=1}^J$ and $\bm{\psi}(\bm{\beta})= (\psi_j(\bm{\beta}))_{j=1}^J$. 
Straightforward algebra shows that the Jacobian $\nabla_{\bm{\beta}}\bm{\psi}$ takes the rank-one form:
\[
\nabla_{\bm{\beta}}\bm{\psi}=g(\bm{\beta}; \bm{\Delta})\,(\nabla_{\bm{\beta}} h)^{\top},
\]
Therefore, the Euclidean norm of the Jacobian equals the product of the factor norms: 
\[
\|\nabla_{\bm{\beta}}\bm{\psi}\|_2=\|g(\bm{\beta}; \bm{\Delta})\|_2 \; \| \nabla_{\bm{\beta}} h\|_2.
\]
Since $\|g(\bm{\beta}; \bm{\Delta})\|_2 = O(\|\bm{\Delta}\|_2)$, and $\sup_{\bm{\beta} \in D} \|\nabla_{\bm{\beta}}h\|_2$ is finite and independent of $\bm{\Delta}$, the right-hand side vanishes as $\|\bm\Delta\|_2 \to 0$. It follows that for sufficiently small $\|\bm{\Delta}\|_2$, the function $\bm{\psi}$ restricts to a contraction on $D$. The contraction mapping theorem guarantees the existence of a unique fixed point, which is globally asymptotically stable (in the Lyapunov sense) relative to $D$. It follows that the fixed point is locally asymptotically stable.\footnote{Global asymptotic stability may fail on the larger unrestricted domain $[\underline{\beta},\bar{\beta}]^J$ during dynamic learning studied in Appendix \ref{sec_learning_aa}; hence  the theorem establishes only local asymptotic stability.}

\vspace{8pt}

\noindent Parts (iii) and (iv): Let $\hat{\bm{\beta}}$ denote the fixed point derived above. We examine its comparative statics with respect to a parameter $\rho \in \mathbb{R}$, whose interpretation depends on the context. Defining $\Psi(\bm{\beta}, \rho)=\bm{\psi}(\bm{\beta}, \rho)-\bm{\beta}$, the fixed point satisfies $\Psi(\hat{\bm{\beta}}(\rho),\rho)=\bm{0}$. Applying the implicit function theorem yields
\[
\frac{\partial \hat{\bm{\beta}}}{\partial \rho}=-\left.(\nabla_{\bm\beta} \Psi)^{-1}\; \frac{\partial \Psi}{\partial \rho}\right\vert_{\bm{\beta}=\hat{\bm{\beta}}(\rho)}.
\]
Observing that $\nabla_{\bm{\beta}}\Psi = -Id_J + \nabla_{\bm{\beta}} \bm{\psi}$ is a rank-one perturbation of the identity matrix, we apply the Sherman--Morrison formula to compute its  inverse:

\begin{fact}[\citealp{sherman1950adjustment}]\label{fact_shermanmorrison}
Let $A$ be an $J \times J$ invertible matrix, and let $u$ and $v$ be $J$-vectors. $A+uv^{\top}$ is invertible if and only if $1+v^{\top}A^{-1}u \neq 0$, in which case
\[
(A+u v^{\top})^{-1}=A^{-1}-\frac{A^{-1}u v^{\top}A^{-1}}{1+v^{\top}A^{-1}u}.
\]
\end{fact}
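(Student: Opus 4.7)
The plan is to verify the formula by direct multiplication and then establish the ``if and only if'' characterization of invertibility. First I would handle the trivial case $u=0$ separately: then $A+uv^{\top}=A$ is invertible, $1+v^{\top}A^{-1}u=1\neq 0$, and both sides of the displayed identity reduce to $A^{-1}$, so the statement is immediate. Henceforth assume $u\neq 0$, and let $s\coloneqq v^{\top}A^{-1}u$, which is a scalar.

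For the ``if'' direction, suppose $1+s\neq 0$, and denote the candidate inverse by
\[
B\coloneqq A^{-1}-\frac{A^{-1}u\, v^{\top}A^{-1}}{1+s}.
\]
I would compute $(A+uv^{\top})B$ by expanding and using associativity together with the fact that $v^{\top}A^{-1}u=s$ is a scalar that commutes with every matrix it multiplies:
\[
(A+uv^{\top})B=I+uv^{\top}A^{-1}-\frac{uv^{\top}A^{-1}+u(v^{\top}A^{-1}u)v^{\top}A^{-1}}{1+s}
=I+uv^{\top}A^{-1}-\frac{(1+s)\,uv^{\top}A^{-1}}{1+s}=I.
\]
An analogous computation (or appeal to the fact that a one-sided inverse of a square matrix is two-sided) yields $B(A+uv^{\top})=I$, establishing both invertibility of $A+uv^{\top}$ and the stated formula for its inverse.

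For the ``only if'' direction, suppose $1+s=0$. I would exhibit a nonzero vector in the kernel of $A+uv^{\top}$: take $w\coloneqq A^{-1}u$, which is nonzero because $u\neq 0$ and $A^{-1}$ is invertible. Then
\[
(A+uv^{\top})w=AA^{-1}u+u(v^{\top}A^{-1}u)=u+us=u(1+s)=0,
\]
so $A+uv^{\top}$ is singular. This completes the equivalence.

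\textbf{Main obstacle.} There is no substantive obstacle: the result is a short algebraic identity whose only subtlety is the bookkeeping around the scalar $s=v^{\top}A^{-1}u$ and the edge case $u=0$. The step requiring the most care is recognizing that $v^{\top}A^{-1}u$ is a scalar (and hence commutes freely) so that the middle term $u(v^{\top}A^{-1}u)v^{\top}A^{-1}$ can be rewritten as $s\cdot uv^{\top}A^{-1}$ and combined with the preceding term to produce the factor $(1+s)$ that cancels the denominator.
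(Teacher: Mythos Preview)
Your proof is correct and is the standard textbook verification of the Sherman--Morrison identity. The paper does not actually prove this statement: it is recorded as a ``Fact'' with a citation to \citet{sherman1950adjustment} and invoked as a black box in the proof of Theorem~\ref{thm_colorsighted}, so there is no paper proof to compare against.
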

Letting $A=-Id_{J}$, $u=g(\bm{\beta};\bm{\Delta})$, and $v=\nabla_{\bm{\beta}} h$ in the Sherman-Morrison formula, straightforward algebra shows that
\[
\frac{\partial \hat{\bm{\beta}}}{\partial \rho} = \left.\left(Id_J + \frac{g(\bm{\beta}; \bm{\Delta}) (\nabla_{\bm{\beta}} h)^{\top}}{1 - (\nabla_{\bm{\beta}} h)^{\top} g(\bm{\beta}; \bm{\Delta})}\right)\; \frac{\partial \Psi}{\partial \rho} \right\vert_{\bm{\beta} = \hat{\bm{\beta}}(\rho)}.
\]

Consider first the effect of perturbing $\rho=\Delta_1$. We observe that $\tfrac{\partial \Psi}{\partial \Delta_1} = \tfrac{\partial \bm{\psi}}{\partial \Delta_1}$, where
\[
\frac{\partial \psi_{j}}{\partial \Delta_1} \begin{cases}
<0 & \text{if } j=1, \\[6pt]
=0 & \text{if } j \neq 1,
\end{cases}
\]
where the inequality follows from the same reasoning as in Part~(i) of Theorem~\ref{thm_direction}, while the zero derivative is evident from the definition of $\psi_j$. Substituting this into the expression for $\tfrac{\partial \hat{\bm{\beta}}}{\partial \Delta_1}$ gives
\[
\frac{\partial \hat{\beta}_j}{\partial \Delta_1} \begin{cases}
<0 & \text{if } j=1, \\[6pt]
\approx 0 & \text{if } j \neq 1,
\end{cases}
\qquad \text{for } \bm{\Delta}\approx \bm{0}.
\]

Next, consider the effect of perturbing $\rho=\zeta$. Straightforward algebra shows that
\[
\frac{\partial \Psi}{\partial \zeta}
= \frac{\partial h}{\partial \zeta} \ g(\bm{\beta};\bm{\Delta}).
\]
Hence
\begin{align*}
\frac{\partial \hat{\bm{\beta}}}{\partial \zeta}
&= \left.\frac{\partial h}{\partial \zeta}\,
\left(Id_J+\frac{g(\bm{\beta};\bm{\Delta})(\nabla_{\bm{\beta}} h)^{\top}}
{1-(\nabla_{\bm{\beta}} h)^{\top} g(\bm{\beta};\bm{\Delta})}\right)\, g(\bm{\beta};\bm{\Delta})
\right|_{\bm{\beta}=\hat{\bm{\beta}}} \\[6pt]
&= \frac{\partial h}{\partial \zeta} \,
\left.\frac{1}{1-(\nabla_{\bm{\beta}} h)^{\top} g(\bm{\beta};\bm{\Delta})}
\, g(\bm{\beta};\bm{\Delta})\right|_{\bm{\beta}=\hat{\bm{\beta}}}.
\end{align*}
In the last line, $\tfrac{\partial h}{\partial \zeta}>0$ because the objective in \eqref{eqn_eproblem_aa} has strictly increasing differences in $(h,\zeta)$. The fraction is close to one for $\bm{\Delta}\approx \bm{0}$, and as shown earlier, $\sgn\bigl(g_j(\bm{\beta};\Delta_j)\bigr)=\sgn(\Delta_j)$. Combining these observations yields $\sgn\!\left(\tfrac{\partial \hat{\beta}_j}{\partial \zeta}\right) = \sgn(\Delta_j).$ \qed

\cleardoublepage

\vspace*{16em}
\begin{center}
    \Huge{
    Online Appendix for \\ ``Implicit Incentive Provision with Misspecified Learning''\\ by Federico Echenique and Anqi Li}
\bigbreak
\end{center}

\thispagestyle{empty}
\cleardoublepage

\section{Learning foundation}\label{sec_learning}
This appendix develops the misspecified Bayesian learning foundation of the Berk–Nash in the steady-state model.

Time is discrete and infinite, and all players are Bayesian. At the outset, the true fundamental \(\omega^*=(\mu^*,\beta^*)\) is revealed once and for all to A. S---comprising E and M---knows this structure but not the exact fundamental value $\omega$.  The prior held by S is degenerate at \(\hat{\mu} \neq \mu^*\) along the \(\mu\)-argument. Along the 
\(\beta\)-argument, the prior admits a smooth, full-support density 
\(p_0\) on \([\ul\beta,\bar\beta] \subset \mathbb{R}_{++}\), with $\beta^* \in (\ul\beta,\bar\beta)$.

In every period $n \in \mathbb{N}$, the following events unfold: 
\begin{enumerate}
    \item E publicly commits to an assessment intensity $h_n \in [0,1]$.
    \item A privately chooses an effort level $a_n \geq 0$.
    \item The period-$n$ ability $\eta_n \sim N(\mu^*,1)$ and noise $\varepsilon_n \sim N(0,h_n^{-1}-1)$ are realized independently of each other and over time.  These determine the outcome
    \[X_n=\eta_n+r(a_n,\beta^*)+\varepsilon_n, \]
    which is publicly realized. 
    \item M forms an expectation of A's period-$n$ value $\eta_n+v_{\mathrm{M}}(a_n,\beta)$ and rewards A  accordingly. 
    \item S updates its beliefs about $\beta$.
\end{enumerate}

Much of the analysis assumes that players are myopic, maximizing expected payoffs period by period. This assumption is standard in the misspecified learning literature, and we view it as a reasonable approximation of reality:  while society accumulates institutional knowledge about the fundamental, individual members---whether students, teachers, workers, or employers---are short-lived. The case of a forward-looking evaluator can also be studied by within our framework, as discussed toward the end of this appendix.

Let $p_n(\cdot\mid H_n)$ denote the posterior density of $\beta$ in Stage 5 of period $n$, formed based on the public history $H_{n} \coloneqq  (X_i,h_i)_{i=1}^{n}$. For notation simplicity, we omit the dependence of $p_n$ on $H_n$ whenever the context is clear. 

Consider the stage game in period $n$. Let $\hat{a}(h_n,\beta)$ denote society's belief of A's  effort in Stage 4 (as a random variable of $\beta$), where $\beta \sim p_{n-1}(\cdot \mid H_{n-1})$, irrespective of the true effort in period $n$. 
Conditional on observing $X_n$, M's expectation of A's period-$n$ value (as a random variable of $\beta$) is 
\begin{align*}
\mathbb{E}\!\left[\eta_n 
    + v_{\mathrm{M}}\bigl(\hat{a}(h_n,\beta),\beta\bigr) 
    \,\middle|\, h_n, X_n\right]
=& \hat{\mu} 
  + \frac{\sigma^2_{\eta}}{\sigma^2_{X_n}}
    \left[X_n - \hat{\mu} - r\bigl(\hat{a}(h_n,\beta),\beta\bigr)\right]
  \\
  & + v_{\mathrm{M}}\bigl(\hat{a}(h_n,\beta),\beta\bigr).
\end{align*}
Taking expectation over $\beta \sim p_{n-1}$ and using the fact that $\sigma^2_{\eta}/\sigma^2_{X_n}= h_n$ simplifies the expression to
\[h_n X_n+\text{terms unaffected by the true effort in period $n$}.\]
A can influence the mean of $X_n$ through adjusting their actual choice of effort. Combined with the derivation in Section \ref{sec_setup}, we obtain $a(h_n, \beta^*)$ as the true effort, and $a(h_n,\beta)$ as the perceived effort. E's problem in period $n$ is therefore
\begin{equation}\label{eqn_eproblem_learning}
    \max_{h \in [0,1]} \int_{\ul\beta}^{\bar\beta} V_{\mathrm{E}}(h,\beta) \; p_{n-1}(\beta \mid H_{n-1})\; d\beta-\kappa(h).
\end{equation}
The unique solution, denoted by $h_n$, depends on only $p_{n-1}(\cdot \mid H_{n-1})$, and so is adapted to $H_{n-1}$. Since $p_{n-1}$ first-order stochastically dominates $\delta_{\ul\beta}$ and is in turn dominated by $\delta_{\bar\beta}$, $h_n \in [h(\ul\beta), h(\bar\beta)]$ always holds. For notational simplicity, we write  $[\ul h,\bar h]$ for $[h(\ul\beta), h(\bar\beta)]$, and note that $[\ul h, \bar h] \subset (0,1)$. 

Consider next the belief updating process. Applying Bayes' rule, we obtain 
\begin{equation}\label{eqn_posteriordensity}
p_{n}(\beta \mid H_{n}) = 
\frac{
    \exp\left(
        \displaystyle -\sum_{i=1}^{n} 
        \frac{(X_i - \hat{\mu} - R(h_i, \beta))^2}{2 h_i^{-1}}
    \right) 
    p_0(\beta)
}{
    \displaystyle \int_{\ul\beta}^{\bar\beta} 
    \exp\left(
        \displaystyle -\sum_{i=1}^n 
        \frac{(X_i - \hat{\mu} - R(h_i, \tilde{\beta}))^2}{2 h_i^{-1}}
    \right) 
    p_0(\tilde{\beta}) \, d\tilde{\beta}}.
\end{equation}
as the posterior density of $\beta$ given $H_n$, formed in Stage 5 of period $n$.

We wish to characterize society's limiting posterior beliefs as time approaches infinity. The analysis makes the following assumption, whose role will soon become clear: 

\begin{assumption}\label{assm_linearr}
    $R(h,\beta)=g_1(\beta)g_2(h)+g_3(h)$, where both $g_1:\mathbb{R}_+\rightarrow \mathbb{R}$ and $g_2:[0,1]\rightarrow \mathbb{R}$ 
    are strictly increasing and strictly positive on the interiors of their domains.
\end{assumption}

\begin{theorem}\label{thm_learning}
Suppose that Assumptions \ref{assm_regularity1}, \ref{assm_regularity2}, and \ref{assm_linearr} hold.  As $n\rightarrow \infty$:
\begin{enumerate}[(i)]
    \item Society's posterior beliefs about $\beta$ almost surely converge to the set of the $\hat{\beta}$ values that emerge in the locally asymptotically stable Berk--NE of the steady-state model.
    \item For every such $\hat{\beta}$, convergence occurs with strictly positive probability, and with probability arbitrarily close to one given a sufficiently concentrated prior around $\hat{\beta}$.
\end{enumerate}
\end{theorem}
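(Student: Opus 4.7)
The plan is to reduce the Bayesian updating problem to a one-dimensional Robbins--Monro stochastic approximation in a suitably transformed coordinate, and then invoke standard ODE-tracking and non-convergence results. Assumption \ref{assm_linearr} is the key structural ingredient: it makes the period-$i$ log-likelihood exactly quadratic in $\theta := g_1(\beta)$, namely $-\tfrac{h_i}{2}[\tilde X_i - g_2(h_i)\theta]^2$ with $\tilde X_i := X_i - \hat\mu - g_3(h_i)$. Summing over $i$ and changing variable from $\beta$ to $\theta$ in \eqref{eqn_posteriordensity}, I would show that the posterior of $\theta$ is proportional to a truncated $N(\theta_n, \tau_n^{-1})$ density on $[g_1(\ul\beta), g_1(\bar\beta)]$ reweighted by the smooth factor $p_0(g_1^{-1}(\theta))/g_1'(g_1^{-1}(\theta))$, where $\tau_n := \sum_{i=1}^n h_i g_2(h_i)^2$, $s_n := \sum_{i=1}^n h_i g_2(h_i)\tilde X_i$, and $\theta_n := s_n/\tau_n$. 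Because $h_i \in [\ul h,\bar h] \subset (0,1)$ forces $\tau_n = \Theta(n)$ deterministically, a standard Laplace estimate would give a total-variation bound between the true posterior and the pure truncated normal $N(\theta_n, \tau_n^{-1})$ that vanishes at rate $n^{-1/2}$, uniformly over histories.

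The center $\theta_n$ then obeys the Robbins--Monro recursion $\theta_n - \theta_{n-1} = \tau_n^{-1} h_n g_2(h_n)[\tilde X_n - g_2(h_n)\theta_{n-1}]$. Writing $\hat\beta_{n-1} := g_1^{-1}(\theta_{n-1})$, the above concentration bound combined with smoothness of $h(\cdot)$ gives $h_n = h(\hat\beta_{n-1}) + O(n^{-1/2})$. Substituting $X_n = \mu^* + R(h_n,\beta^*) + \text{noise}$, the conditional drift at $\theta_{n-1}$ reduces to
\[
\tau_n^{-1}\,h(\hat\beta_{n-1})g_2(h(\hat\beta_{n-1}))\bigl[-\Delta_\mu + g_2(h(\hat\beta_{n-1}))(g_1(\beta^*)-\theta_{n-1})\bigr] + o(1/n),
\]
whose bracketed expression vanishes precisely when $\hat\beta_{n-1}$ is an interior Berk--NE --- by the SCE condition $g_2(h(\hat\beta))(g_1(\hat\beta)-g_1(\beta^*)) = -\Delta_\mu$ --- while the boundary equilibrium at $\ul\beta$ enters through the truncation. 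Since $\tau_n/n$ is bounded away from $0$ and $\infty$, Bena{\"\i}m's \citeyearpar{benaim1999dynamics} theory of stochastic approximation with step size $\Theta(1/n)$ would imply that the interpolated path of $(\theta_n)$ almost surely tracks a solution of the ODE $\dot\theta = \phi(\theta) := h(\beta)g_2(h(\beta))[-\Delta_\mu + g_2(h(\beta))(g_1(\beta^*)-\theta)]$ with $\beta = g_1^{-1}(\theta)$, and a direct calculation shows $\sgn\phi'(\theta^*) = \sgn(\tilde\psi'(\hat\beta)-1)$ at each fixed point $\theta^* = g_1(\hat\beta)$. Thus the ODE's local attractors coincide exactly with the Lyapunov-stable Berk--NE of Definition \ref{defn_stability}.

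Given this setup, part (i) would follow from two ingredients: (a) Bena{\"\i}m's result that the limit set of $(\theta_n)$ is a.s.\ an internally chain-transitive subset of the ODE, hence contained in the zero set of $\phi$; and (b) Pemantle's \citeyearpar{pemantle1990nonconvergence} non-convergence theorem, whose noise-spread hypothesis holds because $\mathrm{Var}[X_n \mid H_{n-1}] = h_n^{-1} \geq \bar h^{-1} > 0$, ruling out convergence to any unstable zero a.s. Part (ii) would follow from the attractor theorem: each local attractor has an open basin under the ODE, which the process is absorbed into with positive probability, and a prior concentrated near $\hat\beta$ places $\theta_0$ in the basin with probability arbitrarily close to one. \emph{The main obstacle} is the circular dependence between the Laplace approximation error and the strategic feedback: the error bound on $p_n$ depends on $(h_1,\ldots,h_n)$, while each $h_i$ itself depends on $p_{i-1}$ through \eqref{eqn_eproblem_learning}. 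Closing this loop will require an inductive argument that simultaneously bounds the approximation error (via uniform Gaussian tail estimates on the truncated normal and the deterministic concentration rate $\tau_n^{-1/2}$) and the perturbation $|h_n - h(\hat\beta_{n-1})|$ (via Lipschitz continuity of the optimal-assessment map); this co-evolution of approximation error and strategic response, absent from single-agent treatments such as \citet{heidhues2021convergence}, is where the bulk of the work lies.
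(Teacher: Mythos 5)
Your proposal follows the paper's high-level strategy closely: use Assumption \ref{assm_linearr} to change variables to $\theta = g_1(\beta)$ so the log-likelihood is quadratic, approximate the posterior by a truncated normal via Laplace's method, reduce the evolution of the posterior mode to a stochastic approximation recursion, and invoke Bena{\"\i}m's ODE-tracking and attractor theorems together with a Pemantle-style non-convergence result. You also correctly identify the SCE condition $g_2(h(\hat\beta))(g_1(\hat\beta)-g_1(\beta^*))=-\Delta_\mu$, the sign-equivalence $\sgn\phi'(\theta^*) = \sgn(\tilde\psi'(\hat\beta)-1)$, and the circular feedback between the Laplace error and the strategic response as a genuine obstacle (which the paper handles by establishing $h_{n+1}(\theta)-h(\tilde m) = O(n^{-1/2})$ uniformly on the state space, sharpening to $O(1/n)$ near interior fixed points).

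There is, however, a genuine gap in the reduction to a \emph{one-dimensional} Robbins--Monro recursion. Your recursion reads $\theta_n - \theta_{n-1} = \tau_n^{-1}\,h_n g_2(h_n)[\tilde X_n - g_2(h_n)\theta_{n-1}]$ with $\tau_n = \sum_{i\le n} h_i g_2(h_i)^2$, so the effective step size $\tau_n^{-1}I_n$ is a \emph{random, history-dependent} quantity. The fact that $\tau_n/n \in [\underline I,\bar I]$ is bounded is not enough: to write this as $\tfrac{1}{n}F(\theta_{n-1}) + \text{noise}$ with a well-defined limiting drift $F$, you must absorb the random factor $n I_n/\tau_n$ into $F$, which requires knowing that $\tau_n/n$ converges — but $\tau_n/n$ converges only if $h_n$ (hence $\theta_n$) converges, which is precisely the conclusion sought. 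Bena{\"\i}m's framework, including the non-convergence results, is stated for a deterministic step sequence $\gamma_n$ with $\sum\gamma_n=\infty$, $\sum\gamma_n^2<\infty$; it does not directly cover a stochastic step. The paper resolves this circularity by working with the deterministic step $\gamma_{n+1}=1/(n+1)$ and \emph{augmenting the state} to $\theta_n = (m_n,\xi_n)$ where $\xi_n = \tau_n/n = 1/(n v_n)$, so the limiting ODE is autonomous in $\mathbb{R}\times[\underline I,\bar I]$, with $\dot m = (I\circ h)(\tilde m)(\psi(\tilde m)-m)/\xi$ and $\dot\xi = (I\circ h)(\tilde m)-\xi$. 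The $\xi$-direction always has eigenvalue $-1$ at any steady state (so the Jacobian is lower-triangular), which is why the sink/saddle classification is still governed by $\psi'(\hat m)-1$ alone and your one-dimensional heuristic gets the right answer for the fixed points; but the rigorous ODE-tracking, the verification of Pemantle's noise-spread hypothesis in the direction orthogonal to the stable manifold, and the positive-probability attraction argument all require the autonomous two-dimensional system. Absent that (or an explicit citation to a random-step-size variant of stochastic approximation), the proof does not close.
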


The remainder of this appendix proves Theorem \ref{thm_learning}.
\vspace{-10pt}
\paragraph{Discussion of challenges.} Suppose that, as in Section \ref{sec_example}, \( r(a, \beta) = \beta a \), so that outcome is linear in effort productivity for any given effort level. A close inspection of \eqref{eqn_posteriordensity} reveals that even in this simple setting, posteriors of \( \beta \) do \emph{not} follow truncated normal distributions, even under a uniform or truncated normal prior. 

This departure does not stem from the choice of prior, but the persistent asymmetric information between S and A about the latter's productivity. Under common knowledge about A's strategy $a(\cdot, \cdot)$, such information asymmetry yields strategic uncertainties about A's effort choices. From society's perspective, conditional on the public history $H_{n-1}$ (to which $h_n$ is adapted), the effort choice $a(h_n, \beta)$ in period $n$ remains random.    Substituting $a(h_n,\beta)$ into \( r(a, \beta) \) gives \( \beta a(h_n, \beta) \), which is no longer linear in \( \beta \). Consequently, the log density of the outcome: 
\[
-\frac{(X_n - \hat{\mu} - \beta a(h_n, \beta))^2}{2h_n^{-1}},
\]
is not quadratic in \( \beta \), so the posterior cannot be truncated normal. The proof strategy developed below exploits the parametric structure of the Gaussian family. Outside this family, tracking the belief updating process and characterizing asymptotic beliefs becomes much more challenging. 

By contrast, truncated normality is preserved under specific priors (e.g., uniform, truncated normal), absent strategic uncertainties. That is, if S believes---whether correctly or mistakenly---that A's effort choice is adapted to the public history and so is fully predictable.  In the log-density function above, replace $a(h_n,\beta)$ with some $\hat{a}_n$ adapted to $H_{n-1}$. The resulting expression is quadratic in $\beta$. 

The assumption that players view others' behavior as fully predictable is standard in the economic literature on misspecified Bayesian learning. For example, 
\begin{itemize}
    \item It is vacuously true in the single-agent models \citep{heidhues2018unrealistic, heidhues2021convergence, esponda2021asymptotic, fudenberg2021limit}, and in multi-player models where behavior is publicly observable \citep{ba2023multi}.
    \item It holds in the first-order misspecification scenario studied by  \citet{murooka2021misspecified,murooka2023higher} (MY21, 23), where  beliefs about the fundamental may differ across players yet remain common knowledge. Recasting this in our setting, it is common knowledge that A's effort choice is $a(h_n,\beta^*)$, which is adapted to $H_{n-1}$.
    \item It also holds in the double-misspecification scenario of MY21 and MY 23, where each player mistakenly believes that others share their own beliefs about the fundamental. Translated to our context, S believes A solves
    \[
\max_{a \in [0,1]} \;
\int_{\ul\beta}^{\bar\beta} 
    h_n \bigl(\hat{\mu} + r(a,\beta)\bigr) 
    \, p_{n-1}(\beta \mid H_{n-1}) 
\, d\beta 
- c(a),
\]
   where $\hat{\mu}+ r(a,\beta)$ is the perceived mean of $X_n$ given $a$ and $\beta$. The solution is again adapted to $H_{n-1}$.
\end{itemize}
In Section~\ref{sec_discussion}, we already compared our setting with MY21 and MY23 in terms of equilibrium. Now it is clear that these differences also matter for the learning dynamics.  

Strategic uncertainty is intrinsic to our model and, more broadly, to games of incomplete information. Addressing the challenges it poses on misspecified Bayesian learning remains an open question. 

\vspace{-10pt}
\paragraph{Proof overview.} Assumption~\ref{assm_linearr} is introduced to address the above  challenges. It stipulates that the optimal effective effort is affine in \( \breve{\beta}= g_1(\beta) \), thereby enabling a change of variable from \( \beta \) to \( \breve{\beta} \). Since \( g_1 \) is strictly increasing, \( g_1^{-1}(\breve{\beta}) \) is unique for all \( \breve{\beta} \in g_1([\ul\beta, \bar\beta]) \). When players' problems are expressed in terms of \( \breve{\beta} \), all results from the steady-state model carry over directly. 

In the learning model, this transformation allows us to express the log-density of the outcome as a quadratic function of \( \breve{\beta} \) (details to follow), thus bringing us closer to approximating  posteriors using truncated normals. The drawback, however, is that the prior on \( \breve{\beta} \) can be arbitrary, even if the original prior on \( \beta \) is uniform or truncated normal. Still, the objective is clear: to reduce the challenges posed by strategic uncertainty to a more primitive question, namely, how does a general prior affect posteriors? 

In the learning model, the prior influences posteriors directly through Bayesian updating and indirectly via players’ strategic behavior. We show that the overall impact decays sufficiently fast over time, using Laplace’s approximation and the Chernoff bound for normal random variables. 

The remainder of this appendix proceeds as follows:  
\begin{itemize}
    \item Appendix \ref{sec_laplace} provides background on Laplace's approximation. 

    \item Appendix \ref{sec_learning_withoutprior} proves Theorem \ref{thm_learning} under specific priors for which posteriors are exactly truncated normal, using stochastic approximation techniques. Key references are \cite{benaim1999dynamics} (B99), \cite{pemantle1990nonconvergence} (P90), and MY23. 
    
    \item Appendix \ref{sec_learning_withprior} demonstrates the asymptotic validity of the results of  Appendix \ref{sec_learning_withoutprior} under general priors. 
\end{itemize} 

In Appendix \ref{sec_learning_discussion}, we give interpretations of Assumption \ref{assm_linearr} and propose an alternative approach that dispenses with it. Other model variants, such as a forward-looking evaluator, are also discussed. Proofs are in Appendix \ref{sec_learning_proof}.

For notatio simplicity, we relabel \( \breve{\beta} \) as \( \beta \) and the prior of $\breve{\beta}$ as \( p_0 \), with support $[\ul\beta,\bar\beta]$. We adopt standard matrix notation: $\|\cdot\|$ denotes the Euclidean norm, $v_i$ denotes the $i^{\text{th}}$ coordinate of a vector $v$, and $Id$ denotes the identity matrix.

\subsection{Laplace's approximation}\label{sec_laplace}
Let $\bar p_0$ be any smooth density function with support on $\mathbb{R}$ such that 
\[p_0(\beta)=\frac{\bar p_0(\beta)\mathbf{1}_{\beta \in [\ul\beta, \bar\beta]}}{\int \bar p_0(\tilde{\beta})\mathbf{1}_{\tilde{\beta} \in [\ul\beta, \bar\beta]} d\tilde{\beta}} \ \text{ for } \beta \in [\ul\beta,\bar\beta]. \]
Under Assumption \ref{assm_linearr}, 
\begin{align*}
\log \bar p_{n}(\beta \mid H_{n}) = &\displaystyle \sum_{i=1}^{n} -\dfrac{\left(X_i - \hat{\mu} - \beta g_2(h_i)-g_3(h_i)\right)^2}{2h_i^{-1}}+ \log \bar p_0(\beta)  \\
&+ \text{terms independent of } \beta, 
\end{align*}
with the following derivatives w.r.t. $\beta$: 
\[\frac{d}{d\beta} \log \bar p_{n}(\beta \mid H_{n})= \sum_{i=1}^{n}\frac{X_i-\hat{\mu}-\beta g_2(h_i)-g_3(h_i)}{h_i^{-1}}g_2(h_i)+\frac{d\log \bar p_0}{d\beta}, \]
\[\frac{d^2}{d\beta^2} \log \bar p_{n}(\beta \mid H_{n})= \sum_{i=1}^{n} g^2_2(h_i)h_i+\frac{d^2 \log \bar p_0}{d\beta^2}, \text{ and }\]
\[\frac{d^3}{d\beta^3} \log \bar p_{n}(\beta \mid H_{n})= \frac{d^3 \log \bar p_0}{d\beta^3}.\] 
Notably, the third derivative is independent of $n$, indicating a negligible impact of derivatives beyond the second order as $n$ grows to infinity. 

The term $g^2_2(h_n)h_n \coloneqq I(h_n)$---or simply $I_n$---represents the Fisher information of the period-$n$ outcome $X_n$ as a signal of $\beta$. It depends only on $h_n$ and is strictly positive since $h_n\geq \ul h>0$ and $g_2(h)>0$ for all $h>0$. Intuitively, since assessment is always positive and induces positive effort, the outcome always reveals new information about $\beta$. In what follows, we write $[\ul I, \bar I] \coloneqq [I(\ul h), I (\bar h)]$, noting that $\ul I>0$.

Let $m_n$ denote the mode of $\bar p_n$. For large $n$, $m_n$ is determined by the first-order condition
\[\frac{d}{d\beta} \log \bar p_{n}(\beta \mid H_{n})= 0,
\]
and can be explicitly solved as
\begin{equation}\label{eqn_mtgeneral}
m_{n} = \frac{\displaystyle \sum_{i=1}^{n} (X_i - \hat{\mu}-g_3(h_i)) h_ig_2(h_i) + \left.\dfrac{d \log \bar p_0}{d\beta}\right\vert_{\beta=m_{n}}}{\displaystyle \sum_{i=1}^{n} I_i}.
\end{equation}
Evaluating the second derivative of $\log \bar p_n(\beta \mid H_{n})$ at $m_n$ then gives
\begin{equation}\label{eqn_vtgeneral}
\left.\frac{d^2}{d\beta^2} \log \bar p_{n}(\beta \mid H_{n})\right\vert_{\beta=m_n} = -\sum_{i=1}^{n} I_i + \left.\frac{d^2 \log \bar p_0}{d\beta^2} \right\vert_{\beta=m_n} \coloneqq -\frac{1}{v_{n}}.
\end{equation}
Based on \eqref{eqn_mtgeneral} and \eqref{eqn_vtgeneral}, expand $\log \bar p_{n}(\beta\mid H_{n})$ around $m_n$ as
\begin{align}
\log \bar p_{n}(\beta \mid H_{n}) 
  &= \log \bar p_{n}(m_n \mid H_{n})
     + 0  \; (\beta - m_n) \nonumber \\
  &\quad + \frac{1}{2} 
     \left.\frac{d^2}{d\beta^2}\log \bar p_n(\beta \mid H_{n})
     \right|_{\beta=m_n} \, (\beta - m_n)^2 \nonumber \\
  &\quad + \frac{1}{6} 
     \left.\frac{d^3}{d\beta^3}\log \bar p_0(\beta)
     \right|_{\beta=z(\beta, m_n)} \, (\beta - m_n)^3 \nonumber \\
  &= -\frac{(\beta - m_n)^2}{2v_n}
     + K(\beta,m_n)(\beta - m_n)^3
     + \text{terms independent of } \beta, \nonumber
\end{align}
where
\[
K(\beta, m_n) \coloneqq \frac{1}{6}
   \left.\frac{d^3}{d\beta^3}\log \bar p_0(\beta)
   \right|_{\beta=z(\beta,m_n)}.
\]
Exponentiating both sides yields
\[\bar p_{n}(\beta \mid H_{n}) \propto \exp\left(-\frac{(\beta - m_{n})^2}{2v_{n}}\right) \exp\left(K(\beta,m_n)(\beta - m_{n})^3\right).\]

On $[\ul\beta,\bar\beta]$, we have $p_0(\beta)\propto \overline{p}_0(\beta)$, so again 
\[p_n(\beta \mid H_{n}) \propto \exp\left(-\frac{(\beta - m_{n})^2}{2v_{n}}\right) \exp\left(K(\beta,m_n)(\beta - m_{n})^3\right).\]
On the right-hand side, the first exponent corresponds to a normal density with mean $m_{n}$ and variance $v_{n}$; the second exponent captures the impact of the third derivative on the posterior. In  Appendix \ref{sec_learning_proof_withprior}, we show that the process $\{m_n\}$ is almost surely (a.s.) bounded. It follows that $v_n = \Theta(1/n)$ and $K(\beta,m_n) = O(1)$ a.s. for large $n$, hence the second exponent diminishes exponentially fast relative to the first. Asymptotically, the posterior distribution of $\beta$ can be approximated by a truncated normal distribution, denoted by $\tilde{N}(m_n, v_n)$.

 \subsection{Stochastic approximation: Specific priors}\label{sec_learning_withoutprior}
This appendix establishes Theorem \ref{thm_learning} under the assumption that the prior is uniform on $[\ul\beta,\bar\beta]$. The analysis extends directly to truncated normal priors. 

Under uniform prior, $p_0$ cancels from the numerator and denominator of \eqref{eqn_posteriordensity}. The period-$n$ posterior---with density $p_n$ ---follows exactly a truncated normal distribution $\tilde{N}(m_n, v_n)$, where $m_n$ and $v_n$ can be obtained from omitting $p_0$ from \eqref{eqn_mtgeneral} and \eqref{eqn_vtgeneral}. The corresponding untruncated normal distribution is $N(m_n, v_n)$, for which we define  
\[
\xi_n \coloneqq \frac{1}{nv_n}
\]
as the inverse of the time-scaled variance. The vector $\theta_n\coloneqq (m_n, \xi_n)$ is  referred to as the period-$n$ \emph{state} and is shown to belong to $M\coloneqq \mathbb{R} \times [\ul I, \bar I]$. It follows that $p_n$ is fully parameterized by $n$ and $\theta_n$. Inspecting \eqref{eqn_eproblem_learning} reveals that $h_n$, and hence $I_n$, depends only on $n$ and $\theta_{n-1}$. 

We capture the evolution of $\theta_n$ through a system of stochastic difference equations. To begin, rewrite \eqref{eqn_vtgeneral} as 
\begin{equation}
\xi_{n+1}-\xi_n=\frac{1}{n+1}\left(I_{n+1}-\xi_n\right). \label{eqn_xiupdate}
\end{equation}
Since $I_{n}\in [\ul I, \bar I]$ always holds, the process $\{\xi_n\}$ is uniformly bounded in $[\ul I, \bar I]$. Meanwhile, recall that $
X_n - \hat{\mu} = R(h_n, \beta(h_n)) + \epsilon_n/\sqrt{h_n}
$, where $\beta(h_n) \in \mathbb{R}_+$ denotes the unconstrained best-fit parameter of the outcome distribution induced by $h_n$. Under the conditions prescribed in the proof of Theorem \ref{thm_direction}, in Appendix \ref{sec_op},  $\beta(h_n)$ is well defined for all $h_n \in [\ul h, \bar h]$. Projecting $\beta(h_n)$ onto $[\ul\beta,\bar\beta]$ yields the constrained best-fit parameter $\tilde{\beta}(h_n)$ in the steady-state model.

Substituting the above into \eqref{eqn_mtgeneral}, straightforward algebra yields 
\begin{equation}
m_{n+1}-m_n
= \frac{1}{n+1}\left(
   \frac{ I_{n+1}\,\bigl(\beta(h_{n+1})-m_n\bigr)
          + \sqrt{I_{n+1}}\,\epsilon_{n+1} }
        { \tfrac{n}{n+1}\,\xi_{n} + \tfrac{1}{n+1}\, I_{n+1} }
   \right). \label{eqn_mupdate}
\end{equation}
In \eqref{eqn_xiupdate} and \eqref{eqn_mupdate}, let $h_{n+1}=h_{n+1}(\theta_n)$ and $I_{n+1}=(I \circ h_{n+1})(\theta_n)$. In vector form, 
\begin{equation}\label{eqn_de}
\theta_{n+1}-\theta_n=\frac{1}{n+1}\Bigl(F_{n+1}(\theta_n)+b_{n+1}(\theta_n)\epsilon_{n+1}\Bigr),
\end{equation}
where 
\[
F_{n+1}(\theta) \coloneqq 
\begin{bmatrix}
\dfrac{(I\circ h_{n+1})(\theta) \left[(\beta\circ h_{n+1})(\theta) - m\right]}{\frac{n}{n+1}\xi + \frac{1}{n+1} (I\circ h_{n+1})(\theta)} \\[1em]
(I\circ h_{n+1})(\theta) - \xi
\end{bmatrix}, \quad 
b_{n+1}(\theta) \coloneqq 
\begin{bmatrix}
\dfrac{\sqrt{(I\circ h_{n+1})(\theta)}}{\frac{n}{n+1}\xi + \frac{1}{n+1} (I\circ h_{n+1})(\theta)} \\[1em]
0
\end{bmatrix}.
\]

Next, we approximate the asymptotic behavior of $\{\theta_n\}$ by a time-invariant ODE system. When $n$ is large, $v_n = \Theta(1/n)$ and so is negligible. Consequently, we can approximate the truncated normal distribution $\tilde{N}(m_n,v_n)$ by the degenerate distribution $\delta_{\tilde{m}_n}$, where $\tilde{m}_n$ denotes the projection of $m_n$ onto $[\ul\beta,\bar\beta]$. The optimal assessment given $\delta_{\tilde{m}_n}$ is  $h(\tilde{m}_n)$, where $h(\cdot)$ denotes  the assessment map in the steady-state model.  In (\ref{eqn_de}), replace $h_{n+1}(\theta_n)$ with $h(\tilde{m}_n)$, drop the noise term $\epsilon_n$, and take  $n \rightarrow \infty$. This gives an ODE system 
\begin{equation}\label{eqn_ode}
\dot{\theta}=F(\theta)\coloneqq \begin{bmatrix}
\dfrac{(I\circ h)(\tilde{m})\; \left(\psi(\tilde{m}) - m\right)}{\xi} \\
(I\circ h)(\tilde{m}) - \xi
\end{bmatrix},
\end{equation}
where $\psi(\tilde{m})=(\beta \circ h) (\tilde{m})$ because both sides represent the unconstrained best-fit productivity parameter of the outcome distribution induced by $h(\tilde{m})$. 

There is a one-to-one correspondence between the steady states of the ODE and the Berk--NE of the steady-state model. This is because a steady state $(\hat{m},\hat{\xi})$ of the ODE---$\hat{\xi}$ fully determined by $\hat{m}$---must satisfy $\psi(\widetilde{\hat{m}})=\hat{m}$. If $\hat{m} \in (\ul \beta,\bar\beta)$, then  $\psi(\hat{m})=\hat{m}$. In the steady-state model,  $\hat{m}$ corresponds to an SCE belief of effort productivity. If $\hat{m}\leq \ul\beta$, then $\widetilde{\hat{m}}=\ul\beta$. Near the steady state, the $m$-dimension of the ODE changes at rate $\psi(\ul\beta)-m$, until the system settles at  $\hat{m}=\psi(\ul\beta)$. It follows that $\psi(\ul\beta) \leq \ul\beta$ and so $\tilde{\psi}(\ul\beta)=\ul\beta$. Therefore, $\ul\beta$ is a Berk--NE belief in the steady-state model. The argument for $\hat{m} \geq \bar\beta$ is analogous.

The stability properties of Berk–NE parallel those of the ODE's steady states. Since the second dimension of the ODE always converges, stability is determined solely by the first dimension. If $\hat{m}\in(\ul\beta,\bar\beta)$ and $\psi$ crosses the 45-degree line strictly from above at $\hat{m}$, then the steady state $(\hat{m},\hat{\xi})$ is a sink and so is locally asymptotically stable in the Lyapunov sense. If the crossing occurs strictly from below, the steady state is a saddle point and therefore Lyapunov unstable. If $\hat{m}\leq \ul\beta$, the steady state is again a sink, following the description in the above paragraph. The case $\hat{m}\geq \bar\beta$ is analogous.

By Theorem \ref{thm_direction}, steady states, ordered by their $m$-dimension from largest to smallest, must alternate in type: sink $\to$ saddle $\to$ sink $\to \cdots$. Moreover, the $m$-coordinate of any saddle point must lie in $(\ul\beta,\bar\beta)$. The attraction basin of a saddle point is the vertical line passing  through it. Every other point in $M$ lies in the attraction basin of a sink. 

Figure \ref{fig:phasediagram} illustrates the above discussions in the linear–quadratic example.

\begin{figure}[htbp]
    \centering
    \includegraphics[width=.9\textwidth]{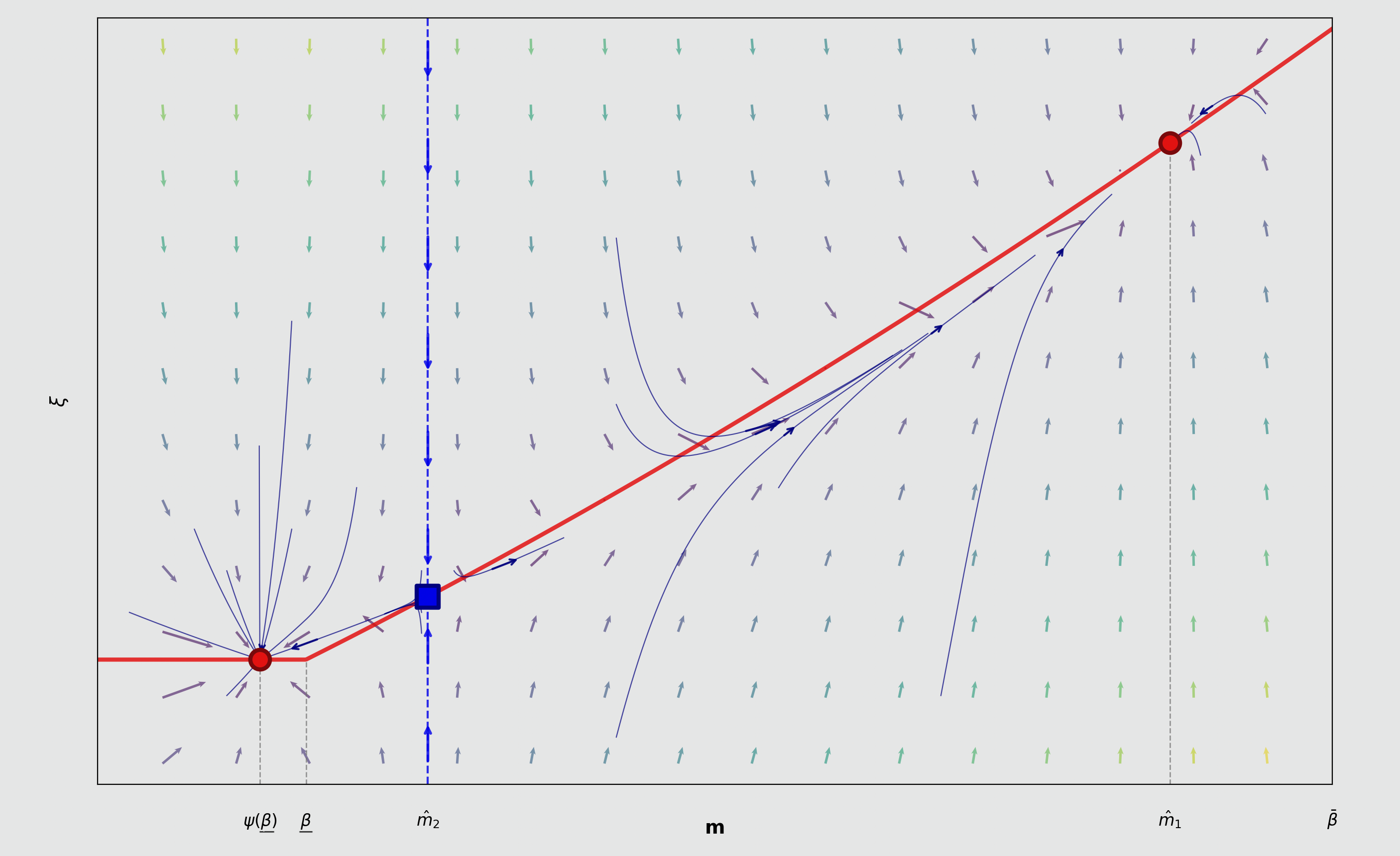} 
    \caption{Linear-quadratic example for a small $\Delta_{\mu}>0$: the red curve separates the areas in which $\dot{\xi}$ is positive and negative, respectively. The blue square is a saddle point, whose attraction basin is the blue dashed line. Every other point lies in the attraction basin of a sink, marked as a red dot. }
    \label{fig:phasediagram}
\end{figure}

We now state the main convergence result of this appendix: 

\begin{theorem}\label{thm_learning_withoutprior}
The process in \eqref{eqn_de} converges a.s. to  the sinks of \eqref{eqn_ode}, and to every sink with positive probability. 
\end{theorem}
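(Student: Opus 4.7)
The plan is to recast \eqref{eqn_de} as a stochastic approximation of the ODE \eqref{eqn_ode} and invoke \citet{benaim1999dynamics} together with \citet{pemantle1990nonconvergence}, following the template used by \citet{murooka2023higher}. Write
\[
\theta_{n+1}-\theta_n \;=\; \frac{1}{n+1}\Bigl(F(\theta_n) \;+\; \bigl[F_{n+1}(\theta_n)-F(\theta_n)\bigr] \;+\; b_{n+1}(\theta_n)\epsilon_{n+1}\Bigr),
\]
and verify the standard hypotheses: (a) $F$ is locally Lipschitz, so the ODE has unique flows; (b) $b_{n+1}(\theta_n)\epsilon_{n+1}$ is a martingale-difference noise with uniformly bounded conditional variance, by Gaussianity and independence of $\epsilon_{n+1}$ from $\mathcal{F}_n$; (c) the deterministic bias $F_{n+1}(\theta)-F(\theta)$ vanishes uniformly on compacts, which reduces via Berge's maximum theorem to showing that the maximizer $h_{n+1}$ in \eqref{eqn_eproblem_learning} under the truncated-normal belief $\tilde N(m,1/(n\xi))$ converges to $h(\tilde m)$ as $n\to\infty$; and (d) $\{\theta_n\}$ is a.s. bounded, with $\xi_n\in[\ul I,\bar I]$ by \eqref{eqn_xiupdate} and $\{m_n\}$ bounded per Appendix \ref{sec_learning_proof_withprior}.

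Given (a)--(d), Benaim's Proposition 4.1 implies that the limit set of $\{\theta_n\}$ is a.s. an internally chain-transitive set of the ODE \eqref{eqn_ode}. The ODE is gradient-like: the $\xi$-dynamics contract exponentially onto the slow manifold $\xi=(I\circ h)(\tilde m)$, on which the reduced $m$-dynamics $\dot m=\psi(\tilde m)-m$ is monotone toward the nearest fixed point of $\tilde\psi$. Combined with the phase-plane description preceding the theorem (finitely many isolated equilibria alternating sink/saddle along $m$ by Theorem \ref{thm_direction}; boundary equilibria at $\hat m\in\{\ul\beta,\bar\beta\}$ are sinks by construction of $\tilde m$), this rules out cycles and more complex chain-transitive sets, so $\theta_n$ converges a.s. to a single equilibrium.

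To obtain a.s. convergence to \emph{sinks}, I rule out each interior saddle via \citet{pemantle1990nonconvergence}. Direct computation of the Jacobian of $F$ at an interior fixed point $(\hat m,\hat\xi)$, where $\psi(\hat m)=\hat m$ and $\hat\xi=(I\circ h)(\hat m)$, yields
\[
J \;=\; \begin{pmatrix} (I\circ h)(\hat m)\bigl(\psi'(\hat m)-1\bigr)/\hat\xi & 0 \\ (I\circ h)'(\hat m) & -1 \end{pmatrix},
\]
with eigenvalues $\lambda_1=(I\circ h)(\hat m)(\psi'(\hat m)-1)/\hat\xi$ and $\lambda_2=-1$; saddles correspond to $\psi'(\hat m)>1$, in which case the unstable eigenvector has a nonzero first coordinate. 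Since $b_{n+1}$ acts purely in the $m$-direction and its first component is bounded below by a positive constant near $(\hat m,\hat\xi)$, the noise projects nondegenerately onto the unstable subspace and Pemantle's hypotheses hold. For positive-probability convergence to each sink, I would invoke Benaim's attractor theorem: each sink admits a forward-invariant compact neighborhood inside its open basin, and the full-support Gaussian noise guarantees that $\{\theta_n\}$ enters such a neighborhood with positive probability, after which a.s. convergence follows from the preceding steps.

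The main obstacle is coordinating the uniform approximation $F_n\to F$ with Pemantle's non-degeneracy condition, while accommodating the kink in $\tilde m$ at $\ul\beta$ and $\bar\beta$: the former requires quantitative control of the rate $v_n=\Theta(1/n)$ at which the posterior collapses, together with smoothness of the evaluator's optimum; the latter must survive locally near each saddle, where the signal-to-noise ratio in the $m$-update is most delicate. Both are tractable thanks to the parametric truncated-normal structure afforded by Assumption \ref{assm_linearr}, which is why this assumption is pivotal to the argument.
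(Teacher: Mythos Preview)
Your proposal is correct and follows essentially the same route as the paper: recast \eqref{eqn_de} as a Robbins--Monro recursion, verify the Bena\"{\i}m conditions (Lipschitz $F$, bounded iterates, vanishing bias, square-summable martingale noise), apply the limit-set theorem to confine the limit to equilibria, eliminate saddles via the Pemantle-type argument using that the noise hits the unstable $m$-direction, and invoke the attractor theorem for positive-probability convergence to each sink. Two minor sharpenings worth flagging: the paper invokes Proposition~5 of MY23 rather than \citet{pemantle1990nonconvergence} directly, precisely because Pemantle's original result assumes bounded increments whereas the Gaussian shocks here are unbounded; and the saddle-avoidance step needs the quantitative $O(1/n)$ rate for $F_{n+1}-F$ near each saddle (the paper's Condition~6, obtained from Lemma~4(ii)), not merely the qualitative convergence your Berge argument delivers---you correctly anticipate this in your final paragraph.
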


Here, we outline the proof, deferring the details to Appendix \ref{sec_learning_proof_withoutprior}. 

We first demonstrate that for large $n$, the sample path $\{\theta_{n+i}\}_{i=1}^k$ over the next $k$ periods---$k$ small relative to $n$---must stay arbitrarily close to the trajectory of the ODE starting from $\theta_n$. This property, referred to as ``shadowing'' in stochastic approximation, ensures that the limiting behavior of $\{\theta_n\}$ can be approximated by the ODE's trajectories.

Given this, we apply Theorem 5.7 of B99 (the ``limit set theorem''), which states that the ODE must be ``chain recurrent'' on the limit set of $\{\theta_n\}$. Roughly, this means that starting from any point in the limit set, we can return to it through a chain of arbitrarily small intermediate steps guided by the ODE. In our setting---which is effectively one-dimensional---chain recurrent points of the ODE coincide with its steady states. It follows that $\{\theta_n\}$ converges a.s. to the OED's steady states. 

We next demonstrate that convergence to any saddle point occurs with zero probability. Roughly speaking, this is because the attraction basin of a saddle point is a vertical line, while shocks to the state---arising from  the randomness in the outcome---come in the horizontal direction. After many shocks, the state escapes a small neighborhood around the saddle point and fails to return with positive probability. Formally, we apply Proposition 5 of MY23, verifying that its conditions are satisfied in our setting.

Combining the above arguments yields a.s. convergence to sinks. To establish positive-probability convergence to any given sink, we apply Theorem 7.3 of B99, the ``attractor theorem.'' Under the full-support assumption, $\theta_n$ belongs to a small neighborhood of the sink with positive probability for every $n$. Starting from there, the trajectory of the ODE is attracted to the sink by definition. Now, since the sample path is shadowed by the ODE's trajectory for large $n$, it, too, converges to the sink.

\subsection{Stochastic approximation: General priors}\label{sec_learning_withprior}
This appendix extends the previous analysis to general priors. 

We first quantify the prior's influence on the updating process, ignoring its impact on players' strategic behavior.  Rewrite \eqref{eqn_mtgeneral} as
\begin{align*}
m_{n+1} - m_n & = \dfrac{1}{n+1} \Bigl(F_{n+1,1}(\theta_n)+b_{n+1,1}(\theta_n)\epsilon_n\Bigr)\\
&\quad +\dfrac{1}{\sum_{i=1}^{n+1} I_i}\left.\dfrac{d\log \bar{p}_0}{d\beta}\right\vert_{\beta=m_{n+1}}-\dfrac{1}{\sum_{i=1}^{n} I_i}\left.\dfrac{d\log \bar{p}_0}{d\beta}\right\vert_{\beta=m_{n}}, 
\end{align*}
where the first line on the right-hand side also appeared in \eqref{eqn_mupdate}, while the second line is new and represents the influence of  the prior.
In Appendix \ref{sec_learning_proof_withprior}, we show that $\{\theta_n\}$ is a.s. bounded.  It follows that  for large $n$, the second line is a.s. bounded by 
\begin{align*}
&\left\vert\dfrac{1}{\sum_{i=1}^{n+1} I_i}\left.\dfrac{d\log \bar{p}_0}{d\beta}\right\vert_{\beta=m_{n+1}}-\dfrac{1}{\sum_{i=1}^{n} I_i}\left.\dfrac{d\log \bar{p}_0}{d\beta}\right\vert_{\beta=m_{n}}\right\vert\\
&=\dfrac{1}{\sum_{i=1}^{n+1} I_i}\left\vert\left.\dfrac{d\log \bar{p}_0}{d\beta}\right\vert_{\beta=m_{n+1}}- \left.\dfrac{d\log \bar{p}_0}{d\beta}\right\vert_{\beta=m_{n}}-\dfrac{I_{n+1}}{\sum_{i=1}^n I_i} \left.\dfrac{d\log \bar{p}_0}{d\beta}\right\vert_{\beta=m_{n}}\right\vert\\
&\leq \frac{1}{(n+1)}\left\vert C \cdot (m_{n+1}-m_n)  +O\left(\frac{1}{n}\right)\right\vert,
\end{align*}
where the constant $C$ depends on $m_n$, $m_{n+1}$, $\overline{p}_0$, and $\ul I$. In addition, we have a.s.
\[|\epsilon_n| = O\left(\sqrt{\log n}\right),\footnote{For each \( n \), \( \epsilon_n \sim {N}(0, 1) \). Standard Gaussian concentration states that 
\[
\mathbb{P}(|\epsilon_n| > \sqrt{2\log n}) \leq \frac{1}{n^2}.
\]
Since the right-hand side is summable, it follows from the Borel–Cantelli lemma that 
\[
|\epsilon_n| \leq \sqrt{2\log n} \quad \text{eventually, a.s.}
\]} \]
and $|F_{n+1,1}(\theta_n)|, |b_{n+1,1}(\theta_n)| = O(1)$. Combining these yields a.s. 
\begin{align*}
|m_{n+1} - m_n| 
&\leq \frac{1}{n+1} \left(\, |F_{n+1,1}(\theta_n)| + |b_{n+1,1}(\theta_n) \epsilon_n|  
+ C \cdot |m_{n+1} - m_n| + O\left(\frac{1}{n}\right) \,\right) \\
&\phantom{=} \hspace{3cm} \Longrightarrow  |m_{n+1} - m_n| = O\left( \frac{\sqrt{\log n}}{n} \right),
\end{align*}
so 
\[m_{n+1}-m_n=\frac{1}{n+1}\left(F_{n+1,1}(\theta_n)+b_{n+1}(\theta_n)\epsilon_n+O\left(\frac{\sqrt{\log n}}{n}\right)\right).\]

A similar derivation applies to $\xi$. After accounting for the prior's influence, \eqref{eqn_vtgeneral} a.s. becomes

\begin{align*}
\xi_{n+1} - \xi_n 
= \frac{1}{n+1} \left( I_{n+1} - \xi_n + O\left(\frac{\sqrt{\log n}}{n}\right) \right).
\end{align*}
In vector form, we have a.s. 
\begin{equation}\label{eqn_de_generalprior}
\theta_{n+1} - \theta_n = \frac{1}{n+1} \left( F_{n+1}(\theta_n)+ b_{n+1}(\theta_n)\epsilon_n+ O\left(\frac{\sqrt{\log n}}{n}\right) \right),
\end{equation}
where $F_{n+1}$ and $b_{n+1}$ are defined in \eqref{eqn_de}. Compared to the uniform-prior case, now there is an approximation error of $O(\sqrt{\log n}/n)$ in the bracketed term. As it turns out, this error vanishes sufficiently fast as $n$ grows to infinity, so that the convergence results established under uniform prior continue to hold.

Certainly, the prior also influences players' strategic behavior and in turn belief updating. However, the total effect is still too small to overturn the results established under uniform prior. Indeed, now with the freedom to choose the prior, we can concentrate it around a sink of the ODE. By Bena{\"\i}m's (1999) ``attractor theorem,'' converge to this sink occurs with probability arbitrarily close to one. 

\subsection{Discussion}\label{sec_learning_discussion}
\paragraph{Assumption \ref{assm_linearr}.} Assumption \ref{assm_linearr} is not innocuous when viewed globally. When \( r(a, \beta) = \beta a \) (as in Section \ref{sec_example}), the assumption holds if, say,  \( c(a) = a^\gamma/\gamma \) with \( \gamma >1 \), so that 
\[
a(h, \beta) = (h\beta)^{\frac{1}{\gamma-1}} \Longrightarrow R(h, \beta) = h^{\frac{1}{\gamma-1}} \beta^{\frac{\gamma}{\gamma-1}}.
\]

However, Assumption \ref{assm_linearr} always holds locally. This motivates an alternative approach that dispense with the assumption altogether: begin with a prior concentrated around a locally asymptotically stable steady state, and approximate \( R(h, \beta) \) linearly in \( \beta \) near that steady state. Over time, the posterior diverges from the prior. However, using posterior concentration results in Bayesian statistics (e.g., \citealt{vandervaart2008gp}), we establish that with high probability, posterior beliefs remain within an \( O(1/\sqrt{n}) \) neighborhood of the steady state as \( n \to \infty \). Within this neighborhood, the linear approximation remains valid, which suffices to establish positive-probability convergence to the steady state. 

Compared to the approach developed in this paper, this alternative approach yields only positive-probability convergence rather than almost-sure convergence. Its advantage is that it avoids imposing additional assumptions on the primitives.  Given the scope of this paper, we leave the full development of the alternative approach to another paper.

\vspace{-10pt}

\paragraph{Forward-looking evaluator.} Now suppose that E is forward-looking and maximizes the net present value of the subjective payoffs from conducting assessments, discounted at rate $\delta \in (0,1)$. A and M  remain myopic. Under this model variant, A's effort choice problem remains unchanged, as does the update equation \eqref{eqn_de}, given assessment levels. E's problem changes: compared to the myopic case, today's assessment now affects future payoffs, introducing an experimentation motive. 

We impose two restrictions on the evaluator. First, the assessment space is a compact interval $[\ul h, \bar h]$ satisfying $[h(\ul\beta), h(\bar\beta)]\subseteq [\ul h, \bar h] \subset (0,1)$ (again $h(\cdot)$ denotes the assessment map in the steady-state model). This assumption ensures that even with an experimentation motive, assessment, and hence Fisher information, is always positive.  Second, we restrict E to using Markovian strategies, which depend only on the calendar time $n+1$ and the state variable $\theta_n$. Formally, $\rho: \mathbb{N} \times M \rightarrow \Delta ([\ul h, \bar h])$, where for each $n \in \mathbb{N}_0$, $\rho(n+1,\theta_n)$ specifies the distribution over assessment in period $n+1$, conditional on the period-$n$ state $\theta_n$.

Initially, the experimentation motive can complicate the analysis. For example, E's objective may not be concave in $h$ when $n$ is small. As \( n \) grows large, however, the problem becomes more regular. Under the Markovian assumption, the envelope theorem implies that the period-$n$ assessment, $h_n$, influences future payoffs only through  the state variable \( \theta_n \). By the update equation \eqref{eqn_de}, this influence vanishes as \( n \to \infty \), given that 
assessment and hence Fisher information is always positive. For large \( n \), E's objective becomes dominated by the current-period payoff, given in \eqref{eqn_eproblem_learning}. Given this, all desired properties---such as the uniqueness of optimal assessment---follow immediately. Compared to the myopic case, the only change lies in the exact assessment intensities before the system settles, which affects neither the convergence results nor the  steady states.

\section{Learning under color-sighted impartial assessment}\label{sec_learning_aa}
In this appendix, we examine the misspecified Bayesian learning foundation for the Berk--NE under the color-sighted, impartial assessment studied in Section \ref{sec_aa}. 

There are $J \geq 2$ groups of agents. At the outset, $\omega_j^*=(\mu_j^*, \beta_j^*)$ is revealed once and for all to group $j$, $j=1,\cdots, J$. Society knows this structure but not the exact value of $\bm{\omega}=(\bm{\mu}, \bm{\beta})$, where $\bm{\mu}=(\mu_j)_{j=1}^J$ and $\bm{\beta}=(\beta_j)_{j=1}^J$. Society's prior is degenerate at $\hat{\mu}_j \neq \mu_j^*$ along each $\mu_j$-argument. Along $\bm{\beta}$, the prior admits a density $p_0(\bm{\beta})=\prod_{j=1}^J p_{0,j}(\beta_j)$, where each $p_{0,j}$ is smooth with full support on $[\ul\beta,\bar\beta]$.

In every period $n \in \mathbb{N}$, the following events unfold: 
\begin{enumerate}
    \item E commits to assess all groups at the same intensity $h_n \in [0,1]$. 
    \item Groups make hidden effort choices $a_{j, n} \geq 0$, $j=1,\cdots, J$ simultaneously. 
    \item The outcome of group $j$ is
    \[X_{j,n} = \eta_{j,n}+r(a_{j,n}, \beta_j^{*})+\varepsilon_{j,n},\]
    where $\eta_{j,n} \sim N(\mu_j^*, 1)$ is  ability and $\varepsilon_{j,n} \sim N(0,h_n^{-1}-1)$ is noise, drawn independently of each other, across groups, and over time. Only outcomes are publicly realized. 
    \item M rewards each group $j$ for the expected value of $\eta_{j,n}+v_{\mathrm{M}}(a_{j,n}, \beta_j)$. 
    \item Society updates beliefs about $\bm{\beta}$. 
\end{enumerate}

All players myopically maximize their expected payoffs period by period. In period $n$, group $j$’s true effort is $a(h_n,\beta_j^*)$, while their perceived effort (as a random variable of $\beta_j$) is $a(h_n,\beta_j)$. Let $H_n=(h_i,(X_{j,i})_{i=1}^n)$ denote the public history up to period $n$, and let $p_{j,n}(\cdot \mid H_n)$ denote the posterior density of $\beta_j$ formed in Stage 5 of period $n$, conditional on $H_n$. In Stage 1 of period $n$, the evaluator chooses the optimal assessment $h_n$ by solving 
\begin{equation}\label{eqn_eproblem_learning_aa}
\max_{h \in [0,1]} \; \sum_{j=1}^J \alpha_j \int_{\ul\beta}^{\bar\beta} V_{\mathrm{E}}(h,\beta_j)\, p_{j,n-1}(\beta_j \mid H_{n-1}) \, d\beta_j - \kappa(h).
\end{equation}
The objective represents the expected payoff from assessing all groups equally, weighted by their population, net of assessment cost.

Turning to the belief updating process: in Stage 5 of period $n$, the posterior density of $\beta_j$ conditional on $H_n$ is
\[
p_{j,n}(\beta_j \mid H_n) = 
\frac{
    \exp\left(
        \displaystyle -\sum_{i=1}^n 
        \frac{(X_{j,i} - \hat{\mu}_j - R(h_i,\beta_j))^2}{2 h_i^{-1}}
    \right) 
    p_{j,0}(\beta_j)
}{
    \displaystyle \int_{\ul\beta}^{\bar\beta} 
    \exp\left(
        \displaystyle -\sum_{i=1}^n 
        \frac{(X_{j, i} - \hat{\mu}_j - R(h_i,\tilde{\beta}_j))^2}{2 h_i^{-1}}
    \right) 
    p_{j,0}(\tilde{\beta}_j) \, d\tilde{\beta}_j}.
\]
The joint density $p_n(\bm{\beta}\mid H_n)$ is the product of the marginal densities across groups.

\begin{theorem}\label{thm_learning_aa}
Let $p_0$ be as above, and suppose that Assumptions \ref{assm_primitives}--\ref{assm_linearr} hold. As $n \rightarrow \infty$, society's posterior beliefs about $\bm{\beta}$ converge to the Berk--NE belief vector $\hat{\bm{\beta}}$ in Theorem \ref{thm_colorsighted} with positive probability, and with probability arbitrarily close to one given a sufficiently concentrated prior around $\hat{\bm{\beta}}$.  
\end{theorem}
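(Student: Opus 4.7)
The plan is to adapt the stochastic approximation framework of Appendix \ref{sec_learning} to the multi-group setting. First, I apply the change of variable $\breve{\beta}_j=g_1(\beta_j)$ coordinate-wise so that, under Assumption \ref{assm_linearr}, the log-likelihood of each outcome $X_{j,n}$ is quadratic in $\breve{\beta}_j$. Since ability and noise are independent across groups conditional on $h_n$, and since the assessment intensity is a single scalar adapted to $H_{n-1}$, the joint posterior factorizes as $p_n(\bm{\beta}\mid H_n)=\prod_{j=1}^J p_{j,n}(\beta_j\mid H_n)$. Each marginal admits a Laplace expansion around its mode $m_{j,n}$ with curvature $1/v_{j,n}=\sum_{i=1}^n I(h_i)+O(1)$, yielding $p_{j,n}\approx \tilde N(m_{j,n},v_{j,n})$ with the same error control as in Appendix \ref{sec_learning}. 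Asymptotically, the joint posterior is thus parameterized by the vector state $\bm{\theta}_n=((m_{j,n})_j,(\xi_{j,n})_j)$ with $\xi_{j,n}=1/(nv_{j,n})$.

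Second, I derive the coupled stochastic difference equation for $\bm{\theta}_n$. The shared assessment $h_{n+1}$ is now a function of the entire vector $\bm{\theta}_n$ via \eqref{eqn_eproblem_learning_aa}. As in Appendix \ref{sec_learning}, the posterior asymptotically concentrates on the projection $\tilde{\bm{m}}_n$ of $\bm{m}_n$ onto $[\ul\beta,\bar\beta]^J$, so $h_{n+1}\to h(\tilde{\bm{m}}_n)$, where $h(\cdot)$ is the color-sighted impartial assessment map from Section \ref{sec_aa}. The resulting approximating ODE is
\[\dot{\bm{\theta}}=F(\bm{\theta}),\quad \dot m_j=\frac{(I\circ h)(\tilde{\bm{m}})\bigl(\psi_j(\tilde{\bm{m}})-m_j\bigr)}{\xi_j},\quad \dot\xi_j=(I\circ h)(\tilde{\bm{m}})-\xi_j,\]
where $\psi_j(\tilde{\bm{m}})$ is the unconstrained best fit introduced in the proof of Theorem \ref{thm_colorsighted}. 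Interior steady states correspond exactly to the SCE fixed points of $\tilde{\bm{\psi}}$.

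Third, I argue that the unique Berk--NE $\hat{\bm{\beta}}$ from Theorem \ref{thm_colorsighted} corresponds to a sink of this ODE. The Jacobian $DF$ at the steady state $(\hat{\bm{m}},\hat{\bm{\xi}})$ is block-triangular: the $\bm\xi$-block is $-Id_J$ (contributing $J$ eigenvalues equal to $-1$), and the $\bm{m}$-block is a positive scalar multiple of $\nabla_{\bm\beta}\tilde{\bm\psi}(\hat{\bm\beta})-Id_J$. By the Lyapunov local asymptotic stability established in Theorem \ref{thm_colorsighted}, all eigenvalues of the latter have negative real parts, so $(\hat{\bm{m}},\hat{\bm{\xi}})$ is a sink. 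Applying Bena\"{\i}m's attractor theorem (B99, Theorem 7.3), together with the shadowing property carried over from Appendix \ref{sec_learning} (the joint Gaussian tail bound on $\bm\epsilon_n$ yields the same $O(\sqrt{\log n}/n)$ remainder as in \eqref{eqn_de_generalprior}), the process $\{\bm{\theta}_n\}$ converges to $(\hat{\bm{m}},\hat{\bm{\xi}})$ with positive probability. Concentrating the prior around $\hat{\bm\beta}$ places the initial state in its attraction basin with probability arbitrarily close to one, yielding the high-probability conclusion.

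The main obstacle will be controlling the prior's coupled influence in the multidimensional setting and establishing a.s. boundedness of $\{\bm{\theta}_n\}$ coordinate-wise. The one-dimensional argument in Appendix \ref{sec_learning_withprior}, which combines the Gaussian tail bound $|\epsilon_n|=O(\sqrt{\log n})$ with a recursive bound on $|m_{n+1}-m_n|$, must be iterated across groups while tracking how the common assessment $h(\tilde{\bm{m}})$ propagates errors. A secondary, conceptual point is that---unlike in the scalar case, where saddles are a.s. avoided by the one-dimensional transversality argument of P90/MY23---the multidimensional attraction basin of $(\hat{\bm m},\hat{\bm\xi})$ need not be co-null, so almost-sure convergence cannot be claimed globally; this is precisely why the theorem asserts only positive-probability, rather than almost-sure, convergence.
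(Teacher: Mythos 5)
Your proposal is correct and takes essentially the same route as the paper's Appendix~\ref{sec_learning_proof_aa}: factorized truncated-normal posteriors, vector-state stochastic approximation with a limiting ODE, the block-triangular Jacobian yielding eigenvalue $-1$ with multiplicity $J$ together with the eigenvalues of $\nabla_{\bm{m}}\bm{\psi}-Id_J$, and Bena{\"{\i}}m's attractor theorem. The one substantive divergence is that you invoke Theorem~\ref{thm_colorsighted}'s established stability (whose definition in Section~\ref{sec_aa} is precisely the eigenvalue condition) to conclude negativity of the real parts, whereas the paper re-derives this in the appendix via the Bauer--Fike theorem applied to the rank-one form $g(\bm{\beta};\bm{\Delta})(\nabla_{\bm{\beta}}h)^{\top}$; your citation is legitimate and more economical, though the paper's re-derivation makes the quantitative dependence on $\|\bm{\Delta}\|_2$ explicit. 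Two minor notes: the $\bm{m}$-block of the ODE Jacobian is \emph{exactly} $\nabla_{\bm{m}}\bm{\psi}-Id_J$ rather than a generic positive scalar multiple, because $(I_j\circ h)(\hat{\bm{m}})/\hat{\xi}_j=1$ at the steady state; and the paper's stated reason for forgoing almost-sure convergence is that in higher dimensions the chain-recurrent set of the ODE may contain limit cycles and other non-steady-state objects, so the limit-set theorem no longer pins down the limit---a slightly different and more primary obstacle than the saddle-avoidance argument of P90/MY23 failing, though your observation is also sound. You also correctly anticipate that the paper defers the general-prior extension by analogy to Appendix~\ref{sec_learning_withprior}, and that a.s.\ boundedness of the coupled vector state is where the residual technical work lies.
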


In contrast to Theorem \ref{thm_learning}, here we cannot establish a.s. convergence to the Berk--NE belief vector, due to the multidimensionality of the belief space. Beyond the one-dimensional case, chain recurrent sets of the limiting ODE may include more complex objects---such as limit cycles---and are more difficult to characterize. Given this inherent challenge, positive-probability convergence to each locally asymptotically stable equilibrium---as established in Theorem \ref{thm_learning}---is the best one can hope for.

\section{Proofs for Appendices \ref{sec_learning} and \ref{sec_learning_aa}}\label{sec_learning_proof}

\subsection{Proof of Theorem \ref{thm_learning_withoutprior}}\label{sec_learning_proof_withoutprior}
We follow the notation convention of B99 as closely as possible. In particular, we use $n$ to  denote a generic nonnegative integer in $\mathbb{N}_0=\mathbb{N}\cup\{0\}$. Set $H_0=\emptyset$, $\theta_0=0$, and $h_1(\theta_0)=h_1$, which determine the initial values of $F_1(\theta_0)$ and $F(\theta_0)$. Equation~\eqref{eqn_de} can then be rewritten as  
\[
\theta_{n+1}-\theta_n
= \underbrace{\gamma_{n+1}}_{=\frac{1}{n+1}}
\left(
   F(\theta_n)
   + \underbrace{U_{n+1}}_{=b_{n+1}(\theta_n)\epsilon_{n+1}}
   + \underbrace{d_{n+1}}_{=F_{n+1}(\theta_n)-F(\theta_n)}
\right),
\quad n\in\mathbb{N}_0.
\]
In stochastic approximation, $\gamma_{n+1}$ is referred to as the \emph{step size}; $U_{n+1}$ is a \emph{martingale difference noise}, adapted to $H_n$ (hence $\mathbb{E}[U_{n+1}\mid H_n]=0$); $d_{n+1}$ is an \emph{approximation error}, again adapted to $H_n$.

The following conditions are required to apply existing results in stochastic approximation to our setting.

\begin{condition}\label{lem_stepsize}
$\sum_n \gamma_{n+1}=\infty$ and $\sum_n \gamma_{n+1}^2<\infty$. 
\end{condition}

\begin{proof}
    This follows from the definition of $\gamma_{n+1}$.
\end{proof}

\begin{condition}\label{lem_flipschitz}
    $F: M \rightarrow \mathbb{R}^2$, $\theta\mapsto F(\theta)$ is Lipschitz.
\end{condition}

\begin{proof}
    This follows from the definition of $F$. 
\end{proof}

\begin{condition}\label{lem_variance}
$\{U_{n+1}\}$ is subgaussian with $\mathbb{E}[\|U_{n+1}\|^2] \in [\ul \Gamma, \bar\Gamma]$, where $\ul\Gamma\coloneqq \ul I/(\bar I)^2$ and $\bar\Gamma\coloneqq  \bar I/(\ul I)^2$. 
\end{condition}

\begin{proof}
Because $U_{n+1}=[b_{n+1, 1}(\theta_n), 0]^{\top} \epsilon_{n+1}$, where $\epsilon_{n+1} \sim N(0,1)$, and 
$b_{n+1, 1}(\theta) \in [\sqrt{\ul\Gamma},  \sqrt{\bar\Gamma}]$ for all $n \in \mathbb{N}_0$ and $\theta \in M$. 
\end{proof}

\begin{condition}\label{lem_mbound}
$\{\theta_{n+1}\}$ is a.s. bounded. 
\end{condition}

\begin{condition}\label{lem_fdifference}
$d_{n+1}=F_{n+1}(\theta_n)-F(\theta_n) \rightarrow 0$ a.s. 
\end{condition}

\begin{condition}\label{lem_localbound}
For any saddle point $\hat{\theta}$ of the ODE, there exists a neighborhood $U$ of $\hat{\theta}$ in $M$ such that $F_{n+1}(\theta)-F(\theta)= O(1/n)$ across $\theta \in U$. 
\end{condition}

Proofs of Conditions \ref{lem_mbound}, \ref{lem_fdifference}, and \ref{lem_localbound} build on the next lemma. 

\begin{lemma}\label{lem_hbound}
    For sufficiently large $n$:
    \begin{enumerate}[(i)]
        \item $h_{n+1}(\theta)-h(\tilde{m})= O(1/\sqrt{n})$ uniformly over $\theta \in M$.
        \item For any $m \in (\ul\beta,\bar\beta)$, there exists a neighborhood of $U$ of $m$ such that the bound in Part (i) can be improved to  $O(1/n)$ across $\theta \in U \times [\ul I, \bar I]$. 
    \end{enumerate}
\end{lemma}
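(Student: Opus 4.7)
The strategy is to relate $h_{n+1}(\theta)$ and $h(\tilde m)$ through the first-order conditions they satisfy. By Proposition~\ref{prop_a2}, $V_{\mathrm{E}}(h,\beta)-\kappa(h)$ is strictly concave in $h$ with second derivative bounded away from zero uniformly on $[\ul h,\bar h]$, so by the implicit function theorem it suffices to bound the derivative in $h$ of
\[
\Phi_n(h,\theta) \coloneqq \int_{\ul\beta}^{\bar\beta} V_{\mathrm{E}}(h,\beta)\,p_n(\beta;\theta)\,d\beta-\kappa(h)
\]
against the derivative of $V_{\mathrm{E}}(h,\tilde m)-\kappa(h)$: an $O(1/\sqrt n)$ gap uniform in $\theta\in M$ yields Part (i), and an $O(1/n)$ gap on $U\times[\ul I,\bar I]$ yields Part (ii).

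For Part (i), invoke Appendix~\ref{sec_laplace}: $p_n(\cdot;\theta)$ is the truncated normal $\tilde N(m,v_n)$ with $v_n=1/(n\xi)=\Theta(1/n)$, up to a multiplicative correction $\exp\bigl(K(\beta,m)(\beta-m)^3\bigr)$ that perturbs any moment by only $O(1/n)$ and hence does not affect either conclusion. Taylor-expand $V_{\mathrm{E}}(h,\cdot)$ around $\tilde m$ to obtain
\[
\Phi_n(h,\theta)=V_{\mathrm{E}}(h,\tilde m)-\kappa(h)+V_{\mathrm{E},\beta}(h,\tilde m)\,\mathbb E_n[\beta-\tilde m]+O\bigl(\mathbb E_n[(\beta-\tilde m)^2]\bigr),
\]
and similarly for $\partial_h\Phi_n$. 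Standard moment estimates for truncated Gaussians give $\mathbb E_n[\beta-\tilde m]=O(\sqrt{v_n})=O(1/\sqrt n)$ and $\mathbb E_n[(\beta-\tilde m)^2]=O(v_n)=O(1/n)$, uniformly in $\theta\in M$; the worst case is $m$ at or beyond an endpoint of $[\ul\beta,\bar\beta]$, where the truncation shifts mass from $m$ to $\tilde m$ a distance $\Theta(\sqrt{v_n})$. Smoothness of $V_{\mathrm{E}}$ and its partials in $\beta$ ensures that differentiating in $h$ preserves these rates, so $\partial_h\Phi_n(h,\theta)-\partial_h[V_{\mathrm{E}}(h,\tilde m)-\kappa(h)]=O(1/\sqrt n)$ uniformly, giving Part (i).

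For Part (ii), pick $\delta>0$ with $U\coloneqq(m-\delta,m+\delta)\subset(\ul\beta,\bar\beta)$. For every $\theta=(m',\xi)$ with $m'\in U$, we have $\tilde m=m'$ and $m'$ is bounded below away from both endpoints. Gaussian tail bounds (Chernoff) then imply that truncation affects the untruncated Gaussian mean only by a term of order $\sqrt{v_n}\,\exp(-c/v_n)=O(e^{-cn})$ for some $c=c(\delta)>0$. Thus $\mathbb E_n[\beta-m']=O(e^{-cn})$, eliminating the leading $O(1/\sqrt n)$ contribution, while $\mathbb E_n[(\beta-m')^2]=v_n+O(e^{-cn})=O(1/n)$. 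Repeating the first-order-condition argument upgrades the bound to $h_{n+1}(\theta)-h(m')=O(1/n)$ uniformly across $\theta\in U\times[\ul I,\bar I]$.

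The main obstacle is uniformity at the boundary in Part~(i): once $m$ lies within $O(\sqrt{v_n})$ of, or outside, $[\ul\beta,\bar\beta]$, the truncated posterior concentrates at $\tilde m\neq m$ with a displacement of order $\sqrt{v_n}$, and no better rate than $1/\sqrt n$ is available by this method. A secondary but benign nuisance is that under a general prior $p_n$ is not exactly truncated normal, but the Laplace remainder contributes $o(1/n)$ and so preserves both bounds.
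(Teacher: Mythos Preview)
Your proposal is correct and follows essentially the same route as the paper: bound the gap between $\mathbb{E}_{\beta\sim p_n}[\partial_h V_{\mathrm{E}}]$ and $\partial_h V_{\mathrm{E}}(h,\tilde m)$ via a second-order Taylor expansion and truncated-normal moment estimates, then transfer the bound to $h_{n+1}(\theta)-h(\tilde m)$ through the first-order conditions. The only substantive difference is the expansion point: the paper Taylor-expands $\partial_h V_{\mathrm{E}}(h,\cdot)$ around the truncated-normal mean $\nu_n(\theta)$ (so the linear term vanishes identically, leaving only the $O(1/n)$ variance term) and then separately invokes a stated fact that $|\nu_n(\theta)-\tilde m|=O(1/\sqrt n)$ uniformly and $O(1/n)$ in the interior; you expand directly around $\tilde m$ and absorb the same content into the moment bounds $\mathbb{E}_n[\beta-\tilde m]=O(1/\sqrt n)$ and $\mathbb{E}_n[(\beta-\tilde m)^2]=O(1/n)$. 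These are equivalent packagings of the same estimate. One minor remark: this lemma lives in the uniform-prior section, where the posterior is \emph{exactly} truncated normal and the Laplace correction $K(\beta,m)(\beta-m)^3$ is identically zero, so your discussion of the Laplace remainder is unnecessary here (it becomes relevant only in the general-prior extension), though it does no harm.
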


\begin{proof}
Below is a useful fact about truncated normal distributions (see, e.g., Lemma 1 of MY23). 

\begin{fact}\label{fact_mbound}
Let $\nu_n(\theta)$ denote the mean of $\tilde{N}(m, 1/(n\xi))$. There exists a constant \( C > 0 \) such that  
\[
\left| \nu_{n}(\theta) - \tilde{m} \right| < \frac{C}{\sqrt{n}} \ \text{ for all } n \in \mathbb{N} \text{ and } \theta \in M. 
\]
Furthermore, for any \( m' \in (\ul{\beta}, \bar{\beta}) \), there exists a neighborhood \( U' \) of \( m' \) and a constant \( C' > 0 \) (depending on $U'$)  such that 
\[
\left| \nu_{n}(\theta) - m \right| < \frac{C'}{n} \ \text{ for all }  n \in \mathbb{N} \text{ and }\theta \in U' \times [\ul I, \bar I]. 
\]
\end{fact}

First, we fix any $\theta=(m,\xi)\in M$ and bound 
\[
\left\lvert \, 
\mathbb{E}_{\beta \sim \tilde{N}\!\left(m,\,1/(n\xi)\right)}\!\left[\frac{\partial V_{\mathrm{E}}}{\partial h}(h,\beta)\right] 
- \frac{\partial V_{\mathrm{E}}}{\partial h}(h,\tilde{m}_n) 
\right\rvert
\quad \text{for large } n.
\]
A second-order Taylor expansion of $\tfrac{\partial V_{\mathrm{E}}}{\partial h}(h,\beta)$ around $\beta=\nu_n(\theta)$ gives
\begin{align*}
\frac{\partial V_{\mathrm{E}}}{\partial h}(h,\beta)
&= \frac{\partial V_{\mathrm{E}}}{\partial h}(h,\nu_n(\theta)) 
  + \frac{\partial^2 V_{\mathrm{E}}}{\partial h \,\partial \beta}(h,\nu_n(\theta)) \, (\beta-\nu_n(\theta)) \\
&\quad + \frac{1}{2} \frac{\partial^3 V_{\mathrm{E}}}{\partial h \,\partial \beta^2}(h,z(\beta)) \, (\beta-\nu_n(\theta))^2,
\end{align*}
where $z(\beta)$ lies between $\beta$ and $\nu_n(\theta)$ and so belongs to $[\ul\beta,\bar\beta]$.
Taking expectation over $\beta \sim \tilde{N}(m, 1/(n\xi))$ yields 
\begin{align*}
\mathbb{E}_{\beta \sim \tilde{N}(m, 1/(n\xi))}\left[\frac{\partial V_{\mathrm{E}}}{\partial h}\right]
&= \frac{\partial V_{\mathrm{E}}}{\partial h}(h, \nu_n(\theta)) + 0 
+ \frac{1}{2} \sup_{z \in [\ul\beta, \bar\beta],\, h \in [0,1]} \left\vert\frac{\partial^3 V_{\mathrm{E}}}{\partial h \partial \beta^2}(h, z)\right\vert \; \mathbb{V}_n(\beta) \\
&= \frac{\partial V_{\mathrm{E}}}{\partial h}(h, \nu_n(\theta)) + O\left(\frac{1}{n}\right),
\end{align*}
where the second equality uses the variance bound $\mathbb{V}_n(\beta)\leq 1/(n\ul\xi)= O(1/n)$. It follows that
\begin{align*}
&\left\vert\mathbb{E}_{\beta \sim \tilde{N}\left(m, 1/(n\xi)\right)}\left[\frac{\partial V_{\mathrm{E}}}{\partial h}\right] - \frac{\partial V_{\mathrm{E}}}{\partial h} (h, \tilde{m})\right\vert \\
\leq & \left\vert\mathbb{E}_{\beta \sim \tilde{N}\left(m, 1/(n\xi)\right)}\left[\frac{\partial V_{\mathrm{E}}}{\partial h}\right] - \frac{\partial V_{\mathrm{E}}}{\partial h} (h, \nu_n(\theta))\right\vert+ \left\vert \frac{\partial V_{\mathrm{E}}}{\partial h} (h, \nu_n(\theta)) -\frac{\partial V_{\mathrm{E}}}{\partial h} (h, \tilde{m}) \right\vert\\
\leq & \ O\left(\frac{1}{n}\right)+ \sup_{\beta \in [\ul\beta, \bar\beta],\, h \in [0,1]} \left| \frac{\partial^2 V_{\mathrm{E}}}{\partial h \partial \beta} \right| \; \left| \nu_n(\theta) - \tilde{m} \right|.
\end{align*}
By Fact \ref{fact_mbound}, the last line is $O\left(1/\sqrt{n}\right)$ uniformly over $M$, and $O(1/n)$ in a neighborhood of $\theta$ if $m \in (\ul\beta,\bar\beta)$. 

Next, recall that $h_{n+1}(\theta)$ and $h(\tilde{m})$ are determined by the first-order conditions 
\[\mathbb{E}_{\beta \sim \tilde{N}\left(m, 1/(n\xi)\right)}\left[\frac{\partial V_{\mathrm{E}}}{\partial h}\right]=\kappa'(h)\quad \text{and} \quad   \frac{\partial V_{\mathrm{E}}}{\partial h}(h,\tilde{m})=\kappa'(h). \]
Since both $V_{\mathrm{E}}$ and $\kappa$ are smooth, the bounds established above extend to $h_{n+1}(\theta)-h(\tilde{m})$. 
\end{proof}

\vspace{-10pt}

\paragraph{Proof of Condition \ref{lem_mbound}.} Since $\{\xi_n\}$ is bounded in $[\ul I, \bar I]$, it suffices to show that $\{m_n\}$ is a.s. bounded. Denote $C_{n+1}=n\xi_{n}/I_{n+1}$ and rewrite \eqref{eqn_mupdate} as 
\begin{align*}
m_{n+1}=\frac{C_{n+1} m_n + \psi(\tilde{m}_n)}{1+C_{n+1}}+\frac{U_{n+1}}{n+1}+\frac{(\beta\circ h_{n+1})(\theta_n)-(\beta \circ h)(\tilde{m}_n)}{1+C_{n+1}}.
\end{align*}
Since $C_{n+1}= \Theta(n)$ and $h_{n+1}(\theta_n)-h(\tilde{m}_n)= O(1/\sqrt{n})$ when $n$ is large, the third term on the right-hand side is of $o(1/n)$. Omitting this term, we obtain 
\[
m_{n+1}\leq \frac{C_{n+1} m_n +\bar \psi}{1+C_{n+1}}   +\frac{U_{n+1}}{n+1}, \quad \text{where } \bar\psi=\sup_{\beta \in [\ul\beta,\bar\beta]}\psi. 
\]
Consider a new process $\{m_n'\}$ defined by
\[
     m'_{n+1}= \frac{C_{n+1} m_n' +\bar\psi}{1+C_{n+1}}   +\frac{U_{n+1}}{n+1}.
\]
If $m_n \leq m'_n$, then
\[
m_{n+1}  \leq  \frac{C_{n+1} m_n +\bar\psi}{1+C_{n+1}}   +\frac{U_{n+1}}{n+1} \leq \frac{C_{n+1} m'_n +\bar\psi}{1+C_{n+1}}   +\frac{U_{n+1}}{n+1}=m'_{n+1}. 
\]
Applying this argument iteratively shows that for all $k \geq 1$, 
\begin{align*}    
 m_{n+k}  & \leq   m'_{n+k}  \\
& = \left(\prod_{j=1}^{k} \frac{C_{n+j}}{1+C_{n+j}}\right)  m'_n +\left(1- \left(\prod_{j=1}^{k} \frac{C_{n+j}}{1+C_{n+j}}\right) \right)\bar\psi + O_p\left( \sqrt{\sum_{j=1}^{k} \frac{1}{(n+j)^2}}\right)\\
  & \leq \max\{m'_n, \bar\psi \} + O_p\left( \sqrt{\sum_{j=1}^{k} \frac{1}{(n+j)^2}}\right), 
\end{align*}
where the equality holds up to $O_p\left( \sqrt{\sum_{j=1}^{k} \frac{1}{(n+j)^2}}\right)$ by Condition  \ref{lem_variance}. Since the term $\sqrt{\cdot}$ remains finite as $k \rightarrow \infty$, the last line is a.s. bounded from above by 
Kolmogorov’s Strong Law of Large Numbers. A lower bound can be derived analogously. \qed 

\vspace{-10pt}

\paragraph{Proof of Condition \ref{lem_fdifference}.} Evaluate $d_{n+1}$ using Lemma \ref{lem_hbound} and Condition \ref{lem_mbound}. \qed
 \vspace{-10pt} 

 \paragraph{Proof of Condition \ref{lem_localbound}.} Recall that any saddle point $\hat{\theta}=(\hat{m}, \hat{\xi})$ satisfies $\hat{m} \in (\ul\beta,\bar\beta)$. Then apply Lemma \ref{lem_hbound} to bound $F_{n+1}(\theta)-F(\theta)$.   \qed
 \vspace{-10pt}
 
\paragraph{Proof of Theorem \ref{thm_learning_withoutprior}.} 
The proof has been outlined in Appendix \ref{sec_learning_withoutprior}. Here, we provide the omitted details. 
\vspace{-10pt}

\subparagraph{Preliminaries.} The following are taken from B99.

Define $\tau_0=0$ and $\tau_n=\sum_{i=1}^n \gamma_i$ for $n \in \mathbb{N}$. The inverse of the map 
$n \rightarrow \tau_n$ is defined by 
\[\tau^{-1}(t)=\sup\{k \geq 0: t \geq \tau_{k}\}, \quad  t \geq 0. \]
By Condition \ref{lem_stepsize}, $\lim_{t \rightarrow \infty}\tau^{-1}(t)=\infty$. 

Define the continuous time affine interpolated process $\theta:  \mathbb{R}_+\rightarrow M$ by 
\[\theta(\tau_n+s)=\theta_n+s \frac{\theta_{n+1}-\theta_n}{\tau_{n+1}-\tau_n}, \quad  n \in \mathbb{N}_0 \text{ and } 0 \leq s < \gamma_{n+1}. \]
Define the \emph{limit set} of $\theta: \mathbb{R}_+\rightarrow M$ by
\[L(\theta)=\bigcap_{t \geq 0}\bar{\theta([t,\infty))}.\]
A point $\theta \in M$ is \emph{attainable} by $\theta: \mathbb{R}_+\rightarrow M$ if for all $t>0$ and neighborhood $U$ of $\theta$,  
\[\mathbb{P}(\exists s \geq t: \theta(s) \in U)>0.\]
The set of attainable points by $\theta: \mathbb{R}_+ \rightarrow M$ is denoted by $Att(\theta)$.

Let $\Phi: \mathbb{R}_+ \times M \rightarrow M$, $(t,\theta)\mapsto \Phi(t,\theta)=\Phi_t(\theta)$,  denote the \emph{semiflow} induced by $F$, such that  \[\Phi_l(\theta(t))=\theta(t)+\int_t^{t+l} F(\theta(s)) ds\] 
for all $(t, l) \in \mathbb{R}_+ \times \mathbb{R}_{+}$. 
$\theta: \mathbb{R}_+ \rightarrow M$ is an \emph{asymptotic pseudotrajectory} for $\Phi$ if 
\[\lim_{t \rightarrow \infty} \sup_{0 \leq l \leq T} \|\theta(t+l), \Phi_{l}(\theta(t))\|=0 \ \text{ for all }  T>0.\]
This means that for sufficiently large $t$, the curve $[0, T] \rightarrow M$, $l\mapsto \theta(t+l)$ shadows the $\Phi$-trajectory of the point $\theta(t)$ over the interval $[0,T]$ with arbitrary accuracy. 

 We wish to demonstrate that $\theta: \mathbb{R}_+\rightarrow M$ is an asymptotic pseudotrajectory of $\Phi$. To this end,  define two continuous time processes $\bar U: \mathbb{R}_+ \rightarrow \mathbb{R}^2$ and $\bar\gamma: \mathbb{R}_+ \rightarrow \mathbb{R}$ such that 
\[\bar U(\tau_n+s)=U_{n+1}, \quad   \bar\gamma(\tau_n+s)=\gamma_{n+1}\]
for all $n \in \mathbb{N}\cup \{0\}$ and $0 \leq s < \gamma_{n+1}$. Define 
\[\Delta(t,T)=\|\sup_{0\leq l \leq T}\int_{t}^{t+l} \bar{U}(s)ds \|\]
for $t, T \geq 0.$ Intuitively, $\Delta(t,T)$ represents the cumulative martingale difference noises, weighted by step sizes, over $[t, t+T]$. 

A point $a \in M$ is \emph{chain recurrent} if for all $\delta, T>0$, there exists a \emph{$(\delta, T)$-pseudo-orbit} from $a$ to $a$. That is, a finite sequence of partial trajectories 
\begin{align*}
\{\Phi_t(y_i): 0 \leq t \leq t_i\}; & \ i=0, \cdots, k-1; \ t_i \geq T 
\end{align*}
\text{such that } 
\begin{align*}
d(y_0, a)&<\delta;\\
d(\Phi_{t_j}(y_j), y_{j+1})& <\delta, \ j=0,\cdots, k-1;\\
y_k& =a.
\end{align*}
For a nonempty invariant set $M'$ of $\Phi$, we say that $\Phi$ is \emph{chain recurrent on $M'$} if every point in $M'$ is chain recurrent for $\Phi$ restricted to $M'$. A compact invariant set on which $\Phi$ is chain recurrent is called \emph{internally chain recurrent}. In our setting, every invariant set is contained in the steady states. Since these are finite and chain recurrent (by definition), the three concepts are equivalent: invariant points, internally chain recurrent points, and steady states.

\vspace{-10pt}
\subparagraph{Step 1.} We first demonstrate that $\{\theta_n\}$ a.s. converges to the steady states of $\Phi$. 
The proof builds on Proposition 4.1 of B99, which requires three conditions: 
\begin{enumerate}
    \item[A1] $\lim_{t \rightarrow \infty} \Delta(t,T)=0$ for all  $T>0$.
    \item[A2] $\sup_n\|\theta_n\|<\infty$.
    \item[A2'] $F$ is Lipschitz and bounded on a neighborhood of $\{\theta_n: n \geq 0\}$. 
\end{enumerate}

Proposition 4.4 of B99 shows that under Conditions \ref{lem_stepsize}, \ref{lem_variance}, and \ref{lem_fdifference}, A1 holds a.s. Indeed, for all $\alpha, t, T>0$, 
\begin{equation}\label{eqn_deltabound}
\mathbb{P}(\Delta(t,T) \geq \alpha)\leq C\exp\left(-\frac{\alpha^2}{ C' T \bar\gamma(u)}\right)
\end{equation}
holds for some $t \leq u \leq t+T$ and some positive constants $C$ and $C'$ that depend on $\dim(M)$ ($=2$) and $\bar\Gamma$.  Meanwhile, A2 holds a.s. by Condition \ref{lem_mbound}, and A2' holds a.s. by Conditions \ref{lem_flipschitz} and \ref{lem_mbound}. Proposition 4.1 of B99 states that under A1 and either A2 or A2', $\theta: \mathbb{R}_+ \rightarrow M$ is an asymptotic pseudotrajectory of $\Phi$. Furthermore, A2' implies that for sufficiently large $t$: 
\begin{equation}\label{eqn_pseudo}
\sup_{0 \leq l \leq T} \|\theta(t+l)-\Phi_l(\theta(t))\| \leq C(T)[\Delta(t-1, T+1)+\sup_{t\leq s \leq t+l}\bar\gamma(s)],
\end{equation}
where $C(T)$ is a constant that depends only on $T$ and $F$. 

We characterize the limit set $L(\theta)$ by applying Theorem 5.7 of B99 (the ``limit set theorem''). Since $\theta: \mathbb{R}_+ \rightarrow M$ is an asymptotic pseudotrajectory of $\Phi$, $L(\theta)$ is internally chain recurrent if the image of $\theta$ has a compact closure in $M$ --- an assumtpion that holds a.s. under Condition \ref{lem_mbound}. Since internally chain recurrent points and steady states are equivalent in our setting (established in the preliminaries), it follows that $L(\theta)$ is contained in the steady states of $\Phi$.

\vspace{-10pt}
\subparagraph{Step 2.} We next demonstrate that $\{\theta_n\}$ converges to any saddle point $\hat{\theta}=(\hat{m}, \hat{\xi})$ of $\Phi$ with zero probability. The proof builds on Proposition 5 of MY23,\footnote{The classical reference of non-convergence is P90, which shows that a stochastic approximation process almost surely avoids linearly unstable steady states of the limiting ODE, assuming bounded martingale differences. Proposition 5 of MY23 establishes this result for Gaussian martingale differences. } which requires, in addition to Conditions  \ref{lem_stepsize}-\ref{lem_fdifference}, Condition~\ref{lem_localbound}: the approximation error not only vanishes but at rate $O(1/n)$. As we shall later demonstrate, this rate condition can be slightly weakened.

Additionally, the proposition requires that shocks to the stochastic approximation process come in directions orthogonal to the attraction basin of $\hat{\theta}$.  Formally, for the eigenvalues of the Jacobian matrix $\left.\nabla F\right\vert_{\theta=\hat{\theta}}$ with negative real parts, let $H$ be the affine space spanned by the corresponding eigenvectors, and $H^*$ be the set of all unit vectors orthogonal to $H$. There exist $k>0$, $n^* \in \mathbb{N}_0$, and a neighborhood $U$ of $\hat{\theta}$ such that $|b_{n+1}(\theta)\cdot h|\geq k$ for all $h \in H^*$, $n \geq n^*$, and $\theta \in U$. 

In our case, we have $\hat{m} \in (\ul\beta,\bar\beta)$ and hence $\psi(\hat{m})=\hat{m}$. Moreover, $\hat{\xi}=(I \circ h)(\hat{m})$.  Linearizing $F$ around $\hat{\theta}$ yields 
\begin{align*}
&\left.\nabla F \right \vert_{\theta=\hat{\theta}} \\
& =\left.\begin{bmatrix}
\dfrac{(I\circ h)'(m)\cdot (\psi(m)-m)+(I\circ h)(m)\cdot (\psi'(m)-1)}{\xi} & -\dfrac{(I\circ h)(m)\cdot(\psi(m)-m)}{\xi^2}\\
(I\circ h)'(m) & -1
\end{bmatrix}\right\vert_{\theta=\hat{\theta}}\\
& = \begin{bmatrix}
\psi'(\hat{m})-1 & 0\\
(I\circ h)'(\hat{m}) & -1
\end{bmatrix}.
\end{align*}
The eigenvalues of $\left.\nabla F\right\vert_{\theta=\hat{\theta}}$ are its diagonal values: $\psi'(\hat{m})-1$ and $-1$. The first is positive ($\psi$ crosses the 45 degree line from below at $\hat{m}$). The second is negative, and the corresponding eigenvectors take the form of $[0, y]^{\top}$. In the above condition, setting $H=\{[0,y]: y \in \mathbb{R}\}$, $H^*=\{[1,0]^{\top}\}$, and $k=\sqrt{\ul\Gamma}$ verifies that 
 \[ b_{n+1}(\theta)\cdot [1,0]^{\top}=b_{n+1,1}(\theta) \geq \sqrt{\ul\Gamma}=k  \ \text{ for all  } \theta \in M.\]
 
With the above, we can readily apply Proposition 5 of MY23 to establish zero-probability convergence to $\hat{\theta}$. 

 \vspace{-10pt}
 
\subparagraph{Step 3.} Lastly, we demonstrate that $\{\theta_n\}$ converges to every sink of $\Phi$ with positive probability. Let $\mathcal{F}_t$ denote the filtration generated by $\theta: \mathbb{R}_+ \rightarrow \mathbb{R}^2$. Fix any $\delta, T>0$. For sufficiently large $t$,  
\begin{align*}
&\mathbb{P}\left(\sup_{0\leq l\leq T} \|\theta(t+l)-\Phi_l(\theta(t)) \|\geq \delta \mid \mathcal{F}_{t}\right) \\
\tag{$\because$ \eqref{eqn_pseudo}}\leq &\quad \mathbb{P}\left(C(T)[\Delta(t-1,T+1)+\sup_{t \leq s \leq t+l}\bar\gamma(s)] \geq \delta\right)\\
\tag{$\because$ $\bar\gamma(t)$ is decreasing in $t$}=& \quad \mathbb{P}\left(\Delta(t-1,T+1) \geq \frac{\delta}{C(T)}-\bar\gamma(t) \right)\\
\tag{$\because$ \eqref{eqn_deltabound}}\leq & \quad \sup_{t \leq u \leq t+T}C\exp\left(-\frac{(\delta/C(T)-\bar\gamma(t))^2}{C' T \bar\gamma(u)}\right)\\
\tag{$\because$ $\bar\gamma(t)$ is decreasing in $t$}= & \quad C\exp\left(-\frac{(\delta/C(T)-\bar\gamma(t))^2}{C' T \bar\gamma(t)}\right)
\end{align*}
Denote the upper bound on the last line by $w(t,\delta, T)$ and note that it vanishes as $t \rightarrow \infty$. This verifies that Condition (24) of B99 is satisfied. 

Consider any sink $\hat{\theta}$ with attraction basin $B(\hat{\theta})$. Let $N_{\epsilon}(\hat{\theta})$ denote the $\epsilon$-neighborhood of $\hat{\theta}$ in $M$. For sufficiently small $\epsilon$, we have $N_{\epsilon}(\hat{\theta})\subset B(\hat{\theta})$. Under the full-support assumption, $\mathbb{P}(\theta_n \in N_{\epsilon/2}(\hat{\theta}))>0$ for any $n \in \mathbb{N}$. For sufficiently large $t=\tau^{-1}(n)$ such that $\theta(t)=\theta_n \in N_{\epsilon/2}(\hat{\theta}) \subset B(\hat{\theta})$ (the first equality follows from the definition of $\tau$), the $\Phi$-trajectory of the point $\theta(t)$ is attracted to $\hat{\theta}$. Hence, 
\[\|\Phi_l(\theta(t))-\hat{\theta}\| \leq \|\theta(t)-\hat{\theta}\|< \frac{\epsilon}{2} \ \text{ for all } l>0. \]
Meanwhile, \eqref{eqn_pseudo} implies that 
\[\sup_{0\leq l\leq T} \|\theta(t+l)-\Phi_l(\theta(t))\|<\frac{\epsilon}{2} \]
when $t$ is large. Combining these inequalities yields 
\[\sup_{0 \leq l \leq T}\|\theta_{t+l}-\hat{\theta}\|\leq \sup_{0 \leq l \leq T}\|\theta_{t+l}-\Phi_l(\theta(t))\|+\sup_{0 \leq l \leq T}\|\Phi_l(\theta(t))-\hat{\theta}\|\leq \frac{\epsilon}{2}+\frac{\epsilon}{2}=\epsilon, \]
hence
\[\mathbb{P}(\exists s \geq t: \theta(s) \in B_{\epsilon}(\hat{\theta}))>0.\]
This establish that $\hat{\theta
}$ is attainable by $\theta: \mathbb{R}_+ \rightarrow M$. 

Therefore, $\hat{\theta}$ is a sink (and hence a point attractor), satisfying $\hat{\theta} \in B(\hat{\theta})\cap Att(\theta)$. Theorem 7.3 of B99 (the ``attractor theorem'') states that under Condition (24) of B99 (verified above), if $Att(\theta) \cap B(\hat{\theta}) \neq \emptyset$, then $\mathbb{P}(L(\theta)\subseteq \{\hat{\theta}\}) >0$. This establishes that   $\{\theta_n\}$ converges to $\hat{\theta}$ with positive probability. \qed

\subsection{Proof of Theorem \ref{thm_learning}}\label{sec_learning_proof_withprior}
Define $d'_{n+1}=F_{n+1}(\theta_n)-F(\theta_n)+O(\sqrt{\log n}/n)$ as the new approximation error. The big $O$ term was derived in Appendix \ref{sec_learning_withprior}.

\vspace{-10pt}

\paragraph{Step 1.}  We re-establish Lemma \ref{lem_hbound}, Condition \ref{lem_mbound}, and Condition \ref{lem_fdifference} under general priors. 

Suppose that $\{m_n\}$ is a.s. bounded---an assumption verified later. Then, for large $n$, we have a.s. $m_n = O(1)$ and, from the derivation in Appendix \ref{sec_learning_withprior}, $v_n = \Theta(1)$. Based on these, we establish in the next paragraph that 
\[
\left\vert\mathbb{E}_{\beta \sim p_n}\left[\frac{\partial V_{\mathrm{E}}}{\partial h}\right]-\mathbb{E}_{\beta \sim \tilde{N}(m_n,v_n)}\left[\frac{\partial V_{\mathrm{E}}}{\partial h}\right]\right\vert = o\left(\frac{1}{n}\right).
\]  
Together with the proof of Lemma \ref{lem_hbound}, this implies
\[
\left\vert \mathbb{E}_{\beta \sim p_n}\left[\frac{\partial V_{\mathrm{E}}}{\partial h}\right]-\frac{\partial V_{\mathrm{E}}}{\partial h}(h,\tilde{m}_n)\right\vert = O\left(\frac{1}{\sqrt{n}}\right)  \ \text{ a.s. }
\]  
and therefore
\[
h_{n+1}(\theta_n)-h(\tilde{m}_n) = O\left(\frac{1}{\sqrt{n}}\right) \text{ a.s.}
\]

This analog of Lemma \ref{lem_hbound}(i) is precisely what was used in Appendix \ref{sec_learning_proof_withoutprior} to establish that $\{m_n\}$ is a.s. bounded. Hence, the assumption that $\{m_n\}$ is a.s. bounded is valid, and condition \ref{lem_mbound} --- which states that $\{\theta_n\}$ is a.s. bounded --- follows immediately. Moreover, $h_{n+1}(\theta_n)-h(\tilde{m}_n)= O(1/\sqrt{n})$ a.s. implies  $F_{n+1}(\theta_n)-F(\theta_n) = O(1/\sqrt{n})$ a.s. and hence 
\[
d_{n+1}'(\theta_n)  =F_{n+1}(\theta_n)-F(\theta_n)+O\left(\frac{\sqrt{\log n}}{n}\right) = O\left(\frac{1}{\sqrt{n}}\right) \text{ a.s.}
\]  
where the last equality uses the fact that $\log n$ is subpolynomial. It follows that Condition \ref{lem_fdifference} holds for $d'_{n+1}$.

The remainder of the proof distinguishes between two cases: 
\vspace{-10pt}
\subparagraph{Case 1: $m_n \in (\ul\beta,\bar\beta)$. } Fix any $\alpha \in (0,1/6)$. Define $S_n = [m_n-n^{-1/2+\alpha}, m_n+n^{-1/2+\alpha}]$. By Laplace's approximation,
\begin{align*}
\mathbb{E}_{\beta \sim p_n}\left[\dfrac{\partial V_{\mathrm{E}}}{\partial h}\right]
&=\dfrac{\displaystyle \int_{\ul\beta}^{\bar\beta} \dfrac{\partial V_{\mathrm{E}}}{\partial h}(h,\beta)\exp\left(-\dfrac{(\beta-m_n)^2}{2v_n}\right) \exp\left(K(\beta,m_n)(\beta-m_n)^3\right) d\beta}{\displaystyle \int_{\ul\beta}^{\bar\beta} \exp\left(-\dfrac{(\beta-m_n)^2}{2v_n}\right) \exp\left(K(\beta,m_n)(\beta-m_n)^3\right) d\beta}\\
& \leq \frac{\text{I}+\text{II}}{\text{III}+\text{IV},} \quad \text{ where }
\end{align*}
\begin{align*}
\mathrm{I} &= \left[\sup_{\beta \in S_n} \exp\left(K(\beta,m_n)(\beta-m_n)^3\right)\right] \;\left[\int_{\beta \in S_n} \dfrac{\partial V_{\mathrm{E}}}{\partial h}(h,\beta)
    \exp\left(-\dfrac{(\beta - m_n)^2}{2v_n}\right)  d\beta\right], \\
\mathrm{II}  & =  \left[\sup_{\beta \in [\ul\beta,\bar\beta]-S_n} \exp\left(K(\beta,m_n)(\beta-m_n)^3\right)\right]\;
    \left[\int_{\beta \in [\ul\beta,\bar\beta]-S_n} \dfrac{\partial V_{\mathrm{E}}}{\partial h}(h,\beta)
    \exp\left(-\dfrac{(\beta - m_n)^2}{2v_n}\right)  d\beta\right], \\
\mathrm{III} & = \left[\inf_{\beta \in S_n} \exp\left(K(\beta,m_n)(\beta-m_n)^3\right)\right]\; \left[\int_{\beta \in S_n} 
    \exp\left(-\dfrac{(\beta - m_n)^2}{2v_n}\right)  d\beta\right],\\
\mathrm{IV} & = \left[\inf_{\beta \in [\ul\beta,\bar\beta]-S_n} \exp\left(K(\beta,m_n)(\beta-m_n)^3\right)
\right] \; \left[\int_{\beta \in [\ul\beta,\bar\beta]-S_n}
    \exp\left(-\dfrac{(\beta - m_n)^2}{2v_n}\right)  d\beta\right].
\end{align*}
By comparison,
\[\mathbb{E}_{\beta \sim \tilde{N}(m_n,v_n)}\left[\dfrac{\partial V_{\mathrm{E}}}{\partial h}\right]=\left.\frac{\mathrm{I}+\mathrm{II}}{\mathrm{III}+\mathrm{IV}}\right\vert_{K(\beta,m_n) \equiv 0}. \]
Let $\Phi$ denote the c.d.f. of the standard normal distribution. Then 
 \begin{align*}
&\int_{\beta \in [\ul\beta,\bar\beta] \setminus S_n} 
\exp\left(-\frac{(\beta - m_n)^2}{2 v_n}\right) \, d\beta 
\leq 2 \sqrt{2\pi v_n} \left[ 1 - \Phi\left( \frac{\cancel{m_n} + n^{-\frac{1}{2} + \alpha} - \cancel{m_n}}{\sqrt{v_n}} \right) \right] \\
\tag{$\because$ Chernoff bound for normal distribution}&\leq 2 \sqrt{2\pi v_n}  \frac{\exp\left(-\frac{(n^{-\frac{1}{2} + \alpha})^2}{2 v_n}\right)}{\sqrt{2\pi} \cdot \frac{n^{-\frac{1}{2} + \alpha}}{\sqrt{v_n}}} \\
&\tag{$\because v_n = \Theta(1/n)$}= \frac{\exp\left(-\Theta\left(n^{2\alpha}\right)\right)}{\Theta\left(n^{\frac{1}{2} + \alpha}\right)}.
\end{align*}
Given that $\alpha \in (0,1/6)$, the upper bound in the last line vanishes exponentially in $n$ . It follows that for large $n$, II and IV become negligible, and so  
\begin{align*}
&\frac{\text{I}+\text{II}}{\text{III}+\text{IV}} - \left(\left.\frac{\mathrm{I}+\mathrm{II}}{\mathrm{III}+\mathrm{IV}}\right\vert_{K(\beta,m_n) \equiv 0}\right) \\
& \approx \frac{\text{I}}{\text{III}}-\left(\left.\frac{\text{I}}{\text{III}}\right\vert_{K(\beta,m_n) \equiv 0}\right)\\
&=\left(\dfrac{\sup_{\beta \in S_n} \exp\left(K(\beta,m_n)(\beta-m_n)^3\right)}{\inf_{\beta \in S_n} \exp\left(K(\beta,m_n)(\beta-m_n)^3\right)}-1\right) \; \left(\left.\dfrac{\text{I}}{\text{III}}\right\vert_{K(\beta,m_t)\equiv 0}\right)\\
&=O(n^{-3/2 + 3\alpha}) \cdot O(1) \\
&= o\left(\frac{1}{n}\right), 
\end{align*}
where the last two equalities use the facts that  \( K(\beta, m_n) = O(1) \) and \( \alpha \in (0, 1/6) \). A lower bound of the same order can be derived analogously.

\vspace{-10pt}
\subparagraph{Case 2: $m_n \notin (\ul\beta,\bar\beta)$.} We only prove the result for the case of $m_n<\ul\beta$. Fix any $\alpha \in (1/3,1)$, and redefine $S_n = [\ul\beta,\ul\beta+n^{-\alpha}]$. As in Case 1, we have 
\begin{align*}
\mathbb{E}_{\beta\sim p_n}\left[\dfrac{\partial V_{\mathrm{E}}}{\partial h}\right]\leq  \frac{\mathrm{I}+\mathrm{II}}{\mathrm{III}+\mathrm{IV}}, \  \mathbb{E}_{\beta \sim \tilde{N}(m_n,v_n)} \left[\frac{\partial V_{\mathrm{E}}}{\partial h}\right] =\left.\frac{\mathrm{I}+\mathrm{II}}{\mathrm{III}+\mathrm{IV}}\right\vert_{K(\beta,m_n) \equiv 0}. 
\end{align*}
Applying the Chernoff bound for normal random variables yields 
\[1-\Phi\left(\dfrac{\ul\beta-m_n}{\sqrt{v_n}}\right) \geq \frac{1}{\sqrt{2\pi}}\exp\left(-\frac{(\ul\beta-m_n)^2}{2 v_n}\right)\left(\frac{1}{\frac{\ul\beta-m_n}{\sqrt{v}_n}}- \frac{1}{\left(\frac{\ul\beta-m_n}{\sqrt{v}_t}\right)^3}\right)\]
and 
\[1- \Phi\left(\dfrac{\ul\beta-m_n+n^{-\alpha}}{\sqrt{v_n}}\right) \leq \frac{1}{\sqrt{2\pi}}\exp\left(-\frac{(\ul\beta-m_n+n^{-\alpha})^2}{2v_n}\right)\frac{1}{\frac{\ul\beta-m_n}{\sqrt{v}_n}}.\]
Taking ratio and simplifying using $m_n = O(1)$ and $v_n = \Theta(1/n)$ yields
\begin{align*}
\frac{1-\Phi\left(\dfrac{\ul\beta-m_n}{\sqrt{v_n}}\right)}{1- \Phi\left(\dfrac{\ul\beta-m_n+n^{-\alpha}}{\sqrt{v_n}}\right)}  
\geq \exp\left(\frac{2(\ul\beta-m_n)n^{-\alpha}+n^{-2\alpha}}{2v_n}\right)\left(1-O\left(\frac{1}{n}\right)\right) \\
= \exp\left(\Theta\left(n^{1-\alpha}\right)\right), \text{ which grows exponentially in } n \text{ given } \alpha<1.
\end{align*}
It follows that for large $n$, 
\begin{align*}
&\frac{\mathrm{I}+\mathrm{II}}{\mathrm{III}+\mathrm{IV}}-\left(\left.\frac{\mathrm{I}+\mathrm{II}}{\mathrm{III}+\mathrm{IV}}\right\vert_{K(\beta,m_n) \equiv 0}\right) \\
&\approx \frac{\mathrm{I}}{\mathrm{III}}-\left(\left.\frac{\mathrm{I}}{\mathrm{III}}\right\vert_{K(\beta,m_n)\equiv 0}\right) \\
&=\left(\frac{\sup_{\beta \in [\ul\beta,\ul\beta+n^{-\alpha}]} \exp\left(K(\beta,m_n)(\beta-m_n)^3\right)}{\inf_{\beta \in [\ul\beta,\ul\beta+n^{-\alpha}]} \exp\left(K(\beta,m_n)(\beta-m_n)^3\right)} -1\right)\; \left(\left.\frac{\mathrm{I}}{\mathrm{III}}\right\vert_{K(\beta,m_n)\equiv 0}\right)\\
&= o\left(\frac{1}{n}\right),
\end{align*}
where the last equality uses the fact that $m_n = O(1)$ and $\alpha>1/3$. 

\vspace{-10pt}

\paragraph{Step 2.} We weaken Condition \ref{lem_localbound} to the following: 

\begin{condition}\label{lem_relaxelocalbound}
Fix any saddle point $\hat{\theta}$ of the ODE, and let $U$ be the neighborhood of $\hat{\theta}$ in Lemma \ref{lem_hbound}(ii). For large $n$, we have $d_{n+1}'(\theta_n) = O(\sqrt{\log n}/n)$ if $\theta_n \in U$. 
\end{condition}

Condition \ref{lem_relaxelocalbound} holds a.s. in our setting. To verify, combine the proof of Lemma \ref{lem_hbound}(ii) and the derivation in Step 1. This yields $h_{n+1}(\theta_n)-h(\tilde{m}_n)= O(1/n)$ for $n$ large and $\theta_n \in U$. Substituting this into the definitions of $F_{n+1}$ and $F$ gives $F_{n+1}(\theta_n)-F(\theta_n)= O(1/n)$. Then 
\[d_{n+1}'(\theta_n)=F_{n+1}(\theta_n)-F(\theta_n)+O\left(\frac{\sqrt{\log n}}{n}\right) = O\left(\frac{\sqrt{\log n}}{n}\right).\]

\vspace{-10pt}
\paragraph{Step 3.} Since Conditions \ref{lem_stepsize}-\ref{lem_variance} are prior-independent, and  Conditions \ref{lem_mbound} and \ref{lem_fdifference} remain satisfied, applying the limit set theorem (Theorem 5.7 of B99) establishes a.s. convergence to the steady states of $\Phi$, as previously.

To establish zero-probability convergence to any saddle point $\hat{\theta}$, let $U$ be the neighborhood of $\hat{\theta}$ in Lemma \ref{lem_hbound}(ii). MY23's proof of their Proposition 5 uses Condition \ref{lem_localbound}, i.e., $F_{n+1}(\theta)-F(\theta)= O(1/n)$ across $\theta \in U$ for large $n$, on p.15 of their online appendix. Replacing their approximation error with $d'_{n+1} = O(\sqrt{\log n}/n)$, leads to the following change in MY23's proof: In Condition (32) of MY23, the term $O(S_n/n^2)$ on the right-hand side becomes $O(S_n \sqrt{\log n}/n^2)$. Everything else --- in particular Condition (33) of MY23 --- remains unchanged. Substituting Conditions (32) and (33) into p.709 of P90 and using the fact that $\log n$ is subpolynomial, we obtain the result on p.709 of P90, or the first display on p.16 of the online appendix of MY23.\footnote{In essense, conditions (22) and (23) of MY23 (after updating the approximation error) can be expressed as: $A_n \geq \frac{k_1 S_n^2}{n+1}+O(\frac{ S_n\sqrt{\log n}}{n^2})$ and $B_n \geq \frac{k_2}{n^2}$, where $S_n$ and the constants $k_1, k_2$ are all positive. Based on these, we  wish to demonstrate that $A_n+B_n \geq \frac{const}{n^2}$ for some positive constant. 

We modify the argument on p.709 of P90 using the fact that $\log n$ is subpolynomial. Specifically, fix any $\gamma \in (0,1)$. If $S_n>\frac{c}{n^{\gamma}}$, then the lower bound for $A_n$ is dominated by the term $\frac{k_1 S_n^2}{n+1}>0$, hence  $A_n+B_n\geq \frac{const}{n^2}$.  If $S_n \leq \frac{c}{n^{\gamma}}$, then the lower bound for $A_n$ may be dominated by the term $O(\frac{S_n\sqrt{\log n}}{n^2})$. In that case, $O(\frac{S_n\sqrt{\log n}}{n^2})=o(1/n^2)\ll B_n$, hence $A_n+B_n \geq \frac{const}{ n^2}$ remains valid.  }  Everything else remains unchanged.

\vspace{-10pt}
\paragraph{Step 4.} We demonstrate that for any given sink $\hat{\theta}$, convergence occurs with probability arbitrarily close to one under sufficiently concentrated priors. Again, we apply Theorem 7.3 of B99 (the ``attractor theorem''), which states that for any open set $U$ with compact closure $\bar U$ contained in $B(\hat{\theta})$, there exist numbers $\delta(U), T(U)>0$ depending on $U$ (these are derived in Lemma 6.8 of B99) such that
\[\mathbb{P}(L(\theta)\subseteq \{\hat{\theta}\}) \geq \left[ 1-w(t,\delta(U), T(U))\right]  \mathbb{P}\left(\exists s \geq  t: \theta(s) \in U\right).\]
As derived earlier, 
\[w(t,\delta, T)=C\exp\left(-\frac{(\delta/C(T)-\bar\gamma(t))^2}{C' T \bar\gamma(t)}\right),\]
where  $C$ and $C'$ depend only on $\dim(M)=2$ and $\bar\Gamma$, whereas $C(T)$ depends only on $T$ and $F$. The number $t$ on the right-hand side is taken to be sufficiently large so that $w(t, \delta(U), T(U))<1$. Now set $U=N_{\epsilon}(\hat{\theta})$, and let $\delta(U)$ and $T(U)$ be determined accordingly. As $\epsilon\rightarrow 0$ and the prior concentrates increasingly around $\theta$, the lower bound for $\mathbb{P}(L(\theta)\subseteq \{\hat{\theta}\})$ tends to $1$. \qed

\subsection{Proof of Theorem \ref{thm_learning_aa}}\label{sec_learning_proof_aa}
We prove the theorem under the uniform prior; the extension to general priors is analogous to Appendix \ref{sec_learning_withprior} and is therefore omitted.  

Under the uniform prior, posterior beliefs about $\beta_j$ at the end of period $n$ follow a truncated normal distribution $\tilde{N}(m_{j,n}, v_{j,n})$, where $m_{j,n}$ and $v_{j,n}$ are given in \eqref{eqn_mtgeneral} and \eqref{eqn_vtgeneral}, now indexed by $j$. Beliefs are independent across $j=1,\dots,J$. Define $\bm{\theta}_n = (\theta_{j,n})_{j=1}^J$ as the period-$n$ state, where $\theta_{j,n} \coloneqq (m_{j,n}, \xi_{j,n})$. The optimal assessment in period $n+1$, $h_{n+1}(\bm{\theta}_{n})$, solves \eqref{eqn_eproblem_learning_aa} and is abbreviated $h_{n+1}$. The Fisher information of $X_{j,n}$ as a signal of $\beta_j$ is $I_j(h_n)=g^2_{2,j}(h_n)h_n$.  

The evolution of $\bm{\theta}_n$ is governed by 
\begin{equation}\label{eqn_de_aa}
\bm{\theta}_{n+1}-\bm{\theta}_n=\frac{1}{n+1}\Bigl(F_{n+1}(\bm{\theta}_n)+ b_{n+1}(\bm{\theta}_n)\odot \bm{\epsilon}_{n+1}\Bigr), 
\end{equation}
where
\[
\renewcommand{\arraystretch}{1.1}
F_{n+1}(\bm{\theta}) \coloneqq
\begin{bmatrix}
\dfrac{(I_1 \circ h_{n+1})(\bm{\theta})\big[(\beta_1 \circ h_{n+1})(\bm{\theta}) - m_1\big]}
      {\tfrac{n}{n+1}\xi_1 + \tfrac{1}{n+1}(I_1 \circ h_{n+1})(\bm{\theta})}
\\
\vdots
\\
\dfrac{(I_J \circ h_{n+1})(\bm{\theta})\big[(\beta_J \circ h_{n+1})(\bm{\theta}) - m_J\big]}
      {\tfrac{n}{n+1}\xi_J + \tfrac{1}{n+1}(I_J \circ h_{n+1})(\bm{\theta})}
\\[4pt]
(I_1 \circ h_{n+1})(\bm{\theta}) - \xi_1
\\
\vdots
\\
(I_J \circ h_{n+1})(\bm{\theta}) - \xi_J
\end{bmatrix},
\]

\[
\renewcommand{\arraystretch}{1.1}
\begin{aligned}
b_{n+1}(\bm{\theta}) \coloneqq
\begin{bmatrix}
\dfrac{\sqrt{(I_1 \circ h_{n+1})(\bm{\theta})}}
      {\tfrac{n}{n+1}\xi_1 + \tfrac{1}{n+1}(I_1 \circ h_{n+1})(\bm{\theta})}
\\
\vdots
\\
\dfrac{\sqrt{(I_J \circ h_{n+1})(\bm{\theta})}}
      {\tfrac{n}{n+1}\xi_J + \tfrac{1}{n+1}(I_J \circ h_{n+1})(\bm{\theta})}
\\
\bm{0}_J
\end{bmatrix},
\qquad
\bm{\epsilon}_{n+1} \coloneqq
\begin{bmatrix}
\epsilon_{1,n+1}
\\
\vdots
\\
\epsilon_{J,n+1}
\\
\bm{0}_J
\end{bmatrix},
\end{aligned}
\]
and $\odot$ denotes the element-wise product. Derivations are analogous to those in Appendix \ref{sec_learning_withoutprior}.

For large $n$, $\bm{\theta}_n$ becomes almost degenerate at $\tilde{\bm{m}}_n= (\tilde{m}_{j,n})_{j=1}^J$. The optimal assessment given $\delta_{\tilde{\bm{m}}_n}$ is $h(\tilde{\bm{m}}_n)$, where $h(\cdot)$ solves \eqref{eqn_eproblem_aa}. In \eqref{eqn_de_aa}, dropping $\bm{\epsilon}_n$, replacing $h_{n+1}(\bm{\theta}_n)$ with $h(\tilde{\bm{m}}_n)$, and letting $n \rightarrow \infty$ yields the limiting system of ODEs
\begin{equation}\label{eqn_ode_aa}
\dot{\bm{\theta}}=F(\bm{\theta}),
\end{equation}
where 
\[
F(\bm{\theta}) \coloneqq  
\begin{bmatrix}  
\dfrac{(I_1 \circ h)(\tilde{\bm{m}}) \; \left[(\beta_1 \circ h)(\tilde{\bm{m}}) - m_1\right]}{\xi_1}  
\\[0.6em]  
\vdots  
\\[0.6em]  
\dfrac{(I_J \circ h)(\tilde{\bm{m}}) \; \left[(\beta_J \circ h)(\tilde{\bm{m}}) - m_J\right]}{\xi_J}  
\\[1.0em]  
(I_1 \circ h)(\tilde{\bm{m}}) - \xi_1  
\\[0.6em]  
\vdots  
\\[0.6em]  
(I_J \circ h)(\tilde{\bm{m}}) - \xi_J 
\end{bmatrix}.
\]

When ability misspecifications are small, $F$ admits a unique steady state $\hat{\bm{\theta}}= (\hat{\bm{\beta}}, \hat{\bm{\xi}})$, where $\hat{\bm{\beta}}$ was solved in the proof of Theorem \ref{thm_colorsighted} and was shown to satisfy $\hat{\bm{\beta}} \approx \bm{\beta}^*$. Thus $\widetilde{\hat{\bm{\beta}}}=\hat{\bm{\beta}}$. Linearizing $F$ around $\hat{\bm{\theta}}$ then yields 
\[
\left.\nabla F\right\vert_{\bm{\theta}=\hat{\bm{\theta}}}
=-Id_{2J} + 
\left.\begin{bmatrix}
\begin{array}{c|c}
\nabla_{\bm{m}}\bm{\psi} &  \bm{0}_{J \times J}\\ \hline 
(\tfrac{d}{dh}\bm{I}) (\nabla_{\bm{m}} h)^{\top} & \bm{0}_{J \times J}
\end{array}
\end{bmatrix}\right\vert_{\bm{\theta}=\hat{\bm{\theta}}}.
\]
The spectrum of this Jacobian  consists of the eigenvalue $-1$ (with multiplicity $J$) and the eigenvalues of the matrix $-Id_{J} + \left.\nabla_{\bm{m}} \bm{\psi}\right \vert_{\bm{m}=\hat{\bm{\beta}}}\coloneqq \left.\nabla_{\bm{m}}\Psi\right\vert_{\bm{m}=\hat{\bm{\beta}}}$. Below we demonstrate that all eigenvalues of $\left.\nabla_{\bm{m}}\Psi\right\vert_{\bm{m}=\hat{\bm{\beta}}}$---and consequently those of $\left.\nabla F\right\vert_{\bm{\theta}=\hat{\bm{\theta}}}$---have negative real parts. It follows that $\hat{\bm{\theta}}$ is a sink and therefore a point attractor. By the attractor theorem, the sequence $\{\bm{\theta}_n\}$ converges to $\hat{\bm{\theta}}$ with positive probability, and with probability arbitrarily close to one given a prior sufficiently concentrated around $\hat{\bm{\theta}}$. The details mirror those in Step 3 of Appendix \ref{sec_learning_proof_withoutprior} and Step 4 of Appendix \ref{sec_learning_proof_withprior}.

We  bound the eigenvalues of $\left.\nabla_{\bm{m}}\Psi\right\vert_{\bm{m}=\hat{\bm{\beta}}}$. The following matrix algebra results are useful:

\begin{fact}[\citealp{bauer1960norms}]\label{fact_bauerfike}
Let $A$ be a $J\times J$ diagonalizable matrix with decomposition $A=V\Lambda V^{-1}$, where $V$ is the matrix of eigenvectors and $\Lambda$ is the diagonal matrix of eigenvalues. For any $J \times J$ matrix $E$, every eigenvalue $\mu$ of $A+E$ satisfies:
\[|\mu-\lambda| \leq \|V\| \cdot \|V^{-1}\|\cdot \|E\|,\]
where $\lambda$ is some eigenvalue of $A$, and $\|\cdot\|$ is any matrix norm induced by a vector norm.
\end{fact}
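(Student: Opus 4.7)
The plan is to give the classical proof, which proceeds by reducing the claim to a bound on the resolvent of $A$ at the point $\mu$, and then diagonalizing. First I would dispose of the trivial case: if $\mu$ coincides with some eigenvalue $\lambda_k$ of $A$, then $|\mu-\lambda_k|=0$ and the inequality holds automatically. So throughout the remainder of the argument, assume $\mu\notin\{\lambda_1,\dots,\lambda_J\}$, which makes $\mu I-A$ invertible.

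Next, I would extract a norm inequality from the eigenvector equation. Let $x\neq 0$ satisfy $(A+E)x=\mu x$, so that $Ex=(\mu I-A)x$. Since $\mu I-A$ is invertible, $x=(\mu I-A)^{-1}Ex$. Taking the vector norm $\|\cdot\|$ on both sides and applying the compatibility of the induced matrix norm gives
\[
\|x\|\;\leq\;\|(\mu I-A)^{-1}\|\cdot\|E\|\cdot\|x\|.
\]
Dividing by $\|x\|>0$ yields the fundamental inequality $1\leq\|(\mu I-A)^{-1}\|\cdot\|E\|$, and the entire remaining task is to bound the resolvent norm $\|(\mu I-A)^{-1}\|$ in terms of the spectrum of $A$ and the conditioning of $V$.

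Here the diagonalization of $A$ enters directly: since $A=V\Lambda V^{-1}$, we have $\mu I-A=V(\mu I-\Lambda)V^{-1}$ and hence $(\mu I-A)^{-1}=V(\mu I-\Lambda)^{-1}V^{-1}$. By submultiplicativity,
\[
\|(\mu I-A)^{-1}\|\;\leq\;\|V\|\cdot\|(\mu I-\Lambda)^{-1}\|\cdot\|V^{-1}\|.
\]
The matrix $(\mu I-\Lambda)^{-1}$ is diagonal with entries $(\mu-\lambda_i)^{-1}$, and for any matrix norm induced by a vector norm that treats coordinates symmetrically (in particular, for the $\ell^p$-induced norms), the norm of a diagonal matrix equals the maximum modulus of its diagonal entries. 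Thus $\|(\mu I-\Lambda)^{-1}\|=1/\min_i|\mu-\lambda_i|$. Substituting back into the fundamental inequality and rearranging yields $\min_i|\mu-\lambda_i|\leq\|V\|\cdot\|V^{-1}\|\cdot\|E\|$, which is precisely the claim with $\lambda$ chosen as a nearest eigenvalue of $A$ to $\mu$.

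The only delicate step is the identity $\|\mathrm{diag}(d_1,\dots,d_J)\|=\max_i|d_i|$ for the induced matrix norm; for the vector norms relevant to the paper's application (the Euclidean norm, used in the Jacobian spectral analysis of Appendix \ref{sec_learning_aa}) this holds because the $\ell^2$-induced norm of a diagonal matrix is its largest singular value, namely $\max_i|d_i|$. For generality across all vector-norm-induced matrix norms one would either restrict to absolute/monotone norms or supply an explicit bound $\|(\mu I-\Lambda)^{-1}\|\leq C/\min_i|\mu-\lambda_i|$ for a norm-dependent constant $C$; in the application here this subtlety does not arise, so the clean form of the inequality suffices.
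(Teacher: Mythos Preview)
The paper does not prove this statement: it is stated as a cited fact from \cite{bauer1960norms} and then applied directly to bound the eigenvalues of $\nabla_{\bm{m}}\Psi$. Your argument is the standard proof of the Bauer--Fike theorem and is correct, including your honest flagging of the one genuine subtlety---that $\|\mathrm{diag}(d_1,\dots,d_J)\|=\max_i|d_i|$ requires the underlying vector norm to be absolute (equivalently, monotone), which holds for all $\ell^p$ norms and in particular for the spectral norm used in the paper's application.
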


\begin{fact}\label{fact_singularvalue}
The spectral norm (largest singular value) of a rank-one matrix $u v^{\top}$ is $\|u\|_2 \cdot \|v\|_2$.
\end{fact}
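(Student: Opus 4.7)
The plan is to prove the claim directly from the variational characterization of the spectral norm, $\|A\|_2 = \sup_{\|x\|_2 = 1} \|Ax\|_2$. Setting $A = uv^\top$, observe that for any $x \in \mathbb{R}^J$ we have $Ax = u(v^\top x) = (v^\top x)\,u$, so $\|Ax\|_2 = |v^\top x| \cdot \|u\|_2$. Everything then reduces to controlling $|v^\top x|$ over unit vectors $x$.

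For the upper bound, the Cauchy--Schwarz inequality yields $|v^\top x| \leq \|v\|_2 \|x\|_2 = \|v\|_2$ whenever $\|x\|_2 = 1$. Hence $\|Ax\|_2 \leq \|u\|_2 \|v\|_2$ for every unit $x$, so $\|uv^\top\|_2 \leq \|u\|_2 \|v\|_2$. For the matching lower bound, I would first dispose of the degenerate case: if $u = 0$ or $v = 0$, then $uv^\top$ is the zero matrix and both sides of the claimed identity vanish. Otherwise, the explicit choice $x^\star = v/\|v\|_2$ is a unit vector with $v^\top x^\star = \|v\|_2$, giving $\|Ax^\star\|_2 = \|u\|_2 \|v\|_2$ and showing the bound is attained.

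As a cross-check, I would also sketch the second standard route via $\|A\|_2 = \sqrt{\lambda_{\max}(A^\top A)}$. A short computation gives $A^\top A = v(u^\top u)v^\top = \|u\|_2^2\, v v^\top$, and the rank-one symmetric PSD matrix $vv^\top$ has largest eigenvalue $\|v\|_2^2$ (with eigenvector $v$). Thus $\lambda_{\max}(A^\top A) = \|u\|_2^2 \|v\|_2^2$, and taking square roots recovers $\|uv^\top\|_2 = \|u\|_2 \|v\|_2$.

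There is no genuine obstacle: this is a two-line textbook fact, and the only item requiring care is the trivial case $u = 0$ or $v = 0$, which must be handled separately so that the explicit maximizer $x^\star = v/\|v\|_2$ is well defined.
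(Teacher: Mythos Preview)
Your proof is correct and complete. The paper states this as a standard fact without supplying any argument, so there is nothing to compare against; your variational proof via Cauchy--Schwarz (with the $A^\top A$ cross-check) is exactly the textbook justification one would expect.
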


Applying the Bauer--Fike theorem, set $A= -Id_{J}$ (implying $\Lambda=\diag(-1,\dots, -1)$ and $V=Id_J$) and $E=\nabla_{\bm\beta}\bm{\psi}=g(\bm{\beta}; \bm{\Delta}) (\nabla_{\bm{\beta}}h)^{\top}$. Using the spectral norm $\|\cdot\|_2$, any eigenvalue $\mu$ of $\nabla_{\bm{\beta}}\Psi$ satisfies
\begin{align*}
|\mu+1|
&\;\leq\; \|V\|_2 \cdot \|V^{-1}\|_2 \cdot \|\nabla_{\bm{\beta}}\bm{\psi}\|_2 \\[6pt]
&\;\leq\; 1 \cdot 1 \cdot \|g(\bm{\beta};\bm{\Delta})\|_2 \cdot\|\nabla_{\bm{\beta}}h\|_2 \\[6pt]
&\;\leq\; \sup_{\bm{\beta}\in D}\|g(\bm{\beta};\bm{\Delta})\|_2 \cdot
            \sup_{\bm{\beta}\in D}\|\nabla_{\bm{\beta}}h\|_2,
\end{align*}
where the last term was shown to vanish as $\|\bm{\Delta}\|_2\rightarrow 0$ in Appendix \ref{sec_op}. Consequently, for small $\|\bm{\Delta}\|_2$, $\mu$ lies close to $-1$ and thus has a negative real part. \qed

\section{Selective oversight}\label{sec_oversight}
In this appendix, we apply the framework and results developed in the main text to study selective oversight in political economy.  

The agent is a politician (e.g., the President), the evaluator is an oversight authority (e.g., Congress), and the market consists of voters. The politician has innate ability $\eta$ and exerts costly effort $a$ to improve the policy outcome \(X=\eta+r(a,\beta)+\varepsilon\). The oversight authority can reduce noise in the outcome at cost $\kappa(h)$. Based on the realized outcome, voters form beliefs about the politician's ability. Electoral support depends on this belief and the expected value of $\delta r(a,\beta)$, where $\delta \in [0,1]$ captures the extent to which current policy benefits voters in the future. A standard assumption in political economy is that policy effort has only short-term impact, i.e., $\delta=0$. In such a case, $\beta$ is interpreted as the politician's aptitude for manipulating outcomes for electoral gain.  

Society—including both the oversight authority and voters—may hold stereotypical beliefs about the politician's ability, typically overestimating co-partisans and underestimating out-partisans. These misbeliefs are rooted in well-documented cognitive biases. For instance, \emph{familiarity bias} leads individuals to favor the familiar; \emph{false consensus} causes them to assume that their own views are widespread and to discredit dissenting perspectives; and \emph{in-group favoritism} fosters preferential treatment of ideological allies and hostility toward opponents. Such biases are pervasive among elites and even more so among the general public \citep{hertel2019legislative, druckman2023correcting, he2023learning}. Attempts to de-bias individuals through information or cross-cutting exposure often fail or even backfire \citep{nyhan2020facts}.  

Our analysis suggests that these biases may foster an \emph{ultimate attribution error}, whereby partisans attribute the success of co-partisan politicians to innate ability, but the success of out-partisans to manipulative effort, dismissed as insincere, opportunistic, or staged to win votes. Familiar claims by commentators include: “This is a stunt,” “They’re buying votes,” and “There are ulterior motives rather than genuine competence.”\footnote{A large literature in political science documents attribution errors in political accountability. Beyond the effect studied in this paper, ample evidence shows that voters often blame or credit politicians for factors beyond their control, such as economic cycles, natural disasters, or public health crises \citep{healy2013retrospective}.  \cite{glaeser2017fundamental} model how attribution errors can lead to democratic backsliding, e.g., reducing demand for institutional reforms and a free press, or by fostering a preference for dictatorship.} 

While this distortion may induce out-partisan politicians to exert greater effort, its implications for voter welfare are more nuanced and depend on whether effort yields long-term or only short-term effects. Under the true outcome distribution, voters' payoff net of their reward to the politician equals
$
-\Delta_{\mu} - \delta\,[R(h(\hat{\beta}),\hat{\beta})-R(h(\hat{\beta}),\beta^*)].$
When $\delta=0$, i.e., effort has no long-term effect, the second term vanishes. Voters then systematically over-reward co-partisans and under-reward out-partisans. When $\delta>0$, this prediction may reverse.  

Our analysis also suggests that out-partisan politicians face heightened scrutiny. A robust finding in the oversight literature is that Congress intensifies scrutiny of the executive branch under divided government, when the congressional majority opposes the President’s party. \cite{kriner2016investigating} document this pattern using data on nearly 13,000 congressional investigations from 1898–2014, measuring oversight intensity through the number of investigations and hearing days. Recent work by \cite{bellodi2024selective} analyzes congressional hearing transcripts using natural language processing and shows that co-partisan legislators are less probing about bureaucratic performance. Selective oversight also shapes the choice of targeted agencies: Republican-led committees often target the EPA, CFPB, and IRS, while Democratic-led committees scrutinize ICE and the DOJ’s civil rights division when a Republican is president. According to \cite{reynolds2023partisan}, this pattern has recently intensified amid rising polarization.  

Turning to comparative statics, a growing body of research on de-biasing political misperceptions has raised concerns about the cost-effectiveness and scalability of traditional interventions, such as information provision or cross-cutting exposure \citep{hartman2022interventions}. Our analysis suggests leveraging the ability–effort link to address this concern: correcting misbeliefs about a politician’s ability also reduces distortions in perceived effort. Moreover, our framework reveals new channels through which institutional factors and reforms may affect equilibrium distortions. One example is oversight budget and capacity, which influences the (shadow) cost of oversight $\kappa$ \citep{ban2023politicians}. Raising oversight budget and capacity reduces $\kappa$ and increases oversight intensity, thereby correcting distortions in perceived effort productivity and downstream outcomes in equilibrium.  

\newpage 
\begin{spacing}{.8}

\bibliographystyle{aer} 
\bibliography{misspecifiedcareerconcern.bib}
\end{spacing}

\end{document}